\documentclass[qe,nameyear,draft]{econsocart} 

\usepackage{amsmath,blkarray,bm}
\usepackage{amsfonts} 
\usepackage{amsthm} 
\usepackage{amssymb}
\usepackage{array} %
\usepackage{bbm} 
\usepackage{booktabs}
\usepackage{caption}
\usepackage{centernot}
\usepackage{colortbl}
\usepackage{etoolbox}
\usepackage{enumitem}
\usepackage[mathscr]{euscript}
\usepackage{amssymb}
\usepackage{subcaption}
\usepackage{hhline}
\usepackage{pgfplots} 
\usepackage{pdfsync} 
\usepackage{longtable}
\usepackage{tabularx}
\usepackage{rotating}
\usepackage{threeparttable}
\usepackage{mathtools}
\usepackage{multirow}
\usepackage{setspace} 
\usepackage{scalerel}
\usepackage{stmaryrd}
\usetikzlibrary{shapes.geometric, arrows,positioning}
\usetikzlibrary{arrows}
\usepackage{tikz}
\usepackage{xr} 
\usepackage{xcolor}

\usetikzlibrary{positioning}

\usepackage{lscape}
\usepackage{arydshln}
\usepackage{amssymb}
\usepackage{algorithm}
\usepackage{algpseudocode}

\makeatletter
\newcommand*{\addFileDependency}[1]{
	\typeout{(#1)}
	\@addtofilelist{#1}
	\IfFileExists{#1}{}{\typeout{No file #1.}}
}\makeatother

\newcommand*{\myexternaldocument}[1]{%
	\externaldocument{#1}%
	\addFileDependency{#1.tex}%
	\addFileDependency{#1.aux}%
}

\myexternaldocument{supplement}

\RequirePackage[colorlinks,citecolor=blue,linkcolor=blue,urlcolor=blue,pagebackref]{hyperref}

\startlocaldefs

\numberwithin{equation}{section}
\theoremstyle{plain}
\newtheorem{assumption}{Assumption} 
\newtheorem{theorem}{Theorem}
\newtheorem{lemma}{Lemma}
 
\newtheorem{corollary}{Corollary}[section]  

\theoremstyle{definition}
 
\numberwithin{assumption}{section}
\numberwithin{lemma}{section}
\numberwithin{definition}{section}

\theoremstyle{remark}
\newtheorem{remark}{Remark}

\def\argmax{\mathop{\rm arg\,max}}

\endlocaldefs

\begin{document}

	\renewcommand\thefootnote{\alph{footnote}}
	
	\begin{frontmatter}
		\title{Robust Priors in Nonlinear Panel Models with Individual and Time Effects}
		\runtitle{ } 
		\smallskip
		\begin{aug}
			Zizhong Yan\footnote{Institute for Economic and Social Research, Jinan University. Email: helloyzz@gmail.com.}, \qquad 
			Zhengyu Zhang\footnote{$^{,*}$ Corresponding author. School of Economics, Shanghai University of Finance and Economics. Email: zy.zhang@mail.shufe.edu.cn.}$^{,*}$, \qquad
			Mingli Chen\footnote{Department of Economics, University of Warwick. Email: m.chen.3@warwick.ac.uk.} \\
			Jingrong Li\footnote{College of Economics and Management, South China Agricultural University. Email: lijingrong@scau.edu.cn.}, \qquad
			Iv{\'a}n Fern{\'a}ndez-Val\footnote{Department of Economics, Boston University. Email: ivanf@bu.edu.} 
		\end{aug}

		
		\support{We are grateful to Manuel Arellano, St{\'e}phane Bonhomme, Han Hong, and Ji-Liang Shiu for helpful comments and suggestions. Yan acknowledges the National Natural Science Foundation of China (No. 72103079). Li and Yan acknowledge the  Grant for Humanities and Social Sciences Research of the Ministry of Education (No. 25YJC91000 and 25JDSZ3124). }
		
		\begin{abstract}
			We develop likelihood-based bias reduction for nonlinear panel models with additive individual and time effects. In two-way panels, integrated-likelihood corrections are attractive but challenging because the required integration is high dimensional and standard Laplace approximations may fail when the parameter dimension grows with the sample size. We propose a target-centered full-exponential Laplace--cumulant expansion that exploits the sparse higher-order derivative structure implied by additive effects, delivering a tractable approximation with a negligible remainder under large-$N,T$ asymptotics.
			The expansion motivates robust priors that yield bias reduction for both common parameters and fixed effects. We provide implementations for binary, ordered, and multinomial response models with two-way effects. For average partial effects, we show that the remaining first-order bias has a simple variance form and can be removed by a closed-form adjustment. Monte Carlo experiments and an empirical illustration show substantial bias reduction with accurate inference.
			
		\end{abstract}
		\begin{keyword}
			\kwd{Two-way fixed effects}
			\kwd{Nonlinear panel models}
			\kwd{Laplace approximation}
			\kwd{Bias reduction}
			\kwd{Robust priors}
			\kwd{Average partial effects}
		\end{keyword}
		\begin{keyword}[class=JEL] 
			\kwd{C13}
			\kwd{C23}
		\end{keyword}

	\end{frontmatter}
	
	
	\newpage
	\setcounter{footnote}{0}
	\renewcommand\thefootnote{\arabic{footnote}}

\section{Introduction}\label{sec:intro}
Nonlinear panel models with additive unobserved individual and time effects are widely used to control for persistent heterogeneity and aggregate shocks.
Yet in nonlinear models, fixed effects estimation that treats these unobserved effects as nuisance parameters to be estimated suffers from the incidental parameters problem \citep{neyman1948consistent}. 
Estimating $N+T$ nuisance parameters from $NT$ observations produces a first-order bias of order $\max\{N^{-1},T^{-1}\}$ in the other common model parameters, and this bias propagates to structural functions such as average partial effects (APEs).

Likelihood-based bias corrections modify the criterion function to produce first-order debiased estimators of model parameters.  
Among them, the \textit{integrated likelihood} approach, which averages the likelihood over fixed effects using a weight function (or prior), is particularly attractive.
Because many structural quantities, including APEs, are themselves defined through integration, debiasing the model parameters carries over naturally into bias reduction for partial effects within the same integrated-likelihood framework.
Moreover, by an appropriate choice of the prior, the same device can also absorb the higher-order bias from estimating the fixed effects, which makes the subsequent APE correction especially transparent.
In one-way panels, this approach is tractable because the likelihood separates at the unit level, allowing stable evaluation and debiasing \citep{arellano2009robust,pakel2019bias}. 
With two-way effects, however, separability is lost and the integration becomes $(N+T)$-dimensional yielding the classical Laplace expansions invalid. 
As a result, despite its conceptual simplicity, a rigorous integrated-likelihood debiasing method for two-way panels has been missing.

This paper develops such a framework under the large-$N$, large-$T$ asymptotic scheme of \citet{fernandez2016individual}.
Our starting point is the maximum integrated-likelihood estimator, obtained by integrating out the $(N+T)$ fixed effects. 
Since direct evaluation of this integral is infeasible in nonlinear two-way panels, we implement the correction through a bias-reducing adjustment to the objective function, which can be interpreted as a \textit{robust prior} for the integrated likelihood. 
Our analysis is entirely frequentist and characterizes the sampling behavior of the resulting estimators.

Justifying this strategy requires a new asymptotic theory. 
With two-way fixed effects, the integrated likelihood involves a high-dimensional integration, and standard Laplace expansions for posterior moments can fail when the parameter dimension grows with the sample size \citep{shun_mccullagh_1995}. 
We resolve this difficulty by developing a \emph{target-centered full-exponential} Laplace approximation that expands directly around the true parameter values, rather than the sample modes, and isolates the leading bias created  by integrating out the fixed effects.
This approximation turns bias reduction into a constructive problem:   choose a prior whose gradient offsets the leading score bias  induced by integrating out the fixed effects.

To validate the approximation, we develop a \emph{cumulant-based expansion} that exploits sparsity in additive two-way models. 
For each observation, only few third- and higher-order log-likelihood derivative terms are nonzero. 
This sparsity limits the tensor contractions that drive the high-order remainder even as the fixed-effects dimension diverges.
The resulting expansion yields asymptotic linear representations for posterior means of the full model parameter vector and for the associated APE estimators.

These representations have two practical implications. 
First, they guide the construction of  bias-reducing priors for both model parameters and fixed effects.
In particular, the prior removes the leading term of the bias for the model parameters and fixed effects. 
A key message is that debiasing  the fixed effects is essential for debiasing APEs.
Second, the same expansion characterizes APE estimators under the same priors. Our results show that, even after applying the bias-reducing priors, APE estimators generally retain an additional first-order component driven by fixed-effect estimation uncertainty. We show that this remaining bias has a simple variance form and provide a closed-form correction.

We also clarify the conceptual relationship between integrated and joint likelihood corrections. It is often assumed that priors for the integrated likelihood and penalties for the joint likelihood are interchangeable, since both can be interpreted as Bayesian priors. We show that integration introduces an additional curvature term, so the bias-reducing prior for the integrated likelihood generally differs from the penalty that debiases the joint likelihood.\footnote{A notable exception arises in dynamic AR($p$) panels: we find the same closed-form object delivers both the integrated-likelihood robust prior and the joint-likelihood penalty.} We provide the corresponding joint-likelihood penalties as a companion result.

Finally, we derive model-specific priors and provide a \emph{Guidance for Practitioners} for workhorse specifications. While the generic robust priors work broadly, exploiting specific model priors yields improvements in finite-sample performance.
With strictly exogenous regressors, exponential-family panels (e.g., binary logit and Poisson) admit closed-form, \textit{non-data-dependent} priors that are particularly stable in finite samples.
When regressors are predetermined, as in dynamic specifications with lagged outcomes, we propose a dedicated prior that improves small-$N$ performance by absorbing the cross-sectional component of the leading bias. When $T$ is also small, a simple add-on adjustment addresses the remaining cross-time component.
For dynamic AR($m$) panels, we derive an explicit fixed-$T$ consistent prior based on the roots of the AR polynomial, and we show that the same likelihood-based correction carries over to general AR($m$) specifications. 
For probit panels, comparable closed-form simplifications are unavailable, but the generic priors remain valid. 
Beyond these cases, the approach extends directly to richer likelihoods: we implement it for multinomial and ordinal response panels with two-way fixed effects and, to our knowledge, provide the first debiasing results for these models in the presence of two-way effects.\footnote{
Recently, \citet{honore2025dynamic} develop fixed-$T$ identification and GMM estimation for a dynamic ordered logit with individual fixed effects. Other existing work adopt random effects \citep{albarran2019estimation} or focus on large-$(N,T)$ debiasing in models with one-way effects  \citep{carro2014state,fernandez2017evaluating}. Our focus is large-$(N,T)$ debiasing under two-way effects. 
}

On the computation side, we propose a scalable Metropolis--Hastings algorithm with random blocking and self-adaptive proposals. This algorithm efficiently handles the high-dimensional parameter space, making integrated-likelihood inference practical for real-world datasets. We provide a Python package, \texttt{twowaypanel}, which covers the main model classes studied in our simulations and empirical illustration.\footnote{
    The \texttt{twowaypanel} package (source code and the replication archive) is available at \url{github.com/zizhongyan/twowaypanel}. 
    Online documentation is maintained at \url{twowaypanel.readthedocs.io}.}

\textbf{Literature review}.
This paper relates to the literature on the incidental parameters problem in nonlinear panels.
While early work emphasized short panels and focused on fixed-$T$ consistency \citep[e.g.,][]{chamberlain1980analysis,arellano1991some},
a complementary line adopts the large-$N,T$ framework in which fixed-effects estimators are consistent but display non-negligible first-order bias \citep{hahn2004jackknife,hahn2011bias,dhaene2015split,kim2016bootstrap,alvarez2022robust,higgins2024bootstrap}.
For nonlinear models with individual and time effects, our paper relies on the rigorous asymptotic framework of \citet{fernandez2016individual,chen2021nonlinear}; see also \citet{fernandez2018fixed} for a broader review. Our contribution connects to three main strands of this literature.

The first strand develops bias corrections based on the integrated likelihood  in one-way panels.
\citet{lancaster2002orthogonal} and \citet{arellano2009robust} characterize bias-reducing  priors and show how modified likelihood objectives can be interpreted as integrated likelihoods; related perspectives appear in \citet{woutersen2002robustness} and \citet{bester2005bias}.  \citet{pakel2019bias} extends robust-prior ideas to models with state dependence and \citet{schumann2021integrated} proposes a debiased method based on transforming the likelihood function.
We  deal instead with two-way panels, work with the standard likelihood, 
and develop corrections for both structural parameters and APEs.

Second, a related literature debiases the joint/profile likelihood via penalties, including the general penalized likelihood principle \citep{firth1993bias,bester2005bias}, panel-specific likelihood adjustments \citep{bester2009penalty,arellano2016likelihood}, and second-order refinements \citep{dhaene2021second}.  
Our analysis shows that, with two-way effects, the bias-reducing prior for the integrated likelihood generally differs from the penalty that debiases the joint likelihood, and we derive the latter as a companion result.

The third strand addresses models with two-way fixed effects. Existing solutions include the analytical and jackknife bias corrections of \citet{fernandez2016individual}, the differencing method of \citet{charbonneau2017multiple} for logit/Poisson models, and the recent likelihood-based corrections for dynamic heterogeneous panels by \citet{leng2025debiased}. \citet{jochmans2019likelihood} propose a modified likelihood for two-way models and illustrate an MCMC implementation. 
While their discussion does not develop a formal large-$N,T$ characterization of the resulting posterior mean, our results provide such a characterization within an integrated-likelihood framework. 
We formally show that posterior-mean bias reduction is tied to the prior associated with the integrated likelihood, whereas bias reduction for the maximized penalized joint likelihood requires a different correction term.
This distinction also underlies our bias-corrected APE procedures for two-way panels.

\textbf{Outline of the paper}.
The rest of the paper is organized as follows.
Section~\ref{sec:modelandestimators} introduces the model and the integrated likelihood framework.
Section~\ref{sec:robust_prior} develops the target-centered approximation and robust priors, and derives large-sample distributions for parameters and APEs.
Section~\ref{sec:model_specific} provides model-specific priors and practical guidance. The remaining sections present simulation evidence and an empirical illustration, followed by concluding remarks. Proofs and computational details are collected in the Appendix.

\section{Model and Estimators}\label{sec:modelandestimators}
\subsection{Model}
Let the observed data be denoted by $\left\{\left(Y_{it},X_{it}\right): i=1,\ldots,N, t=1,\ldots,T\right\}$, where $Y_{it}$ is the outcome variable and $X_{it}$ is a vector of explanatory variables. 
Observations are independent across  $i$ and weakly dependent in $t$.
The data are generated sequentially over time according to
\[
Y_{it}\mid X_i^t,\alpha,\gamma,\beta \sim f_Y\left(\cdot | X_{it};\ \beta,  \alpha_i+\gamma_t\right),
\quad
i=1,\ldots,N,\ t=1,\ldots,T,
\]
where $f_Y$ is a known probability function, $X_i^t=(X_{i1},\ldots,X_{iT})$, $\beta$ collects the finite dimensional \textit{model parameters} of interest,
and $(\alpha',\gamma')'=: \phi$ stacks individual and time effects with $\alpha=(\alpha_1,\ldots,\alpha_N)'$ and $\gamma=(\gamma_1,\ldots,\gamma_T)'$.
The regressors $X_{it}$ may be strictly exogenous or predetermined with respect to $Y_{it}$. Note that $X_{it}$ can include lags of $Y_{it}$ to allow dynamics.

We adopt the fixed effects approach for estimation, treating unobserved individual and time effects as parameters to estimate. 
The \emph{joint} log–likelihood is
\[
\mathcal L(\beta,\phi)
=\tfrac{1}{\sqrt{NT}}\left\{\textstyle\sum_{i=1}^N\sum_{t=1}^T \ln f_Y\left(Y_{it}\,\middle|\,X_{it};\ \beta, \alpha_i+\gamma_t\right)
-\frac{b}{2}\,(v'\phi)^2\right\},
\]
where $v=(\iota_N',-\iota_T')'$ and $b>0$ is an arbitrary constant. The quadratic term $-\frac{b}{2}(v'\phi)^2$ is included as a normalization device for the two-way effects. Adding a constant to all $\alpha_i$ and subtracting it from all $\gamma_t$ leaves $\alpha_i+\gamma_t$ unchanged, so $\phi$ is unidentified in one location direction. Following \citet{fernandez2016individual}, this term selects a normalized representative (e.g., $v'\phi=0$) and yields a unique maximizer in $\phi$ without affecting $\beta$.\footnote{Equivalently, one may impose $\sum_i \alpha_i=\sum_t \gamma_t$ or fix one effect such as $\alpha_1=0$ or $\gamma_1=0$).}
The factor $1/\sqrt{NT}$ is a normalization that leaves maximizers unchanged and simplifies the stochastic expansions used below.
Our main focus is on \emph{integrated} likelihood corrections. Given a prior   $p(\beta,\phi)$, the integrated log–likelihood is
\[
\mathcal L^{I}(\beta)
=\tfrac{1}{\sqrt{NT}}\ln \left(\textstyle\int \exp\!\big(\sqrt{NT}\,\mathcal L(\beta,\phi)\big)\,p(\beta,\phi) \, \mathrm{d}\phi.  \right)
\]

In parallel, one may add a penalty directly to the joint likelihood. Writing $p^J(\beta,\phi)$ for a prior (or penalty) used \emph{with} $\mathcal L(\beta,\phi)$, the penalized objective is
$\mathcal L(\beta,\phi)+\sqrt{NT}\,\log p^J(\beta,\phi).$\footnote{Section \ref{sec:subsecPenaltyforJoint} will return to the joint–likelihood route, and review the robust penalty used with the joint likelihood and clarify how it differs from the robust prior for the integrated likelihood. In Section \ref{sec:subsecGuidance}, we give model–specific penalty forms that complement the model–specific priors.}
From a probabilistic standpoint, both $p(\beta,\phi)$ and $p^J(\beta,\phi)$ can be read as priors.\footnote{
    The marginal posterior of $\beta$ under a prior $p(\beta, \phi)$ satisfies
    $ f(\beta|Y,X) \propto \int  f (Y|\beta,\phi;X) p(\beta,\phi)  \mathrm{d} \phi $,
    which is precisely the \emph{integrated} likelihood weighted by $p(\beta, \phi)$. 
    A Bayesian maximum a posteriori estimator based on the \emph{joint} likelihood uses
    $ f(\beta,\phi|Y,X) \propto f(Y|\beta,\phi;X) \times p^{J}(\beta,\phi),$
    so $p^J$ plays the role of a prior. 
    Thus both $p$ and $p^J$ enter as priors: maximizing the penalized log-likelihood is a MAP problem, while maximizing $\mathcal{L}^I(\beta)$ corresponds to integrating the prior out.
}
When the purpose is bias reduction, however, Section~\ref{sec:robust_prior} shows that the prior that removes the leading bias in the integrated likelihood generally \emph{differs} from the penalty for the joint likelihood, due to the unique structure of the panel data model. We therefore keep the notational distinction.

We define the true parameters $(\beta_0,\phi_0) $ as the solution to the conditional population problem
$
\max_{\beta,\phi} \mathbb{E}_{\phi}\left[ \mathcal{L}(\beta,\phi)\right],
$
where $\mathbb E_{\phi}$ denotes the expectation taken with respect to the distribution of the data conditional on the unobserved effects and initial conditions (including strictly exogenous regressors).

Finally, the quantities of primary economic interest are structural functions, most notably, the average partial effect (APE).  Let $\Delta(X_{it},\beta,\alpha_i+\gamma_t)$ be a per observation partial effect that depends on the fixed effects only through $\alpha_i+\gamma_t$. The sample APE at $(\beta,\phi)$ is
\[
\Delta(\beta,\phi)=\tfrac{1}{NT}\textstyle\sum_{i=1}^N\sum_{t=1}^T \Delta(X_{it},\beta,\alpha_i+\gamma_t),
\]
and the \textit{population} target APE is $\mathbb E[\Delta(\beta_0,\phi_0)]$, where the expectation $\mathbb E$ integrates over both the data and  unobserved effects, provided that the expectation exists.

\subsection{Estimators and likelihood-based corrections}
In nonlinear panel models, estimating high-dimensional fixed effects induces an incidental parameter bias of order $\max\{T^{-1},N^{-1}\}$ in model parameter estimators \citep[e.g.,][]{fernandez2016individual}. 
For any fixed $\beta$, we distinguish the \emph{sample mode} of the fixed effects
$
\widehat\phi(\beta)=\arg\max_{\phi}\ \mathcal L(\beta,\phi),
$
from the \emph{target mode}
$
\overline\phi(\beta)=\arg\max_{\phi}\ \mathbb E_{\phi}\!\left[\mathcal L(\beta,\phi)\right].
$
The concentrated and target log–likelihoods are then $\mathcal L\!\big(\beta,\widehat\phi(\beta)\big)$ and $\mathcal L\!\big(\beta,\overline\phi(\beta)\big)$, respectively. 
Under standard conditions, $\widehat\phi(\beta)$ converges to $\overline\phi(\beta)$ as $N,T\to\infty$, but its convergence rate is too slow relative to the sample size $NT$, and induces a bias of order $\max\{T^{-1},N^{-1}\}$ in $\widehat\beta=\arg\max_{\beta}\mathcal L(\beta,\widehat\phi(\beta))$.
In contrast, the hypothetical maximizer based on the infeasible target mode, $\arg\max_{\beta}\ \mathcal L\big(\beta,\overline\phi(\beta)\big)$, does not suffer from this first–order bias. These target quantities are our benchmark and guide the bias corrections that follow.

Our main focus is the integrated likelihood. The associated maximum integrated likelihood estimator (MILE) is
$\widehat\beta^{I} = \arg\max_{\beta}\ \mathcal L^I(\beta)$. 
We construct bias‐reducing priors under which $\widehat\beta^{I}$ is free of first–order bias. 
Direct evaluation of $\mathcal L^{I}(\beta)$ is infeasible with two–way effects because it requires integration over an $(N{+}T)$–dimensional $\phi$.\footnote{
    With one–way effects the likelihood factorizes by unit and the integral can be computed at the “atom” level, which admits stable quadrature  \citep{arellano2009robust,pakel2019bias}. Two–way effects couple individuals and time periods and break this factorization, so the integral is high dimensional.
} 
We therefore work with the joint posterior for $(\beta,\phi)$ induced by $p(\beta,\phi)$, and propose to use the \emph{posterior means}\footnote{
    Posterior \emph{mode} is a possible alternative. However, computing them typically requires an additional nonparametric approximation, and variance estimation is more involved.
}
\begin{align*}
    \widehat{\beta}^E
    =\tfrac{\iint \beta\ \exp\!\big(\sqrt{NT}\,\mathcal L(\beta,\phi)\big)\,p(\beta,\phi)\,d\phi\,d\beta}
    {\iint \exp\!\big(\sqrt{NT}\,\mathcal L(\beta,\phi)\big)\,p(\beta,\phi)\,d\phi\,d\beta},\qquad
    \widehat{\phi}^E
    =\tfrac{\iint \phi\ \exp\!\big(\sqrt{NT}\,\mathcal L(\beta,\phi)\big)\,p(\beta,\phi)\,d\phi\,d\beta}
    {\iint \exp\!\big(\sqrt{NT}\,\mathcal L(\beta,\phi)\big)\,p(\beta,\phi)\,d\phi\,d\beta}.
\end{align*}
Under the priors developed in Sections~\ref{sec:robust_prior}–\ref{sec:model_specific}, $\widehat{\beta}^E$ removes the leading bias without inflating asymptotic variance. 
In contrast to standard corrections that target $\beta$ only, the same priors eliminate the leading bias in $\widehat{\phi}^E$,  which is useful for bias correction of nonlinear functionals such as APEs. 

Under the integrated–likelihood setup, we  consider two APE estimators.
The first estimator is the \emph{posterior mean APE}\footnote{The estimator $\widehat\Delta^E$ is referred to as the Bayesian fixed effect APE  in \cite{arellano2009robust}. Related posterior-based average effects are studied by \citet{bonhomme2022posterior}, which has a different focus on posterior conditioning and distributional robustness in the random effects setting.}
$$
\widehat\Delta^E
=\tfrac{\iint \Delta(\beta,\phi)\ \exp\!\big(\sqrt{NT}\ \mathcal L(\beta,\phi)\big)\,p(\beta,\phi)\,d\phi\,d\beta}
{\iint \exp\!\big(\sqrt{NT}\ \mathcal L(\beta,\phi)\big)\,p(\beta,\phi)\,d\phi\,d\beta}.
$$
The second is the \emph{posterior plug–in APE}, $\Delta(\widehat\beta^{E},\widehat\phi^{E})$, which evaluates the APE at the posterior means $(\widehat\beta^{E},\widehat\phi^{E})$. Section~\ref{sec:robust_prior} shows that, once \textit{both} leading biases in estimating $(\beta,\phi)$ are removed via the integrated-likelihood based correction, the remaining APE bias has a simple variance–type form that can be subtracted in closed form. 

We approximate the posterior integrals by Markov chain Monte Carlo (MCMC). 
Appendix~\ref{app:mcmc} describes a scalable random-blocking, self-adaptive Metropolis--Hastings algorithm for the high-dimensional two-way setting.

For completeness we also consider penalized joint likelihood. The maximum penalized likelihood estimator is $(\widehat{\beta}^P, \widehat{\phi}^P) = \argmax_{\beta,\phi}\big(\mathcal L(\beta,\phi) + \sqrt{NT}\ln p^J(\beta,\phi)\big).$ One can analogously report the plug–in APE estimator $\Delta(\widehat\beta^{P},\widehat\phi^{P})$. 
Section \ref{sec:model_specific} will discuss model‐specific penalties that parallel our bias-reducing priors.

\subsection{Notation and Assumptions}\label{subsec:notation}
We write
$\mathcal S(\beta,\phi)=\partial_{\phi}\mathcal L(\beta,\phi)$ for the score in the fixed–effects direction and
$\mathcal H(\beta,\phi)= - \partial_{\phi\phi'}\mathcal L(\beta,\phi)$ for the corresponding negative Hessian, 
where the notation $\partial_{x}f$ denotes the partial derivative of $f$ with respect to $x$, and additional subscripts denote higher-order partial derivatives.
Bars denote conditional expectations given $\phi$, e.g.,
$\partial_{\beta}\overline{\mathcal L}=\mathbb E_{\phi}[\partial_{\beta}\mathcal L]$,
and tildes denote deviations, e.g.,
$\partial_{\beta}\widetilde{\mathcal L}=\partial_{\beta}\mathcal L-\partial_{\beta}\overline{\mathcal L}$.
When no confusion arises, arguments are omitted and functions are understood to be evaluated at the truth $(\beta_0,\phi_0)$, e.g. 
$\mathcal{L}= \mathcal{L}(\beta_0,\phi_0)$. 
Additionally, let  $\pi_{it}=\alpha_i+\gamma_t$, we define the per–observation log–likelihood as 
$$
\ell_{it}(\beta,\pi_{it})=\ln f_Y\left(Y_{it}\,\middle|\,X_{it};\  \beta,  \alpha_i+\gamma_t \right).
$$

We impose the following conditions, which are identical to those in \citet{fernandez2016individual} except for the last condition of Assumption \ref{assumption:main} on the prior.\footnote{
    Assumption \ref{assumption:main}(vi) is a mild high-level requirement ensuring that the prior acts as a bias-correction term and does not alter the leading stochastic orders of the score and Hessian of the criterion function.
}
\begin{assumption}[Panel Models]\label{assumption:main}
    Denote $\pi_{it0} = \alpha_{i0} + \gamma_{t0}$. Let $\nu > 0$ and $\mu > \frac{4(8 + \nu)}{\nu}$. Let $\epsilon > 0$ and let $B_{\epsilon}^0$ be a subset of            $\mathbb{R}^{\dim \beta + 1}$ that contains an $\epsilon$-neighborhood of $(\beta_0, \pi_{it0})$ for all $i, t, N, T$.
    \begin{enumerate}[label=(\roman*)]
        \item (Asymptotics) We consider limits of sequences where $N/T \to c^2$, $0 < c < \infty$, as $N, T \to \infty$.
        
        \item (Sampling) Conditional on $\phi$, $\{(Y_i^T, X_i^T) : 1 \leq i \leq N\}$ is independent across $i$; and for each fixed $i$, $\{(Y_{it}, X_{it}) : 1 \leq t \leq T\}$ is a $\alpha$-mixing process with mixing coefficients satisfying
        $
        \sup_i a_i(m) = O(m^{-\mu})
        $
        as $m \to \infty$, where
        $
        a_i(m) := \sup_t \sup_{A \in \mathcal{A}_t^i, B \in \mathcal{B}_{t+m}^i} \big| P(A \cap B) - P(A)P(B) \big|;
        $
        and for $Z_{it} = (Y_{it}, X_{it})$, $\mathcal{A}_t^i$ is the sigma field generated by $(Z_{it}, Z_{i,t-1}, \ldots)$, and $\mathcal{B}_{t}^i$ is the sigma field generated by $(Z_{it}, Z_{i,t+1}, \ldots)$.
        
        \item (Model) For $X_i^t = \{X_{is} : s = 1, \ldots, t\}$, we assume that for all $i, t, N, T$,
        $$
        Y_{it} \mid X_{it}, \phi, \beta \sim \exp[\ell_{it}(\beta, \alpha_i + \gamma_t)].
        $$
        The realizations of the parameters and unobserved effects that generate the observed data are denoted by $\beta_0$ and $\phi_0$.
        
        \item (Smoothness and moments) We assume that $(\beta, \pi) \to \ell_{it}(\beta, \pi)$ is four times continuously differentiable over $B_{\epsilon}^0$ a.s. The partial derivatives of $\ell_{it}(\beta, \pi)$ with respect to the elements of $(\beta, \pi)$ up to fourth order are bounded in absolute value uniformly over $(\beta, \pi) \in B_{\epsilon}^0$ by a function $H(Z_{it}) > 0$ a.s., and
        $
        \max_{i,t} \mathbb{E}_{\phi}[H(Z_{it})^{8+\nu}]
        $
        is a.s. uniformly bounded over $N, T$.
        
        \item (Concavity) For all $N, T$, $(\beta, \phi) \to \mathcal{L}(\beta, \phi)$ is strictly concave over $\mathbb{R}^{\dim \beta + N + T}$ a.s. Furthermore, there exist constants $c_{\min}$ and $c_{\max}$ such that for all $(\beta, \pi) \in B_{\epsilon}^0$,
        $
        0 < c_{\min} \leq -\mathbb{E}_{\phi}[\partial_{\pi^2} \ell_{it}(\beta, \pi)] \leq c_{\max}
        $
        a.s. uniformly over $i, t, N, T$.
        \item (Prior)
        The possibly data-dependent prior density $p(\beta,\phi)$ is locally positive on $B_\epsilon^0$ and $\ln p(\beta,\phi)$ is twice continuously differentiable on $B_\epsilon^0$. Uniformly over $B_\epsilon^0$ and under a common norm, the score and Hessian contributions of $\ln p(\beta,\phi)$ are of no larger stochastic order than the corresponding derivatives of the integrated log-likelihood or the joint log-likelihood.
    \end{enumerate}
\end{assumption}
\begin{assumption}[Partial Effects]\label{assumption:ape}
    Let $\nu > 0$, $\epsilon > 0$ and $B_{\epsilon}^0$ all be as in Assumption \ref{assumption:main}.
    \begin{enumerate}[label=(\roman*)]
        \item (Sampling) for all $N$, $T$, $\{X_{it},\alpha_i,\gamma_t\}_{NT}$ is identically distributed across $i$ and/or stationary across $t$.
        
        \item (Model) for all $i$, $t$, $N$, $T$, the partial effects depend on $\alpha_i$ and $\gamma_t$ through $\alpha_i+\gamma_t$:\\ 
        $\Delta(X_{it},\beta,\alpha_i,\gamma_t)=\Delta_{it}(\beta,\alpha_i+\gamma_t)$.\\
        The realizations of the partial effects are denoted by $\Delta_{it}:=\Delta_{it}(\beta_0,\alpha_{i0}+\gamma_{t0})$.
        
        \item (Smoothness and moments) The function $(\beta,\pi) \mapsto \Delta_{it}(\beta,\pi)$  is four times continuously differentiable over $B_{\epsilon}^0$ a.s. The partial derivatives of $\Delta_{it}(\beta,\pi)$ with respect to the elements of $(\beta,\pi)$ up to fourth order are bounded in absolute value uniformly over $(\beta,\pi)\in B_{\epsilon}^0$ by a function $M(Z_{it} ) > 0$ a.s., and $\max_{i,t}\mathbb{E}[M(Z_{it})^{8+\nu}]$ is a.s. uniformly bounded over $N$, $T$.
        
        \item (Non-degeneracy and moments) $0 < \min_{i,t}[\mathbb{E}(\Delta_{it}^2)-\mathbb{E}(\Delta_{it})^2] \leq \max_{i,t}[\mathbb{E}(\Delta_{it}^2)-\mathbb{E}(\Delta_{it})^2] <\infty $, uniformly over $N$, $T$.
    \end{enumerate}
\end{assumption} 

\begin{remark}[Assumptions \ref{assumption:main} and \ref{assumption:ape}]
    \quad
    \begin{enumerate}[label=(\roman*)]
        \item Classical Laplace expansions for panels often ask for stronger smoothness. For instance, \citet{pakel2019bias} impose higher–order conditions in one–way nonlinear panels with state-dependent observations. Our target-centered full-exponential Laplace approximation stays within the conditions up to fourth order. No extra differentiability is needed. This matters for discrete-choice and count models, where stronger smoothness conditions are often hard to justify.
        \item Assumption~\ref{assumption:main}(vi) is an auxiliary regularity condition needed only for constructing the bias-reducing priors. The more explicit derivative bounds are stated in Assumption~\ref{assumption:regularity}(ix) and verified in the supplemental material.
    \end{enumerate}
\end{remark}

\section{Characterization of the Robust Priors}\label{sec:robust_prior}
This section develops a  target-centered Laplace approximation, uses it to construct bias-reducing priors for generic nonlinear panels, and derives the large-sample distributions for parameters and APEs. Section \ref{sec:model_specific} then specializes these priors to widely used models and provides practical guidance.
\subsection{Target-Centered Full-Exponential Laplace Approximations}
We analyze the integrated likelihood and develop a new Laplace approximation that is pivotal for constructing bias-reducing priors for both the model parameters and the APE. 
Our expansion is centered at the \emph{target quantities}, i.e., the true parameter values, rather than at sample modes (MLE). This choice significantly simplifies the derivations and provides a clean asymptotic framework. In contrast to classical Laplace methods \citep[e.g.,][]{tierney1986accurate,tierney1989fully} and to \cite{arellano2009robust} and \citet{pakel2019bias} in the panel data context, which expand around the MLE, we expand around the truth. No smoothness beyond the regularity in \cite{fernandez2016individual} is required.
The next lemma links the integrated and target likelihoods:

\begin{lemma}\label{lemma:LaplaceLikelihood}
    Suppose Assumption \ref{assumption:main} holds. Then, as $N,T\rightarrow{\infty}$, for any $\beta$ satisfying $\| \beta-\beta_0\|=\mathcal{O}_P\big((NT)^{-1/4}\big)$,
    \begin{eqnarray*}
        \mathcal{L}^{I}(\beta)  -
        \mathcal{ L}\left( \beta,\overline{\phi}(\beta) \right) 
        &=& C^{st} + \tfrac{1}{\sqrt{NT}}  \left[ \ln p\left(\beta, \overline{\phi}(\beta)\right) - \mathcal{D}_{\mathcal{L}^I}\left( \beta,\overline{\phi}(\beta) \right) \right] +o_P(1),
    \end{eqnarray*}
    where $C^{st}$ is a constant term\footnote{$C^{st}$ or $c^{st}$ is an $\mathcal{O}(1)$ constant whose value may change across contexts.},  and 
    \begin{align*}
        \scalebox{0.92}{
            $\mathcal{D}_{\mathcal{L}^I}\left( \beta,\overline{\phi}(\beta) \right) 
            = \frac{1}{2}\ln{\det\left( \overline{\mathcal{H}}\left( \beta,\overline{\phi}(\beta) \right) \right)} 
            - \frac{1}{2}\mathrm{tr}\Big( {\overline{\mathcal{H}}}^{- 1}\left( \beta,\overline{\phi}(\beta) \right)\sqrt{NT}\mathcal{S}\left( \beta,\overline{\phi}(\beta) \right)\left\lbrack \mathcal{S}\left( \beta,\overline{\phi}(\beta) \right) \right\rbrack^{'} \Big).$				} 
    \end{align*}
\end{lemma}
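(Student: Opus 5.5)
Fix $\beta$ with $\|\beta-\beta_0\|=\mathcal O_P((NT)^{-1/4})$ and write $\overline\phi=\overline\phi(\beta)$. The plan is a Laplace expansion of the $\phi$-integral centred at the target mode $\overline\phi$, not at the sample mode $\widehat\phi(\beta)$. Put $\phi=\overline\phi+h/(NT)^{1/4}$ and Taylor expand $\sqrt{NT}\,\mathcal L(\beta,\phi)$ to fourth order in $h$:
\[
\sqrt{NT}\,[\mathcal L(\beta,\phi)-\mathcal L(\beta,\overline\phi)]=(NT)^{1/4}\sqrt{NT}\,\mathcal S' h/\sqrt{NT}\cdot\sqrt{NT}^{\,0}\!-\tfrac12 h'\mathcal H h+R_3(h)+R_4(h),
\]
with $\mathcal S=\mathcal S(\beta,\overline\phi)$ and $\mathcal H=\mathcal H(\beta,\overline\phi)$. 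Equivalently, in the original scale, the exponent is $\sqrt{NT}\,[\mathcal S'(\phi-\overline\phi)-\tfrac12(\phi-\overline\phi)'\mathcal H(\phi-\overline\phi)]$ plus cubic and quartic remainders. The prior is expanded likewise: $\ln p(\beta,\phi)=\ln p(\beta,\overline\phi)+O(\|\phi-\overline\phi\|)$. By Assumption~\ref{assumption:main}(vi), the linear prior term has a negligible effect after division by $\sqrt{NT}$, up to $o_P(1)$.

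Next, split $\mathcal H=\overline{\mathcal H}+\widetilde{\mathcal H}$. I will use the bounds from \citet{fernandez2016individual}: $\|\overline{\mathcal H}^{-1}\|_\infty=O_P(1)$, $\|\widetilde{\mathcal H}\|=o_P((NT)^{-1/8})$, and $\|\mathcal S\|=O_P((NT)^{-1/4})$ in the appropriate norm. Replace $\mathcal H$ by $\overline{\mathcal H}$ in the quadratic term and complete the square, which gives a Gaussian integral:
\[
\int \exp\!\Big(\sqrt{NT}\big[\mathcal S'u-\tfrac12 u'\overline{\mathcal H}u\big]\Big)du
=(2\pi)^{(N+T)/2}\,\det\!\big(\sqrt{NT}\,\overline{\mathcal H}\big)^{-1/2}\exp\!\Big(\tfrac{\sqrt{NT}}2\mathcal S'\overline{\mathcal H}^{-1}\mathcal S\Big).
\]
Taking logs and dividing by $\sqrt{NT}$ produces three pieces:
\begin{itemize}
\item the constant $C^{st}$;
\item the term $-\tfrac1{2\sqrt{NT}}\ln\det\overline{\mathcal H}$;
\item the term $\tfrac12\mathcal S'\overline{\mathcal H}^{-1}\mathcal S$.
\end{itemize}
The last term can be written as $\tfrac{1}{2\sqrt{NT}}\mathrm{tr}(\overline{\mathcal H}^{-1}\sqrt{NT}\,\mathcal S\mathcal S')$, which is exactly $-\mathcal D_{\mathcal L^I}/\sqrt{NT}$ up to sign. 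Under the integral, the normalization penalty $-\tfrac b2(v'\phi)^2$ makes $\overline{\mathcal H}$ positive definite in the unidentified direction $v$, so the Gaussian integral is proper. The region $\|\phi-\overline\phi\|>\epsilon$ is handled by strict concavity (Assumption~\ref{assumption:main}(v)): the integrand there is exponentially small relative to the peak, so it contributes $o_P(1)$ after taking logs and dividing by $\sqrt{NT}$.

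The remaining task is the full-exponential correction. Write the log of the integral as the Gaussian term plus $\ln E_G[\exp(\sqrt{NT}(R_3+R_4)-\tfrac{\sqrt{NT}}2 u'\widetilde{\mathcal H}u)]$, where $E_G$ is the expectation under the Gaussian with mean $\overline{\mathcal H}^{-1}\mathcal S$ and covariance $(\sqrt{NT}\,\overline{\mathcal H})^{-1}$. I would bound this log-expectation through its cumulant expansion. The key is sparsity: each third or fourth derivative of $\sum_{it}\ell_{it}$ in $\phi$ is nonzero only on index triples or quadruples of the form $(\alpha_i,\gamma_t)$. Hence the contractions $\sum\partial^3\mathcal L\,\Sigma\Sigma\Sigma$ with $\Sigma=O((NT)^{-1/2})$-scale entries involve only $O(NT)$ nonzero terms. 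Combined with $\|\overline{\mathcal H}^{-1}-\mathrm{diag}\|_{\max}=O_P((NT)^{-1/2})$, this shows that the cumulants of order $\ge 2$ of the remainder, and its mean, are $O_P(N+T)$ in the log, hence $O_P((N+T)/\sqrt{NT})=O_P(1)$.

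\textbf{Main obstacle.} The estimate just obtained is only $O_P(1)$, whereas the lemma requires the correction to be absorbed into $C^{st}+o_P(1)$. So the hardest step is to show that the leading $O(1)$ part of these cumulants is nonrandom and independent of $\beta$ up to $o_P(1)$. The parts that depend on $\mathcal S$ or $\widetilde{\mathcal H}$ must be of smaller order. For the $\widetilde{\mathcal H}$ part I would use its mean-zero structure and $\mathrm{tr}(\overline{\mathcal H}^{-1}\widetilde{\mathcal H})/\sqrt{NT}=o_P(1)$. For the $\beta$-dependence I would use the $(NT)^{-1/4}$ neighbourhood together with smoothness, which gives a variation of $O_P((NT)^{-1/4})\cdot O_P(1)$.
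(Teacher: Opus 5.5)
Your architecture is the paper's. You centre at $\overline\phi(\beta)$, keep $\overline{\mathcal H}$ in the Gaussian kernel and push $\widetilde{\mathcal H}$ into the remainder. You complete the square so that the Gaussian integral produces $C^{st}$ and $-\mathcal D_{\mathcal L^I}$, and you control the log-expectation of the remainder by a cumulant expansion. The paper uses a second-order mean-value form instead of your explicit cubic and quartic terms. Its remainder is therefore constant, linear and quadratic in the centred Gaussian variable, with coefficients built from $\widetilde{\mathcal H}$, the Hessian variation $\mathcal H(\beta,\dot\phi)-\mathcal H(\beta,\overline\phi(\beta))$ at an intermediate point $\dot\phi$, and $\partial_\phi\ln p$.

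The gap is your final step. You bound the mean and cumulants of the remainder by $O_P(N+T)$, and then make it the ``main obstacle'' to show that the resulting $O_P(1)$ piece is nonrandom and free of $\beta$. That is never carried out, and as sketched it could not close the proof:
\begin{itemize}
\item a random $O_P(1)$ term cannot be absorbed into $C^{st}$;
\item smoothness over an $(NT)^{-1/4}$-ball controls variation in $\beta$, not randomness.
\end{itemize}
It is also unnecessary. The lemma only requires the log-expectation to be $o_P(\sqrt{NT})$, and your $O_P(N+T)$ is simply missing a little-$o$. The one concrete estimate you offer, $\mathrm{tr}(\overline{\mathcal H}^{-1}\widetilde{\mathcal H})/\sqrt{NT}=o_P(1)$, already says that piece is negligible rather than an $O(1)$ constant.

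The order count that replaces your obstacle is as follows. Take Gaussian covariance $\Omega=(NT)^{-1/2}\overline{\mathcal H}^{-1}$ and mean $\mu=\overline{\mathcal H}^{-1}\mathcal S$, so that $\|\Omega\|=O_P((NT)^{-1/2})$ and $\|\mu\|=O_P(1)$.
\begin{itemize}
\item The quadratic coefficient $A=-\tfrac{\sqrt{NT}}{2}[\widetilde{\mathcal H}+\mathcal H(\beta,\dot\phi)-\mathcal H(\beta,\overline\phi(\beta))]$ satisfies $\|A\|=o_P(\sqrt{NT})$. Both Hessian perturbations are $o_P(1)$ in spectral norm; you yourself quote $\|\widetilde{\mathcal H}\|=o_P((NT)^{-1/8})$.
\item The linear coefficient $b$ satisfies $\|b\|=o_P(\sqrt{NT})$, since $\|\partial_\phi\ln p\|=O_P((NT)^{1/4})$.
\item Hence $\|\Omega\|\|A\|=o_P(1)$. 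The mean of the quadratic part is at most $(N+T)\|A\|\|\Omega\|=o_P(\sqrt{NT})$, and the constant part is also $o_P(\sqrt{NT})$.
\item For $m\ge2$, Lemma~\ref{lemma:pre2}(ii) gives $|\kappa_m|\le c^{st}(N+T)(\|\Omega\|\|A\|)^m+c^{st}\|\Omega\|(\|\Omega\|\|A\|)^{m-2}\|b\|^2=o_P(\sqrt{NT})$.
\end{itemize}
If you keep the fourth-order Taylor form instead, the cubic and quartic pieces are likewise $o_P(\sqrt{NT})$, and in fact of much smaller order. This follows from the sparse contraction bounds in Assumption~\ref{assumption:regularity}(iv),(vi) and Lemma~\ref{lemma:AfterRegularity}(iv).

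A separate caution on your tail step: localize in an $\ell_q$ or $\ell_\infty$ neighbourhood, as with $\mathscr B_q(r_\phi,\phi_0)$, not in $\ell_2$. Under the Gaussian, $\|\phi-\overline\phi(\beta)\|_2$ is of order one, so a small $\ell_2$-ball carries vanishing mass. The neglected volume factors $e^{O(N+T)}$ are exactly of the order that survives division by $\sqrt{NT}$.
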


The term $\mathcal{{D}}_{\mathcal{L}^I}\left( \beta,\overline{\phi}(\beta) \right) $coincides with the leading bias in $\mathcal{L}^I(\beta)$ and 
is $ \mathcal{O}_P(\sqrt{NT})$. Hence, if
$\frac{1}{\sqrt{NT}}  \left[ \ln p\left(\beta, \overline{\phi}(\beta)\right) - \mathcal{D}_{\mathcal{L}^I}\left( \beta,\overline{\phi}(\beta) \right) \right]  = o_P(1)$, then the first-order bias in the integrated log-likelihood vanishes and the maximum integrated likelihood estimator $\widehat{\beta}^{I}$ is free of first-order bias.
In the two-way case, direct evaluation of $\mathcal{L}^{I}(\beta)$ becomes  intractable, so we work instead with the posterior means $(\widehat{\beta}^E,\widehat{\phi}^E)$. 
Building on Lemma \ref{lemma:LaplaceLikelihood}, we next develop expansions for these posterior means \emph{centered at the truth} $(\beta_0,\phi_0)$.

Validating this expansion in the two-way setting is challenging. \cite{shun_mccullagh_1995} demonstrate that standard Laplace approximations can fail unless the parameter dimension grows slower than the cube root of the sample size. This condition is violated in our context.\footnote{
    Under Assumption \ref{assumption:main}(i), the parameter dimension  grows at rate $\mathcal{O}((NT)^{1/2})$. This exceeds the rate $o((NT)^{1/3})$ required by \cite{shun_mccullagh_1995} for the convergence of remainder terms.
}
To overcome this, we propose a \textit{cumulant-based expansion} that exploits the extreme sparsity of higher-order derivatives in two-way additive models. This sparsity restricts tensor contractions and keeps remainder terms under control.
We obtain the following asymptotic representations for $(\widehat{\beta}^E,\widehat{\phi}^E)$:

\begin{theorem}\label{thm:asy_expansions}
    Suppose Assumption \ref{assumption:main} holds. 
    Define $\mathcal{B}^{\beta} = \frac{\partial_{\beta} \ln p - \partial_{\beta}  \mathcal{D}_{\mathcal{L}^I} }{\sqrt{NT}}$, and $\mathcal{B}^{\phi} = \frac{\partial_{\phi} \ln p - \partial_{\phi}  \mathcal{D}_{\mathcal{L}^I} }{\sqrt{NT}}$.
    Then, as $N,T\rightarrow{\infty}$,
    \begin{enumerate}[label=(\roman*)]
        \item 
        $\widehat{\beta}^E - \beta_0 = 
        \frac{1}{\sqrt{NT}} \overline{W}^{-1} U^{(0)} 
        + \frac{1}{\sqrt{NT}} \overline{W}^{-1}  
        \left( \mathcal{B}^{\beta}   + [\partial_{\beta\phi'}\overline{\mathcal{L}}]\overline{\mathcal{H}}^{-1} \mathcal{B}^{\phi}\right)
        + R_\beta .
        $
        \item 	$
        \widehat{\phi}^E - \phi_0 = 
        \overline{\mathcal{H}}^{-1} \left(
        \mathcal{S}  + \frac{1}{\sqrt{NT}}[\partial_{\phi\beta'}\overline{\mathcal{L}}]\overline{W}^{-1} U^{(0)}
        \right)
        +
        \overline{\mathcal{H}}^{-1}  \Big(
        \mathcal{B}^{\phi}
        + \frac{1}{\sqrt{NT}}[\partial_{\phi\beta'}\overline{\mathcal{L}}]\overline{W}^{-1}  \big( \mathcal{B}^{\beta} $  \\ 
        $ \mathrm{\qquad\qquad}  + [\partial_{\beta\phi'}\overline{\mathcal{L}}]\overline{\mathcal{H}}^{-1} \mathcal{B}^{\phi}\big)
        \Big)
        + R_\phi.
        $
    \end{enumerate}
    where $ U^{(0)} = \partial_{\beta}{\mathcal{L}}  + [\partial_{\beta\phi'}\overline{\mathcal{L}}]\overline{\mathcal{H}}^{-1}   {\mathcal{S}} $, and 
    $\overline{W} = -\frac{1}{\sqrt{NT}}\left(\partial_{\beta\beta'}\overline{\mathcal{L}}  + [\partial_{\beta\phi'}\overline{\mathcal{L}}]\overline{\mathcal{H}}^{-1}  [\partial_{\beta'\phi}\overline{\mathcal{L}}]    \right)$, and remainder terms satisfy 
    $\lVert R_\beta \rVert =o_P((NT)^{-1/2})$ and
    $\lVert R_\phi \rVert = o_P((NT)^{-1/4})$. 
\end{theorem}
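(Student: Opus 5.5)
The plan is to treat the posterior means as the maximizer of an approximate posterior log-density and to linearize its first-order conditions around the truth $(\beta_0,\phi_0)$, the same centering used in Lemma~\ref{lemma:LaplaceLikelihood}.

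\textbf{Step 1 (Laplace step).} Write $\theta=(\beta',\phi')'$. The log posterior kernel is $\sqrt{NT}\,\mathcal L(\theta)+\ln p(\theta)$. By a full-exponential Laplace argument in the spirit of \citet{tierney1989fully}, the posterior mean equals the mode of the modified criterion
\[
Q(\theta)=\mathcal L(\theta)+\tfrac{1}{\sqrt{NT}}\big[\ln p(\theta)-\mathcal D_{\mathcal L^I}(\theta)\big],
\]
up to an error $o_P((NT)^{-1/2})$ in the $\beta$-coordinates. The $-\mathcal D_{\mathcal L^I}$ term collects two contributions: the $\tfrac12\ln\det\overline{\mathcal H}$ curvature correction, and the mean shift $\mathbb E[\theta]-\text{mode}$, which comes from the third cumulant contracted with the inverse Hessian. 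The trace term in $\mathcal D_{\mathcal L^I}$ is exactly the target-centered version of that shift, with $\sqrt{NT}\,\mathcal S\mathcal S'$ in place of its expectation $\overline{\mathcal H}$.

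To justify this despite \citet{shun_mccullagh_1995}, I expand the cumulant generating function of the centered posterior, $\theta-\theta_0$, to fourth order. The third- and fourth-order derivative arrays of $\sum_{it}\ell_{it}(\beta,\alpha_i+\gamma_t)$ are supported only on index tuples of the form $(\alpha_i,\gamma_t)$ repeated, plus $\beta$. This gives $O(NT)$ nonzero entries rather than $O((N+T)^3)$. Combined with $\|\overline{\mathcal H}^{-1}-\mathrm{diag}\|_\infty=O_P(1/\sqrt{NT})$ from \citet{fernandez2016individual}, each tensor contraction in the remainder is bounded by $O_P((NT)^{-1/2}\cdot\text{polylog})$ after normalization. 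The posterior's concentration on an $(NT)^{-1/4}$ ball is obtained from strict concavity (Assumption~\ref{assumption:main}(v)) and the prior bound (vi).

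\textbf{Step 2 (stochastic expansion).} Expand the first-order conditions $\partial_\beta Q=0$ and $\partial_\phi Q=0$ around $\theta_0$, replacing Hessians by their expectations. The prior and $\mathcal D$ terms contribute $\mathcal B^\beta$ and $\mathcal B^\phi$ (divided by $\sqrt{NT}$ as defined). Solving the $\phi$ equation gives
\[
\phi-\phi_0=\overline{\mathcal H}^{-1}\Big(\mathcal S+\mathcal B^\phi+\tfrac{1}{\sqrt{NT}}[\partial_{\phi\beta'}\overline{\mathcal L}](\beta-\beta_0)\cdot\sqrt{NT}\Big)+\dots
\]
Substituting this into the $\beta$ equation profiles out $\phi$, producing the Schur complement $\overline W$ and the efficient score $U^{(0)}$. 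This yields (i). Plugging (i) back into the $\phi$ expression yields (ii).

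The remainders are controlled as in Fernández-Val and Weidner's Theorem~B.1. For $\beta$, the quadratic terms $\mathcal S'\overline{\mathcal H}^{-1}\partial_{\phi\phi\beta}\overline{\mathcal H}^{-1}\mathcal S$ and the Hessian-noise terms are $O_P(1)$, but their expectations are exactly absorbed in $\mathcal D_{\mathcal L^I}$, and their centered parts are $o_P(1)$. This leaves $R_\beta=o_P((NT)^{-1/2})$. For $\phi$, only the $(NT)^{-1/4}$ rate is required, which follows from $\|\mathcal S\|=O_P(1)$-type bounds in $q$-norms and $\|\overline{\mathcal H}^{-1}\|_q=O_P(1)$.

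\textbf{Main obstacle.} The hard part is Step 1: showing that the posterior mean equals the mode of $Q$ up to the stated order uniformly in growing dimension. I would first try to bound the fourth-cumulant remainder by splitting the tensor sums into their $\alpha\alpha\alpha$-, $\gamma\gamma\gamma$- and mixed blocks. Sparsity makes each block a sum over $(i,t)$ pairs, and these can be handled with the moment and mixing bounds of Assumption~\ref{assumption:main}(ii),(iv).
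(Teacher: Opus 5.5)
\textbf{There is a genuine gap in Step 1, and Steps 2 and 3 depend on it.}

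The posterior mean of $\theta=(\beta',\phi')'$ is not, to the required order, the mode of
\[
Q(\theta)=\mathcal L(\theta)+\bigl[\ln p(\theta)-\mathcal D_{\mathcal L^I}(\theta)\bigr]/\sqrt{NT}.
\]
The trace term of $\mathcal D_{\mathcal L^I}$ is not a mean-minus-mode shift. In Lemma~\ref{lemma:LaplaceLikelihood} it appears only because the square is completed at $\overline\phi(\beta)$ rather than at $\widehat\phi(\beta)$. Its contribution to $-\mathcal D_{\mathcal L^I}/\sqrt{NT}$ is $\tfrac12\mathcal S'\overline{\mathcal H}^{-1}\mathcal S$ evaluated at $(\beta,\overline\phi(\beta))$, which to second order equals $\mathcal L(\beta,\widehat\phi(\beta))-\mathcal L(\beta,\overline\phi(\beta))$. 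In the Tierney--Kass--Kadane argument, the mean-minus-mode shift is exactly what the gradient of $-\tfrac12\ln\det\overline{\mathcal H}$ produces; it is not a second, separate contribution.

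Now evaluate Lemma~\ref{lemma:LaplaceLikelihood}'s $\mathcal D_{\mathcal L^I}$ at a generic $\theta$ and differentiate it inside $Q$. Using $\partial_{\phi'}\mathcal S=-\mathcal H$,
\[
\partial_\phi Q(\theta_0)=\mathcal S-\mathcal H\,\overline{\mathcal H}^{-1}\mathcal S+\cdots=-\widetilde{\mathcal H}\,\overline{\mathcal H}^{-1}\mathcal S+\cdots,
\qquad
\partial_{\phi\phi'}Q\approx-\mathcal H+\mathcal H\,\overline{\mathcal H}^{-1}\mathcal H\approx 0 .
\]
Both the leading score and the curvature in the $\phi$ block cancel. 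To second order $Q(\beta,\cdot)$ is the concentrated likelihood $\mathcal L(\beta,\widehat\phi(\beta))$, flat in $\phi$ up to $\widetilde{\mathcal H}$ and third-derivative terms.

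The consequences are:
\begin{itemize}
\item The $\phi$-mode of $Q$ is not $\phi_0+\overline{\mathcal H}^{-1}\mathcal S+\cdots$, so part~(ii) cannot come out of it.
\item Your Step~2 linearizes with $-\overline{\mathcal H}$ and treats $\partial\mathcal D_{\mathcal L^I}/\sqrt{NT}$ as a small perturbation. That is inconsistent with the $Q$ of Step~1.
\item Replacing $\sqrt{NT}\mathcal S\mathcal S'$ by its expectation $\overline{\mathcal H}$ does not rescue this: the trace term becomes the constant $-\dim\phi/2$ and contributes nothing.
\end{itemize}
For the same reason, $\partial_\phi\mathcal D_{\mathcal L^I}$ in the theorem must be read as the representation \eqref{eq:biasphi}. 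The literal $\phi$-gradient of Lemma~\ref{lemma:LaplaceLikelihood}'s expression contains $\sqrt{NT}\,\mathcal H\overline{\mathcal H}^{-1}\mathcal S\approx\sqrt{NT}\,\mathcal S$.

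\textbf{How to repair the mode-based route.} What does hold is the full-exponential statement with the curvature correction only: to the needed order, the posterior mean is the mode of
\[
\widetilde Q=\mathcal L+\bigl[\ln p-\tfrac12\ln\det\overline{\mathcal H}\bigr]/\sqrt{NT}.
\]
(The $\beta$ part of the joint log-determinant is of lower order.) Expand the first-order conditions of $\widetilde Q$ at $\theta_0$ to second order. Then:
\begin{itemize}
\item The $\mathcal S$-dependent pieces of \eqref{eq:biasbeta}--\eqref{eq:biasphi} come from the remainder of the score expansion. These are $\widetilde{\mathcal H}\,\overline{\mathcal H}^{-1}\mathcal S$, $[\partial_{\beta\phi'}\widetilde{\mathcal L}]\overline{\mathcal H}^{-1}\mathcal S$, and the third-derivative contraction with $(\overline{\mathcal H}^{-1}\mathcal S)(\overline{\mathcal H}^{-1}\mathcal S)'$.
\item The gradient of $-\tfrac12\ln\det\overline{\mathcal H}$ supplies $\tfrac12\sum_{k,g}\partial_{\theta\phi_k\phi_g}\overline{\mathcal L}\,(\overline{\mathcal H}^{-1})_{kg}$.
\item Together with $\partial\ln p$, these are exactly what the paper collects into $\mathcal B^\beta$ and $\mathcal B^\phi$.
\end{itemize}
Repaired this way the route can work in principle, but two things are still owed:
\begin{itemize}
\item A growing-dimension mean-minus-mode bound at both rates: $o_P((NT)^{-1/2})$ for $\beta$ and $o_P((NT)^{-1/4})$ in $\ell_2$ for $\phi$. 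Your sketch states only the $\beta$ rate.
\item That bound must be proved in $\ell_q$ neighbourhoods. In $\ell_2$ the $\phi$-posterior does not concentrate on an $(NT)^{-1/4}$ ball, because its covariance has trace of order $(N+T)/\sqrt{NT}\asymp1$.
\end{itemize}

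\textbf{What the paper does instead.} It avoids the mode altogether:
\begin{itemize}
\item It completes the square at $\theta_0$ with the expected Hessian and writes $\widehat\theta^E=\theta_0+\mu_\theta+\langle\mathcal U_\theta e^{R_{\mathcal L}}\rangle_\theta/\langle e^{R_{\mathcal L}}\rangle_\theta$.
\item It computes the first joint cumulant $\kappa(\mathcal U_\theta,R_{\mathcal L})$ exactly by Wick's formula. The linear part of $R_{\mathcal L}$ gives the prior gradient, the Hessian-noise terms, and the third-derivative contraction with $\mu_\phi\mu_\phi'$. The cubic part gives the contraction with $\Sigma^{(1)}_{\phi\phi}$.
\item It needs the sparse-tensor bounds only for the cumulants with $m\ge 2$.
\end{itemize}
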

\begin{remark}
    In Theorem \ref{thm:asy_expansions}, $U^{(0)}$ and $\mathcal{S}$ have mean zero and satisfy central limit theorems. 
    The expansion in (i) shows that the first-order bias in $\widehat\beta^{E}$ is driven by $\mathcal B^{\beta}$ and, through the cross-derivative channel, by $\mathcal B^{\phi}$. 
    In (ii), $\widehat\phi^{E}-\phi_{0}$ is dominated by the centered term $\overline{\mathcal H}^{-1}\mathcal S$, so $\widehat\phi^{E}$ is first-order centered. 
    The $\mathcal B^{\beta}$ and $\mathcal B^{\phi}$ terms thus capture the second-order bias in $\widehat\phi^{E}$.
    Under our regularity conditions, $\Vert \overline{\mathcal{H}}^{-1} \Vert=\mathcal{O}_P(1)$, $\Vert\overline{W}^{-1}\Vert=\mathcal{O}_P(1)$, and $\Vert \partial_{\beta\phi'}\overline{\mathcal{L}} \Vert=\mathcal{O}_P((NT)^{1/4})$. 
    With this structure in place, it suffices to choose a prior whose gradient matches:
    $\Vert \partial_{\beta} \ln p - \partial_{\beta}  \mathcal{D}_{\mathcal{L}^I}  \Vert =  o_P\!\big((NT)^{1/2}\big)$
    and
    $\Vert \partial_{\phi} \ln p - \partial_{\phi}  \mathcal{D}_{\mathcal{L}^I}  \Vert_{\infty} = o_P(1)$.\footnote{
        Controlling the $\ell_{\infty}$-norm by $o_P(1)$ implies 
        $\Vert \partial_{\phi} \ln p - \partial_{\phi}  \mathcal{D}_{\mathcal{L}^I}  \Vert_{2} = o_P((NT)^{1/4})$:
        for a vector in $\mathbb R^{N+T}$, $\|\cdot\|_2\le \sqrt{N+T}\,\|\cdot\|_{\infty}=o_P((NT)^{1/4})$ because $N,T\to\infty$ with $N/T\to c^2$ (Assumption \ref{assumption:main}(i)).
    }
\end{remark}

Within the integrated-likelihood framework, both the posterior mean APE $\widehat{\Delta}^E$ and the posterior plug-in APE $\Delta(\widehat{\beta}^E,\widehat{\phi}^E)$ admit comparable asymptotic representations around the true APE
$\Delta := \Delta(\beta_0,\phi_0)$:

\begin{theorem}\label{thm:APE_expansions}
    Suppose Assumptions \ref{assumption:main} and \ref{assumption:ape} hold. Define $  \overline{\mathcal{B}}^{\Delta}_\infty= \lim_{N,T\rightarrow\infty}  \overline{\mathcal{B}}^{\Delta}$ with  $  \overline{\mathcal{B}}^{\Delta} =- \tfrac{1}{2}\mathrm{tr}\Big( (\partial_{\phi\phi'}\overline{\Delta}) \mathrm{diag}\left( \sum_t\mathbb{E}_{\phi}(\partial_{\pi^2}\ell_{it}), \sum_i\mathbb{E}_{\phi}(\partial_{\pi^2}\ell_{it})    \right)^{-1}  \Big) .$\footnote{The operator $\mathrm{diag}(\cdot)$   maps a vector to a diagonal matrix (or picks diagonal blocks, as appropriate).} 
    Then, as $N,T\rightarrow{\infty}$,
    \begin{enumerate}[label=(\roman*)]
        \item $\widehat{\Delta}^E - \Delta = 
        (\partial_{\beta'}\overline{\Delta})     (\widehat{\beta}^E - \beta_0)
        +(\partial_{\phi'}\overline{\Delta}) (\widehat{\phi}^E - \phi_0)   
        + 2 \overline{\mathcal{B}}^{\Delta}_{\infty} + o_P( (NT)^{-1/2} )$.
        \item $\Delta( \widehat{\beta}^E, \widehat{\phi}^E  ) - \Delta = 
        (\partial_{\beta'}\overline{\Delta})    (\widehat{\beta}^E - \beta_0)
        +(\partial_{\phi'}\overline{\Delta}) (\widehat{\phi}^E - \phi_0) 
        + \overline{\mathcal{B}}^{\Delta}_{\infty} + o_P( (NT)^{-1/2} )$.
    \end{enumerate}
\end{theorem}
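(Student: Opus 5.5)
We prove both parts with a second-order Taylor expansion of the sample APE $\Delta(\beta,\phi)$ around $(\beta_0,\phi_0)$. For (i) we then take a posterior expectation; for (ii) we evaluate the expansion at the posterior means. The gap between the constants $2\overline{\mathcal B}^\Delta_\infty$ and $\overline{\mathcal B}^\Delta_\infty$ comes from the posterior-variance term: it is present in $\widehat\Delta^E$ but absent in the plug-in.

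\textbf{Step 1 (plug-in, part (ii)).} Write
\[
\Delta(\beta,\phi)-\Delta=(\partial_{\beta'}\Delta)(\beta-\beta_0)+(\partial_{\phi'}\Delta)(\phi-\phi_0)+\tfrac12(\phi-\phi_0)'(\partial_{\phi\phi'}\Delta)(\phi-\phi_0)+\text{cross and }\beta\beta\text{ terms}+\text{cubic remainder}.
\]
Evaluate it at $(\widehat\beta^E,\widehat\phi^E)$ and use Theorem \ref{thm:asy_expansions}. Since $\widehat\beta^E-\beta_0=\mathcal O_P((NT)^{-1/2})$, all terms quadratic in $\beta$, including the cross terms, are $o_P((NT)^{-1/2})$. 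Replacing $\partial\Delta$ by $\partial\overline\Delta$ costs a product of a centered average with an estimation error. Because $\Delta$ depends on $\phi$ only through $\pi_{it}$, $\partial_\phi\widetilde\Delta$ has entries $\mathcal O_P(T^{-1/2}/N)$ and $\mathcal O_P(N^{-1/2}/T)$, so this cost is negligible. This argument mirrors \citet{fernandez2016individual}.

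The $\phi\phi$ quadratic term carries the bias. By Theorem \ref{thm:asy_expansions}(ii), $\widehat\phi^E-\phi_0=\overline{\mathcal H}^{-1}\mathcal S+\text{(smaller terms)}$. Hence the quadratic term is $\tfrac12\mathcal S'\overline{\mathcal H}^{-1}(\partial_{\phi\phi'}\overline\Delta)\overline{\mathcal H}^{-1}\mathcal S$ plus negligible pieces. Its expectation is $\tfrac12\mathrm{tr}((\partial_{\phi\phi'}\overline\Delta)\overline{\mathcal H}^{-1}\mathbb E_\phi[\mathcal S\mathcal S']\overline{\mathcal H}^{-1})$. We then apply the information equality $\mathbb E_\phi[\mathcal S\mathcal S']\approx\overline{\mathcal H}/\sqrt{NT}$ up to the normalization term.

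Two further facts are needed here. First, $\partial_{\phi\phi'}\overline\Delta$ has diagonal blocks of order $1/(NT)$ times $T$ or $N$. Second, $\overline{\mathcal H}^{-1}$ equals its diagonal part plus an off-diagonal part that is uniformly small entrywise (Lemma D-type bounds in \citet{fernandez2016individual}). Together these reduce the trace to the diagonal form in $\overline{\mathcal B}^\Delta$, which is $\mathcal O(1/T+1/N)=\mathcal O((NT)^{-1/2})$. Its deviation from its mean is $o_P((NT)^{-1/2})$ by a variance calculation using independence across $i$ and mixing across $t$. The remainders $R_\phi$ and the $\mathcal B^\phi$ pieces enter the quadratic form only through cross products. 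These are bounded via $\|R_\phi\|=o_P((NT)^{-1/4})$, $\|\overline{\mathcal H}^{-1}\mathcal S\|=\mathcal O_P((NT)^{1/4}\cdot\text{small})$, and sparsity of $\partial_{\phi\phi'}\Delta$. The cubic remainder is handled with the fourth-order moment bound $M$ in Assumption \ref{assumption:ape}(iii).

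\textbf{Step 2 (posterior mean, part (i)).} Expand $\Delta(\beta,\phi)$ around the posterior means rather than the truth, then integrate against the posterior. The linear term vanishes, so
\[
\widehat\Delta^E=\Delta(\widehat\beta^E,\widehat\phi^E)+\tfrac12\mathrm{tr}\big((\partial_{\phi\phi'}\overline\Delta)\,\mathrm{Var}^{post}(\phi)\big)+o_P((NT)^{-1/2}).
\]
We need the posterior covariance of $\phi$ to equal $\overline{\mathcal H}^{-1}/\sqrt{NT}$ to leading order, at least on the diagonal blocks. We obtain this from the same target-centered full-exponential Laplace/cumulant expansion behind Lemma \ref{lemma:LaplaceLikelihood} and Theorem \ref{thm:asy_expansions}. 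Concretely, differentiate the log normalizing constant twice with respect to a tilting parameter, which gives the second cumulant, and use the sparsity of third and fourth derivatives to bound the correction terms. The extra term then equals $\overline{\mathcal B}^\Delta_\infty+o_P((NT)^{-1/2})$, and combining with Step 1 gives $2\overline{\mathcal B}^\Delta_\infty$.

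\textbf{Main obstacle.} The hardest step is controlling the posterior variance and the higher posterior cumulants uniformly over the $N+T$ coordinates, so that their contraction with $\partial_{\phi\phi'}\Delta$ (and the third-order Taylor term) is $o_P((NT)^{-1/2})$ despite the growing dimension. Here we would reuse the cumulant bounds from the proof of Theorem \ref{thm:asy_expansions}. The third cumulant of $\phi$ is bounded entrywise by $\mathcal O((NT)^{-1})$ times sparse index patterns, and the fourth-order terms are handled by the moment bound $M$.
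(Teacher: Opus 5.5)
\textbf{Comparison of routes.} Your part (ii) is the paper's argument. You expand at the truth, reduce the $\phi\phi$ term to $\tfrac12\mathcal S'\overline{\mathcal H}^{-1}A\,\overline{\mathcal H}^{-1}\mathcal S$ with $A:=\partial_{\phi\phi'}\overline\Delta$, take expectations via the information identity, and diagonalize $\overline{\mathcal H}^{-1}$ with Lemma~\ref{lemma:FWD1}. Your caveat about the normalization term is harmless: $\Delta$ is invariant along $v$, so $Av=0$ and the $vv'$ correction drops out.

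For part (i) you organize things differently. You expand around the posterior mean and use $\mathbb E^{\mathrm{post}}[u'Au]=\widehat u'A\widehat u+\mathrm{tr}(A\,\mathrm{Var}^{\mathrm{post}}(\phi))$, with $u=\phi-\phi_0$ and $\widehat u=\widehat\phi^E-\phi_0$. So (i) becomes (ii) plus a posterior-variance term. The paper instead expands again at the truth and splits $u_\phi=\mu_\phi+\mathcal U_\phi$ inside the Gaussian completion. One copy of $\overline{\mathcal B}^\Delta$ comes from $\tfrac12\mu_\phi'A\mu_\phi$. The other comes from $\tfrac12\langle\mathcal U_\phi'A\,\mathcal U_\phi\rangle_\theta=\tfrac12\mathrm{tr}(A\Sigma_{\phi\phi})$, with tilting corrections controlled by Lemma~\ref{lemma:remainders}(ii).

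At the quadratic level the two splits are exact rearrangements of each other. With $\bar m:=\widehat\phi^E-\phi_0-\mu_\phi$, they differ only in where $\mu_\phi'A\bar m$ and $\bar m'A\bar m$ are booked, and both are negligible. Your bookkeeping makes the factor $2$ transparent and reuses (ii). The paper's keeps one deterministic expansion point, so derivative bounds at $(\beta_0,\phi_0)$ and the remainder lemma apply directly. You must additionally replace $\partial_{\phi\phi'}\Delta(\widehat\beta^E,\widehat\phi^E)$ by $A$, which is easy with the third-derivative bound. Your posterior-variance step is the same cumulant estimate as Lemma~\ref{lemma:remainders}(ii). Note that it is needed for the full contraction with $A$, including its $\alpha\gamma$ block, not only ``on the diagonal blocks.''

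\textbf{A step that needs more than you give.} In Step~1 you replace $\partial_{\phi'}\Delta$ by $\partial_{\phi'}\overline\Delta$, and your justification does not support this. For the $\alpha$ block, your entrywise rates give $\sum_i\mathcal O_P(N^{-1}T^{-1/2})\,\mathcal O_P(T^{-1/2})=\mathcal O_P(T^{-1})$, the same order as $\overline{\mathcal B}^\Delta$. Likewise, $\|\partial_\phi\widetilde\Delta\|=\mathcal O_P((NT)^{-1/2})$ together with $\|\widehat\phi^E-\phi_0\|=\mathcal O_P(1)$ yields only $\mathcal O_P((NT)^{-1/2})$.

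The term is $o_P((NT)^{-1/2})$ only if $\mathbb E_\phi(\partial_\pi\Delta_{i\tau}\,\partial_\pi\ell_{it})=0$ for $\tau>t$. That is automatic when $\Delta_{it}$ involves only conditioned-upon regressors, since then $\widetilde\Delta\equiv0$. It fails with lagged outcomes in $X_{it}$. For the APE of a strictly exogenous regressor in a dynamic logit, $\partial_\pi\Delta_{i,t+1}$ depends on $Y_{it}$, so $(\partial_{\phi'}\widetilde\Delta)\overline{\mathcal H}^{-1}\mathcal S$ has mean of order $T^{-1}$. FW16's APE expansion keeps this term rather than discarding it, so citing them does not help.

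The paper's proof of (ii) writes the Taylor expansion directly in terms of $\partial\overline\Delta$ and is no more careful here. You should either restrict to the case $\widetilde\Delta\equiv0$ or carry this term explicitly.

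A minor point: $\|\overline{\mathcal H}^{-1}\mathcal S\|=\mathcal O_P(1)$, not $(NT)^{1/4}$ times something small. Combined with $\|A\|=\mathcal O_P((NT)^{-1/2})$ and $\|\widehat\phi^E-\phi_0-\overline{\mathcal H}^{-1}\mathcal S\|=\mathcal O_P((NT)^{-1/4})$, your cross terms are $\mathcal O_P((NT)^{-3/4})$.
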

\begin{remark}
    Theorem \ref{thm:APE_expansions} decomposes the first-order bias of APE estimators into three components: (i) the bias in $\widehat{\beta}^E$, (ii) the  bias in $\widehat{\phi}^E$, and (iii) the asymptotic variance of estimating $\phi(\beta_0)$.\footnote{We write $\overline{\mathcal{B}}^{\Delta}$ for the variance-driven component; it enters with coefficient $2$ for the posterior mean APE and with coefficient $1$ for the plug-in APE.} With a bias-reducing prior, (i)–(ii) become negligible while (iii) remains, an additional correction based on $\overline{\mathcal{B}}^{\Delta}_{\infty}$  is therefore required.
\end{remark}

\subsection{Robust Priors for Generic Models}\label{sec:robust_generic}
We turn to a general formulation of the bias-reducing prior. Our aim is straightforward: choose a prior $p(\beta,\phi)$ so that the posterior means $(\widehat{\beta}^E,\widehat{\phi}^E)$ have no first-order bias. 
Theorem \ref{thm:asy_expansions} shows that this bias is driven by the gap between the gradient of the log–prior and the gradient of the leading bias term in the integrated log–likelihood. 

To make this decomposition explicit, let $\Upsilon_{it}^{\pi}\in\mathbb{R}$ and $\Upsilon_{it}^{\beta}\in\mathbb{R}^{\dim\beta}$ denote the observation level contributions to the leading bias of gradients in the fixed-effects and common parameter directions, respectively.
In the proof of Theorem \ref{corollary:generic_prior} we show that 
\begin{enumerate}[label=(\roman*)]
	\item $\partial_{\alpha_i}  \mathcal{D}_{\mathcal{L}^I}        = \sum_{t=1}^{T} \Upsilon_{it}^{\pi} +o_P\left(1\right)$, and 
	          $\partial_{\gamma_t}  \mathcal{D}_{\mathcal{L}^I}    = \sum_{i=1}^{N} \Upsilon_{it}^{\pi}   +o_P\left(1\right)$,
	\item $\partial_{\beta}  \mathcal{D}_{\mathcal{L}^I}   = \sum_{i=1}^N \sum_{t=1}^T \Upsilon_{it}^{\beta}  +o_P((NT)^{1/2})$,
\end{enumerate}
Here,
\begin{equation}\label{eqn:Tree}
	\begin{aligned}
		\Upsilon_{it}^{\pi}
		&=
		\tfrac{\mathbb{E}_{\phi}(\partial_{\pi^{3}}\ell_{it}) +\mathbb{E}_{\phi}(\partial_{\pi^{2}}\ell_{it}\partial_{\pi}\ell_{it})}
		{\sum_{\tau = 1}^{T}{\mathbb{E}_{\phi}( \partial_{\pi^{2}}\ell_{i\tau} )}}
		+
		\tfrac{\mathbb{E}_{\phi}(\partial_{\pi^{3}}\ell_{it}) +\mathbb{E}_{\phi}(\partial_{\pi^{2}}\ell_{it}\partial_{\pi}\ell_{it})}
		{\sum_{j = 1}^{N}{\mathbb{E}_{\phi}( \partial_{\pi^{2}}\ell_{jt} )}}
		+
		\tfrac{\sum_{\tau=t+1}^{T}\mathbb{E}_{\phi}(\partial_{\pi^{2}}\ell_{i\tau}\partial_{\pi}\ell_{it})}
		{\sum_{\tau = 1}^{T}{\mathbb{E}_{\phi}( \partial_{\pi^{2}}\ell_{i\tau})}}. \\
		\Upsilon_{it}^{\beta}
		&=
		\tfrac{\mathbb{E}_{\phi}(\partial_{\beta\pi^{2}}\ell_{it}) +\mathbb{E}_{\phi}(\partial_{\beta\pi}\ell_{it}\partial_{\pi}\ell_{it})}
		{\sum_{\tau = 1}^{T}{\mathbb{E}_{\phi}( \partial_{\pi^{2}}\ell_{i\tau} )}}
		+
		\tfrac{\mathbb{E}_{\phi}(\partial_{\beta\pi^{2}}\ell_{it}) +\mathbb{E}_{\phi}(\partial_{\beta\pi}\ell_{it}\partial_{\pi}\ell_{it})}
		{\sum_{j = 1}^{N}{\mathbb{E}_{\phi}( \partial_{\pi^{2}}\ell_{jt} )}}
		+
		\tfrac{\sum_{\tau=t+1}^{T}\mathbb{E}_{\phi}(\partial_{\beta\pi}\ell_{i\tau}\partial_{\pi}\ell_{it})}
		{\sum_{\tau = 1}^{T}{\mathbb{E}_{\phi}( \partial_{\pi^{2}}\ell_{i\tau})}}. \\
	\end{aligned}
\end{equation}
To discuss the last term in \eqref{eqn:Tree} jointly, let $\theta\in\{\beta,\pi\}$, where $\partial_{\theta\pi}\ell_{i\tau}$ denotes $\partial_{\beta\pi}\ell_{i\tau}$ when $\theta=\beta$ and $\partial_{\pi^2}\ell_{i\tau}$ when $\theta=\pi$. 
Then the term $\sum_{\tau=t+1}^T\mathbb E_\phi(\partial_{\theta\pi}\ell_{i\tau}\,\partial_\pi\ell_{it})$ captures \textit{cross-time} dependencies. 
It is a spectral expectation that is generally nonzero in dynamic or predetermined settings,
and  as discussed in Section~\ref{sec:model_specific}, it largely determines the fixed-$T$ performance.
Following the standard spectral estimation, we replace
$\sum_{\tau=t+1}^T\mathbb E_\phi(\partial_{\theta\pi}\ell_{i\tau}\,\partial_\pi\ell_{it})$ by
$\sum_{\tau=1}^{\min\{L,T-t\}}T/(T-\tau) \mathbb E_\phi(\partial_{\theta\pi}\ell_{i,t+\tau}\,\partial_\pi\ell_{it})$
for a fixed integer $L$, 
and apply the degrees-of-freedom adjustment $T/(T-\tau)$ as in \cite{hahn2011bias}.\footnote{
	Here $L$ is the trimming parameter for estimating spectral expectations and we assume $L\rightarrow \infty$ and $L/T\to 0$ asymptotically.
    }

Representation \eqref{eqn:Tree} characterizes an \emph{equivalence class} of bias-reducing priors, that is, any log-prior whose gradient matches $\Upsilon$ at the below stated rates will eliminate the leading bias in \textit{both} $\widehat{\beta}^E$ and $\widehat{\phi}^E$:
\begin{equation} \label{eqn:differentialsystem}
    \begin{aligned}
        &\left\| \partial_\beta \ln p - \textstyle\sum_{i=1}^N\sum_{t=1}^T  \Upsilon^\beta_{it}  \right\| 
        = o_P( (NT)^{1/2} ), \\
        &\left| \partial_{\alpha_i} \ln p - \textstyle\sum_{t=1}^T \Upsilon^\pi_{it} \right| = o_P(1), 
        \qquad
        \left| \partial_{\gamma_t} \ln p - \textstyle\sum_{i=1}^N \Upsilon^\pi_{it} \right|  = o_P(1).
    \end{aligned}
\end{equation}
Among the priors that satisfy \eqref{eqn:differentialsystem}, we report two convenient closed-form choices:
\begin{theorem}[Bias-reducing priors for generic models]\label{corollary:generic_prior}
    Suppose Assumption \ref{assumption:main} holds. Let $L<\infty$ be fixed. Then the following priors are bias-reducing for $\widehat{\beta}^E$ and $\widehat{\phi}^E$:
    \begin{enumerate}[label=(\roman*)]
        \item (Prior GE-1: generic nonlinear panels) \\
        $\ln p(\beta,\phi) = 
        \frac{1}{2} \sum_{i=1}^{N} \left[ \ln\left(  \sum_{t = 1}^{T}{\mathbb{E}_{\phi}\left[ -\partial_{\pi^{2}}\ell_{it}(\beta,\pi) \right]}    \right)  - \frac{  \sum_{t = 1}^{T}{\mathbb{E}_{\phi}[\partial_{\pi}\ell_{it}(\beta,\pi)^2]}      }{ \sum_{t = 1}^{T}{\mathbb{E}_{\phi}\left[ - \partial_{\pi^{2}}\ell_{it}(\beta,\pi) \right]}} \right]  \\
        \mathrm{\qquad\qquad\quad}  + \frac{1}{2}\sum_{t=1}^{T} \left[ \ln\left(  \sum_{i = 1}^{N}{\mathbb{E}_{\phi}\left[ -\partial_{\pi^{2}}\ell_{it}(\beta,\pi) \right]}   \right) -   \frac{  \sum_{i = 1}^{N}{\mathbb{E}_{\phi}[\partial_{\pi}\ell_{it}(\beta,\pi)^2]}      }{ \sum_{i = 1}^{N}{\mathbb{E}_{\phi}\left[ -\partial_{\pi^{2}}\ell_{it}(\beta,\pi) \right]}} \right]  \\
        \mathrm{\qquad\qquad\quad}  -  \sum_{i=1}^{N} \left[    \frac{  \sum_{t = 1}^{T}{\sum_{\tau=1}^{\min\{L,T-t\}}T/(T-\tau)\mathbb{E}_{\phi}[\partial_{\pi}\ell_{i,t+\tau}(\beta,\pi)\partial_{\pi}\ell_{it}(\beta,\pi)]}      }{ \sum_{t = 1}^{T}{\mathbb{E}_{\phi}\left[ -\partial_{\pi^{2}}\ell_{it}(\beta,\pi) \right]}}  \right] .
        $
        \item (Prior GE-2: generic nonlinear panels) \\
        $\ln p(\beta,\phi) = 
        \sum_{i=1}^{N} \left[ \ln\left(  \sum_{t = 1}^{T}{\mathbb{E}_{\phi}\left[ -\partial_{\pi^{2}}\ell_{it}(\beta,\pi) \right]}    \right) \right]
        + \sum_{t=1}^{T} \left[ \ln\left(  \sum_{i = 1}^{N}{\mathbb{E}_{\phi}\left[- \partial_{\pi^{2}}\ell_{it}(\beta,\pi) \right]}   \right) \right]  \\
        \mathrm{\qquad\qquad\quad}  - \frac{1}{2}\sum_{t=1}^{T} \left[ \ln\left(  \sum_{i = 1}^{N}{\mathbb{E}_{\phi}\left[ (\partial_{\pi}\ell_{it}(\beta,\pi))^2 \right]}   \right) \right]  \\
        \mathrm{\qquad\qquad\quad}  -\frac{1}{2} \sum_{i=1}^{N} \Big[ \ln\Big(  
        \sum_{t= 1}^{T} \mathbb{E}_{\phi}\left[ (\partial_{\pi}\ell_{it}(\beta,\pi))^2 \right] \\
         \mathrm{\qquad\qquad\quad}  + 2\sum_{t= 1}^{T} \sum_{\tau=1}^{\min\{L,T-t\}}T/(T-\tau)\mathbb{E}_{\phi}[\partial_{\pi}\ell_{i,t+\tau}(\beta,\pi)\partial_{\pi}\ell_{it}(\beta,\pi)]    \Big)  \Big] .
        $
    \end{enumerate} 
\end{theorem}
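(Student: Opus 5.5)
The plan is to show that each of the two priors satisfies the gradient-matching system \eqref{eqn:differentialsystem}. The remark after Theorem \ref{thm:asy_expansions} then gives $\mathcal B^\beta=o_P(1)$ in the relevant norm and $\|\mathcal B^\phi\|_\infty=o_P((NT)^{-1/2})$. The bias terms in (i)--(ii) of Theorem \ref{thm:asy_expansions} are therefore absorbed into $R_\beta$ and $R_\phi$, so both priors are bias-reducing. The work splits into two parts: (a) prove the representation of $\partial\mathcal D_{\mathcal L^I}$ through $\Upsilon^\pi_{it}$ and $\Upsilon^\beta_{it}$ stated before the theorem; (b) differentiate each closed-form log-prior and check that its gradient equals the same $\Upsilon$ sums up to the required remainders.

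For (a), I would start from the definition of $\mathcal D_{\mathcal L^I}$ in Lemma \ref{lemma:LaplaceLikelihood}. There are two pieces.
\begin{itemize}
\item The log-determinant. Write $\partial_{\alpha_i}\tfrac12\ln\det\overline{\mathcal H}=\tfrac12\mathrm{tr}(\overline{\mathcal H}^{-1}\partial_{\alpha_i}\overline{\mathcal H})$. Replace $\overline{\mathcal H}^{-1}$ by its diagonal approximation $\mathrm{diag}(\sum_t\mathbb E_\phi(-\partial_{\pi^2}\ell_{it}),\sum_i\mathbb E_\phi(-\partial_{\pi^2}\ell_{it}))^{-1}$ (times $\sqrt{NT}$). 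Following \citet{fernandez2016individual}, the error of this replacement is $\mathcal O_P((NT)^{-1/2})$ in max-norm, up to the rank-one normalization direction $v$.
\item The trace term. Replace $\sqrt{NT}\mathcal S\mathcal S'$ by its conditional expectation. Its diagonal blocks are $\sum_t\mathbb E_\phi(\partial_\pi\ell_{it})^2$ plus the cross-time covariances $\sum_{\tau\neq t}\mathbb E_\phi(\partial_\pi\ell_{i\tau}\partial_\pi\ell_{it})$. The latter are nonzero only in the $\alpha_i$ block, because observations are independent across $i$.
\end{itemize}
Differentiating $\mathbb E_\phi[\cdot]$ with respect to $\pi$ produces both $\mathbb E_\phi[\partial g]$ and the score-weighted term $\mathbb E_\phi[g\,\partial_\pi\ell]$. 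Predeterminedness kills $\mathbb E_\phi(\partial_{\pi^2}\ell_{i\tau}\partial_\pi\ell_{it})$ for $\tau<t$, which leaves only the forward sums $\sum_{\tau>t}$ appearing in \eqref{eqn:Tree}.

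The main obstacle is controlling the remainders in this step. The off-diagonal blocks of $\overline{\mathcal H}^{-1}$ contribute $N+T$ terms, each of order $(NT)^{-1/2}$. The fluctuation $\sqrt{NT}(\mathcal S\mathcal S'-\mathbb E_\phi\mathcal S\mathcal S')$ must be shown to be $o_P(1)$ after contraction with $\overline{\mathcal H}^{-1}\partial_{\alpha_i}\overline{\mathcal H}\,\overline{\mathcal H}^{-1}$. Here I would exploit the sparsity emphasized before Theorem \ref{thm:asy_expansions}: $\partial_{\alpha_i}\overline{\mathcal H}$ has nonzero entries only in row and column $i$ and on the $\gamma$-diagonal. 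I would then bound the resulting sums with the mixing and moment bounds of Assumption \ref{assumption:main}(ii),(iv).

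For (b), take $\partial_{\alpha_i}$ of Prior GE-1. The $i$-th summand $\tfrac12\ln\sum_t\mathbb E_\phi(-\partial_{\pi^2}\ell_{it})-\tfrac12\sum_t\mathbb E_\phi(\partial_\pi\ell_{it})^2/\sum_t\mathbb E_\phi(-\partial_{\pi^2}\ell_{it})$ is differentiated using the Bartlett identities $\mathbb E_\phi\partial_{\pi^2}\ell=-\mathbb E_\phi(\partial_\pi\ell)^2$ and $\partial_\pi\mathbb E_\phi[g]=\mathbb E_\phi[\partial_\pi g+g\partial_\pi\ell]$. This yields the first term of $\sum_t\Upsilon^\pi_{it}$. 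The $T$ time-effect summands each depend on $\alpha_i$ through a single observation with an $N^{-1}$-sized weight, and together they give the second term. The last line of GE-1 gives the cross-time term. The $\gamma_t$-derivative is symmetric but has no cross-time piece, which matches $\sum_i\Upsilon^\pi_{it}$. The $\beta$-derivative aggregates these same per-observation derivatives over all $(i,t)$ and matches $\sum_{i,t}\Upsilon^\beta_{it}$ with error $o_P((NT)^{1/2})$.

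The spectral truncation at lag $L$ differs from the full forward sum by a tail controlled by the $\alpha$-mixing rate, which is $o_P(1)$ uniformly in $i$ as $L$ grows (per the footnote). The $T/(T-\tau)$ factor is $1+\mathcal O(L/T)$, which is negligible.

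For GE-2, I would write its gradient as that of GE-1 plus a term built from the log-ratio of $\sum\mathbb E_\phi(\partial_\pi\ell)^2+2\times$(spectral sum) to $\sum\mathbb E_\phi(-\partial_{\pi^2}\ell)$. Differentiating this log-ratio and expanding around the Bartlett identity $\mathbb E(\partial_\pi\ell)^2=\mathbb E(-\partial_{\pi^2}\ell)$ should show that the two gradients agree to first order, up to terms of the same negligible orders. This verifies \eqref{eqn:differentialsystem} for GE-2 as well.
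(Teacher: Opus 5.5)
\textbf{The gap is the differentiation rule you apply to $\mathbb E_\phi$.}

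\textbf{What goes wrong.} You use $\partial_\pi\mathbb E_\phi[g]=\mathbb E_\phi[\partial_\pi g+g\,\partial_\pi\ell]$, so the data distribution moves with the parameter, and you attribute the moments $\mathbb E_\phi(\partial_{\pi^2}\ell_{i\tau}\partial_\pi\ell_{it})$ to that moving measure. Under this convention, however, the Bartlett identity and the martingale-difference property hold identically in $(\beta,\pi)$, not only at the truth.

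Write $B_i=\sum_t\mathbb E_\phi[-\partial_{\pi^2}\ell_{it}]$, $A_i=\sum_t\mathbb E_\phi[(\partial_\pi\ell_{it})^2]$, and $C_i$ for the trimmed lead--lag sum in GE-1. Then $A_i/B_i\equiv 1$ and $C_i\equiv 0$, so the second and third pieces of GE-1 are constants. The $\alpha_i$-gradient of the $i$-block reduces to $\tfrac12B_i^{-1}\partial_{\alpha_i}B_i$. With your rule and the full score $\sum_s\partial_\pi\ell_{is}$, this is exactly one half of the first-plus-third terms of $\sum_t\Upsilon^\pi_{it}$.

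The same collapse happens in step (a). There $\mathbb E_\phi[\sqrt{NT}\mathcal S\mathcal S']\equiv\overline{\mathcal H}$ by the information identity, so the trace term is the constant $-\tfrac12\dim\phi$. Only $\tfrac12\mathrm{tr}(\overline{\mathcal H}^{-1}\partial\overline{\mathcal H})$ survives, again half.

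A concrete check: for logit or Poisson with strictly exogenous regressors, your reading turns GE-1 into $\tfrac12\ln p_{\mathrm{SE}}$ plus a constant. Its gradient is half that of Prior SE, which is the prior that cancels the bias exactly. So neither ``this yields the first term of $\sum_t\Upsilon^\pi_{it}$'' nor ``the last line of GE-1 gives the cross-time term'' follows from the tools you list.

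The correct target has coefficient one. In the proof of Theorem \ref{thm:asy_expansions}, $\mathbb E_\phi[\mu^{(1)}_\phi\mu^{(1)\prime}_\phi]$ and $\Sigma^{(1)}_{\phi\phi}$ each supply half of the third-derivative term. The lead--lag moments come from $-\sqrt{NT}\widetilde{\mathcal H}\overline{\mathcal H}^{-1}\mathcal S$ in \eqref{eq:biasphi}, that is, from the covariance between the Hessian fluctuation and the score, not from a moving measure.

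\textbf{The repair.} Read $\mathbb E_\phi$ as in \citet{fernandez2016individual}: it is the expectation under the true data distribution, and $(\beta,\phi)$ enters only through the argument of $\ell_{it}(\beta,\pi)$. This is also how the observed-form implementation behaves. Then $\partial_\pi\mathbb E_\phi[g(\pi)]=\mathbb E_\phi[\partial_\pi g(\pi)]$, and Bartlett ($A_i=B_i$) and the mds property ($C_i=0$) are imposed only after differentiating, at the truth.

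For GE-1 the $\alpha_i$-gradient of the $i$-block becomes $B_i^{-1}\partial B_i-\tfrac12B_i^{-1}\partial A_i-B_i^{-1}\partial C_i$, where
\begin{itemize}
\item $\partial A_i=2\sum_t\mathbb E_\phi(\partial_\pi\ell_{it}\partial_{\pi^2}\ell_{it})$;
\item $\partial C_i=\sum_t\sum_{\tau}\tfrac{T}{T-\tau}\mathbb E_\phi(\partial_{\pi^2}\ell_{i,t+\tau}\partial_\pi\ell_{it})$, the companion product $\partial_\pi\ell_{i,t+\tau}\partial_{\pi^2}\ell_{it}$ being killed by the mds property.
\end{itemize}
This is precisely the first-plus-third terms, with coefficient one on $\mathbb E_\phi\partial_{\pi^3}\ell_{it}$.

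In your direct treatment of $\mathcal D_{\mathcal L^I}$, the analogue is to differentiate $\mathbb E_\phi[\sqrt{NT}\mathcal S(\beta,\phi)\mathcal S(\beta,\phi)']$ under the fixed measure, using $\partial_\phi\mathcal S=-\mathcal H$. The log-determinant and the $\overline{\mathcal H}^{-1}$ inside the trace then each contribute $\tfrac12\mathrm{tr}(\overline{\mathcal H}^{-1}\partial\overline{\mathcal H})$. The paper does not differentiate the definition of $\mathcal D_{\mathcal L^I}$ at all. It works from \eqref{eq:biasbeta}--\eqref{eq:biasphi}, then applies the diagonal approximation, sparsity, mds, cross-sectional independence and a WLLN.

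With this correction your plan goes through, and your GE-2 step is cleaner than the paper's. GE-2 minus GE-1 is a sum of terms $\tfrac12(x-\ln x)$, with $x=(A_i+2C_i)/B_i$ or the analogous $x=A_t/B_t$. Since $x=1$ at the truth and $x-\ln x$ is stationary there, the two gradients coincide exactly.

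\textbf{A smaller point.} Your truncation argument uses $L\to\infty$, as in the paper's footnote, not the fixed $L$ of the statement. With $L$ fixed, the omitted lags in general contribute $O(N)$ to the $\beta$-gradient, which is not $o_P((NT)^{1/2})$.
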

\begin{remark}
    Bias-reducing priors are not unique. Any alternative log-prior $\ln \tilde p(\beta,\phi)$ that differs by $\ln \tilde p=\ln p+R_{NT}$ with 
    $\| \partial_\beta R_{NT}\| =o_P((NT)^{1/2})$ and
    $|\partial_{\alpha_i} R_{NT}|=o_P(1)$,  $|\partial_{\gamma_t} R_{NT}|=o_P(1)$
    in a neighborhood of the truth, also satisfies \eqref{eqn:differentialsystem} and is therefore equally bias-reducing.
    In particular, adding a constant to $\ln p$ is immaterial.
    Moreover, taking weighted averages of $\ln p$ preserves the condition.
    For instance, for any $\omega\in[0,1]$,
    $\ln p_{\omega}=\omega\ln p_{\mathrm{GE\text{-}1}}+(1-\omega)\ln p_{\mathrm{GE\text{-}2}}$  is also bias-reducing.
    We thus view GE-1 and GE-2 as convenient closed-form choices rather than unique solutions, and other primitives (including numerical solutions) are possible.
\end{remark}

Implementation requires evaluating the expectations $\mathbb{E}_{\phi}(\cdot)$, which are normally \textit{data-dependent}. Following \cite{arellano2016likelihood}, the expectations can be computed either in observed form (sample analog) or in closed form from the model. 
We adopt a \textit{mixed strategy}: use observed forms by default (consistent for the expectations), and substitute analytical expectations \textit{whenever available} to improve small-sample performance. The large-$N,T$ validity of the correction is unaffected, and our Monte Carlo evidence confirms that the two approaches coincide asymptotically.

In Section  \ref{sec:model_specific} we show that, for specific models, certain components of $\Upsilon$ (or of their expectations) vanish, which delivers closed-form simplifications of the prior. To our knowledge, these are the first model-specific bias-reducing priors.

\subsection{Asymptotic Distribution and Inference}
The next  results summarize our asymptotics under bias-reducing priors.
The first result describes the limiting distribution of the posterior means, while the second treats the two APE estimators and isolates a residual variance-type bias term:

\begin{corollary}\label{corollary:asy_parameter}
    Suppose Assumption \ref{assumption:main} holds and $ p(\beta,\phi)$ is a bias-reducing prior. Then,
    \begin{enumerate}[label=(\roman*)]
        \item $(NT)^{1/2}(\widehat{\beta}^E - \beta_0 )  \xrightarrow{d}\mathcal{N}(0, \overline{W}^{-1}_{\infty}) $.
        \item The sub-vector of $\widehat{\phi}^E$ of the fixed length $L\geq1$ satisfies
        
        $(NT)^{1/4}(\widehat{\phi}^E - \phi_0 )_{1:L}  \xrightarrow{d}\mathcal{N}\Big(0, (\overline{\mathcal{H}}_{\infty}^{-1})_{1:L,1:L}\Big)    $.
    \end{enumerate}
    where $\overline{W}_{\infty}$ and $\overline{\mathcal{H}}_{\infty}$ are the limits of $\overline{W}$ and $\overline{\mathcal{H}}$ (when they exist) as $N,T\to\infty$.
\end{corollary}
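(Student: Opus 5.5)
The plan is to read both parts off the expansions in Theorem~\ref{thm:asy_expansions}. Under a bias-reducing prior the bias terms vanish at the required rates, and the surviving linear terms then satisfy central limit theorems.

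\emph{Step 1: the bias terms are negligible.} By definition, a bias-reducing prior satisfies \eqref{eqn:differentialsystem}. Combining this with parts (i)--(ii) of the representation stated before Theorem~\ref{corollary:generic_prior}, we get
\[
\|\mathcal B^{\beta}\| = o_P(1), \qquad \|\mathcal B^{\phi}\|_\infty = o_P\big((NT)^{-1/2}\big),
\]
and hence $\|\mathcal B^{\phi}\|_2 = o_P\big((NT)^{-1/4}\big)$. The remark after Theorem~\ref{thm:asy_expansions} gives $\|\overline{\mathcal H}^{-1}\|=\mathcal O_P(1)$, $\|\overline W^{-1}\|=\mathcal O_P(1)$ and $\|\partial_{\beta\phi'}\overline{\mathcal L}\|=\mathcal O_P((NT)^{1/4})$. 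It follows that $\|[\partial_{\beta\phi'}\overline{\mathcal L}]\overline{\mathcal H}^{-1}\mathcal B^{\phi}\| = o_P(1)$. Therefore the bias contribution in (i) is $(NT)^{-1/2}\, o_P(1)$. Applying the same bounds in (ii), the bias terms there are $o_P((NT)^{-1/4})$ in $\ell_2$. The $\beta$-cross term is $\mathcal O_P((NT)^{1/4})(NT)^{-1/2}\,o_P(1)$, which is even smaller.

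\emph{Step 2: part (i).} After Step 1, $(NT)^{1/2}(\widehat\beta^E-\beta_0)=\overline W^{-1}U^{(0)}+o_P(1)$. The statistic $U^{(0)}$ has mean zero (see the remark after Theorem~\ref{thm:asy_expansions}). It is a sum over $i$ of independent, $\alpha$-mixing-in-$t$ contributions, because $\partial_\beta\mathcal L$ and $\mathcal S$ are sums of the per-observation scores $\partial_\beta\ell_{it}$ and $\partial_\pi\ell_{it}$. We apply the martingale/mixing CLT used in \citet{fernandez2016individual}, which relies only on Assumption~\ref{assumption:main}(ii),(iv). This gives $U^{(0)}\xrightarrow{d}\mathcal N(0,\overline W_\infty)$, where the variance equals $\overline W_\infty$ by the information equality. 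That equality holds because the model is correctly specified and the scores are martingale differences in $t$ conditional on the past. Specifically, $\mathbb E_\phi[U^{(0)}U^{(0)\prime}] = \overline W$ since the projection term $[\partial_{\beta\phi'}\overline{\mathcal L}]\overline{\mathcal H}^{-1}\mathcal S$ is exactly the efficient-score correction. Slutsky's lemma together with $\overline W\to\overline W_\infty$ then gives $\mathcal N(0,\overline W_\infty^{-1})$.

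\emph{Step 3: part (ii), the main obstacle.} Here
\[
(NT)^{1/4}(\widehat\phi^E-\phi_0) = (NT)^{1/4}\overline{\mathcal H}^{-1}\mathcal S + (NT)^{-1/4}\overline{\mathcal H}^{-1}[\partial_{\phi\beta'}\overline{\mathcal L}]\overline W^{-1}U^{(0)} + o_P(1).
\]
The $o_P(1)$ comes from $R_\phi$ and Step 1, and a fixed-length subvector inherits it from the $\ell_2$ bound. The second term is $\mathcal O_P((NT)^{-1/4}(NT)^{1/4})=\mathcal O_P(1)$ in $\ell_2$ norm, so a norm bound alone does not suffice. I would instead bound it coordinatewise. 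Each row of $\partial_{\phi\beta'}\overline{\mathcal L}$ is a sum of only $T$ or $N$ terms scaled by $(NT)^{-1/2}$, so it is $\mathcal O(1)$ entrywise. Moreover, $\overline{\mathcal H}^{-1}$ is diagonal-dominant: it equals a diagonal matrix plus a remainder with uniformly $\mathcal O_P((NT)^{-1/2})$ entries, as in \citet{fernandez2016individual}. Together these make each coordinate of the second term $\mathcal O_P((NT)^{-1/4})=o_P(1)$. For the leading term, the same diagonal approximation gives $(NT)^{1/4}(\overline{\mathcal H}^{-1}\mathcal S)_k$ as $(NT)^{1/4}\mathcal S_k/\overline{\mathcal H}_{kk}$ up to $o_P(1)$. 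Here $(NT)^{1/4}\mathcal S_{\alpha_i}=T^{-1/2}\sum_t\partial_\pi\ell_{it}\cdot (N/T)^{-1/4}$-type scalings, and analogously for $\gamma_t$. Distinct coordinates among the first $L$ are asymptotically independent, since they involve disjoint units or periods, with only $\mathcal O(1)$ overlap for a mixed $\alpha_i,\gamma_t$ pair. A mixing CLT then yields joint normality with covariance $(\overline{\mathcal H}_\infty^{-1})_{1:L,1:L}$, by the information identity $\mathbb E_\phi[\mathcal S\mathcal S']=\overline{\mathcal H}$ up to the normalization term. Controlling the off-diagonal part of $\overline{\mathcal H}^{-1}$ entrywise is the delicate step, and I would import it directly from the corresponding lemma of \citet{fernandez2016individual}.
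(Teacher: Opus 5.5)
Your proposal is correct and follows the paper's skeleton. You insert the bias-reducing rates $\|\mathcal B^{\beta}\|=o_P(1)$ and $\|\mathcal B^{\phi}\|=o_P((NT)^{-1/4})$ into Theorem~\ref{thm:asy_expansions}. You then apply a CLT to $U^{(0)}$ for part (i) and to the score for part (ii), and conclude by Slutsky.

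Where you differ is part (ii). The paper writes the $U^{(0)}$ cross term as $\mathcal O_P((NT)^{-1/4})U^{(0)}$. It then applies Slutsky to $\overline{\mathcal H}^{-1}$ together with a CLT for $(NT)^{1/4}\mathcal S_{1:L}$. You instead work coordinatewise and use the diagonal approximation of Lemma~\ref{lemma:FWD1}. Your route is longer but more careful, for two reasons:
\begin{enumerate}[label=(\alph*)]
\item In $\ell_2$ the cross term is only $\mathcal O_P(1)$, so it vanishes only coordinatewise.
\item The vector $(\overline{\mathcal H}^{-1}\mathcal S)_{1:L}$ depends on the whole score vector, not just on $\mathcal S_{1:L}$. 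A reduction to $\mathcal S_k/\overline{\mathcal H}_{kk}$ of the kind you propose is what actually makes the fixed-subvector CLT follow.
\end{enumerate}
The paper's shorter version leaves both points implicit.

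One step needs more than you state. Let $c$ be a row of $\overline{\mathcal H}^{-1}-\sqrt{NT}\,\mathrm{diag}(\cdot)^{-1}$. Then $\|c\|=\mathcal O_P((NT)^{-1/4})$, so the max-norm bound plus Cauchy--Schwarz gives only $(NT)^{1/4}|c'\mathcal S|\le (NT)^{1/4}\|c\|\,\|\mathcal S\|=\mathcal O_P(1)$. Importing the entrywise lemma alone therefore does not give $o_P(1)$. To close the step, use two facts:
\begin{enumerate}[label=(\alph*)]
\item $c$ is nonrandom under $\mathbb E_\phi$.
\item $\mathbb E_\phi[\mathcal S\mathcal S']\preceq (NT)^{-1/2}\overline{\mathcal H}$, by the information equality and the martingale-difference property.
\end{enumerate}
Together these give $\mathrm{Var}_\phi(c'\mathcal S)\le (NT)^{-1/2}\|\overline{\mathcal H}\|\,\|c\|^2=\mathcal O_P((NT)^{-1})$. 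Hence the off-diagonal contribution is $\mathcal O_P((NT)^{-1/4})=o_P(1)$, as you need.
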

\begin{remark}
    Corollary \ref{corollary:asy_parameter} establishes the asymptotic normality of $\widehat{\beta}^E$ under a bias-reducing prior, thereby supporting standard likelihood-based inference. A Wald test is immediately available using the posterior mean and covariance from the MCMC output, which yield valid frequentist inference via Corollary \ref{corollary:asy_parameter}. A Score (LM) test is also feasible, as the gradient of integrated likelihood equals the posterior expectation of joint score, which can be estimated via the MCMC run with $\beta$ fixed under the null.
    These implications are in the same spirit as \citet{leng2025debiased}, who show that debiasing the joint likelihood restores standard large-sample likelihood-based inference.
\end{remark}

Before stating the asymptotic result for APE estimators, it is useful to emphasize that our APE estimators center at the population target $\mathbb E(\Delta)$ rather than at the sample average $\Delta$. 
As a consequence, the convergence rate need not be the parametric $\sqrt{NT}$ rate, and instead depends on the sampling properties of $(X_{it},\alpha_i,\gamma_t)$, as shown by \cite{fernandez2016individual}.
\begin{corollary}\label{corollary:asy_APE}
    Suppose Assumptions \ref{assumption:main} and \ref{assumption:ape} hold and $ p(\beta,\phi)$ is a bias-reducing prior.  
    Consider any sampling scheme under which the APE convergence rate $r_{NT}$ and asymptotic variance $\overline{V}_{\infty}^{\delta}$ are characterized as in Theorem 4.2 of \cite{fernandez2016individual}. Then,
    \begin{enumerate}[label=(\roman*)]
        \item $r_{NT}\left(\widehat{\Delta}^E - \mathbb{E}(\Delta)     - 2\overline{\mathcal{B}}^{\Delta}_{\infty}     \right)
        \xrightarrow{d}\mathcal{N}(0, \overline{V}_{\infty}^{\delta}) $.
        \item $r_{NT}\left(\Delta( \widehat{\beta}^E, \widehat{\phi}^E  )  - \mathbb{E}(\Delta)    -\overline{\mathcal{B}}^{\Delta}_{\infty}        \right)
        \xrightarrow{d}\mathcal{N}(0, \overline{V}_{\infty}^{\delta}) $.
    \end{enumerate}
    where $\overline{\mathcal{B}}^{\Delta}_{\infty}$ is defined in Theorem \ref{thm:APE_expansions}.
\end{corollary}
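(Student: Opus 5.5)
The plan is to combine the linear representations of Theorem~\ref{thm:APE_expansions} with those of Theorem~\ref{thm:asy_expansions}. Under a bias-reducing prior, \eqref{eqn:differentialsystem} gives $\|\mathcal B^{\beta}\|=o_P(1)$ and $\|\mathcal B^{\phi}\|_\infty=o_P((NT)^{-1/2})$. The second rate follows because $\partial_\phi\mathcal D_{\mathcal L^I}$ agrees with the $\Upsilon^\pi$ sums up to $o_P(1)$ componentwise, and these errors are then divided by $\sqrt{NT}$.

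Substituting into Theorem~\ref{thm:asy_expansions}, I first show that the bias terms are negligible at the relevant rates. In (i), the norm bounds listed in the remark after Theorem~\ref{thm:asy_expansions} give $\|\overline{W}^{-1}[\partial_{\beta\phi'}\overline{\mathcal L}]\overline{\mathcal H}^{-1}\mathcal B^\phi\|=O_P((NT)^{1/4})\cdot O_P(\sqrt{N+T})\cdot o_P((NT)^{-1/2})=o_P(1)$. Hence
\[
\widehat\beta^E-\beta_0=(NT)^{-1/2}\,\overline W^{-1}U^{(0)}+o_P((NT)^{-1/2}).
\]
Similarly, in (ii),
\[
\widehat\phi^E-\phi_0=\overline{\mathcal H}^{-1}\bigl(\mathcal S+(NT)^{-1/2}[\partial_{\phi\beta'}\overline{\mathcal L}]\overline W^{-1}U^{(0)}\bigr)+\overline{\mathcal H}^{-1}\widetilde{\mathcal B}+R_\phi,
\]
where $\widetilde{\mathcal B}$ collects the $\mathcal B$-terms.

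Next I insert these into Theorem~\ref{thm:APE_expansions}. The linear term $(\partial_{\phi'}\overline\Delta)\,\overline{\mathcal H}^{-1}\widetilde{\mathcal B}$ must be shown to be $o_P((NT)^{-1/2})$. Here $\partial_{\phi}\overline\Delta$ has entries of order $1/N$ and $1/T$, so its $\ell_1$ norm is $O_P(1)$. I would also use that $\overline{\mathcal H}^{-1}$ is bounded in the $\ell_\infty\to\ell_\infty$ norm, as in \citet{fernandez2016individual} (a diagonal-plus-low-rank approximation). Then $\|\overline{\mathcal H}^{-1}\widetilde{\mathcal B}\|_\infty=o_P((NT)^{-1/2})$, and the product is $o_P((NT)^{-1/2})$. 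The control of $R_\phi$ projected onto $\partial_{\phi}\overline\Delta$ is already built into the $o_P((NT)^{-1/2})$ remainder of Theorem~\ref{thm:APE_expansions}, which I take as given.

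This yields
\[
\widehat\Delta^E-\Delta-2\overline{\mathcal B}^\Delta_\infty=\Psi_{NT}+o_P((NT)^{-1/2}),
\]
and the same with coefficient $1$ for the plug-in, where
\[
\Psi_{NT}=(NT)^{-1/2}\bigl[\partial_{\beta'}\overline\Delta+\partial_{\phi'}\overline\Delta\,\overline{\mathcal H}^{-1}\partial_{\phi\beta'}\overline{\mathcal L}\bigr]\overline W^{-1}U^{(0)}+\partial_{\phi'}\overline\Delta\,\overline{\mathcal H}^{-1}\mathcal S .
\]
This is exactly the mean-zero influence term $\delta$ appearing in Theorem 4.2 of \citet{fernandez2016individual} for their bias-corrected APE estimator.

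Finally, I write $\widehat\Delta^E-\mathbb E(\Delta)=(\Delta-\mathbb E\Delta)+\Psi_{NT}+2\overline{\mathcal B}^\Delta_\infty+o_P((NT)^{-1/2})$. I then invoke their theorem directly: under any sampling scheme covered by Assumption~\ref{assumption:ape}(i), $r_{NT}\bigl[(\Delta-\mathbb E\Delta)+\Psi_{NT}\bigr]\to_d\mathcal N(0,\overline V^\delta_\infty)$. Because $r_{NT}\le\sqrt{NT}$, the $o_P((NT)^{-1/2})$ remainders vanish after scaling, and Slutsky's lemma gives both (i) and (ii).

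The main obstacle is the $\ell_\infty$ control of $\overline{\mathcal H}^{-1}\mathcal B^\phi$, which is what allows the $o_P(1)$ componentwise prior-matching error to aggregate harmlessly into the APE. I would handle it with the approximation $\overline{\mathcal H}^{-1}=\overline{\mathcal D}^{-1}+O_{\max}((NT)^{-1/2})$ from \citet{fernandez2016individual}, where $\overline{\mathcal D}$ is the diagonal part of $\overline{\mathcal H}$.
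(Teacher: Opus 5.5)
Your argument follows the paper's own route. You substitute Theorem~\ref{thm:asy_expansions} into Theorem~\ref{thm:APE_expansions}, check that the prior kills the $\mathcal B$-terms, split off $\Delta-\mathbb E(\Delta)$, and invoke Theorem 4.2 of \citet{fernandez2016individual} together with Slutsky. It is in fact more explicit than the paper, which only asserts that the linear terms are ``asymptotically centered.''

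One justification is wrong, though. Theorem~\ref{thm:APE_expansions} keeps the exact $\widehat\phi^E-\phi_0$ in its linear term; its proof identifies the linear terms with \eqref{eq:post-mean-ratio} exactly. So $(\partial_{\phi'}\overline\Delta)R_\phi$ is not inside its $o_P((NT)^{-1/2})$ remainder, and you must bound it when you substitute Theorem~\ref{thm:asy_expansions}(ii). The bound is immediate by Cauchy--Schwarz. The entries of $\partial_\phi\overline\Delta$ are $O(1/N)$ and $O(1/T)$, so $\|\partial_\phi\overline\Delta\|=O_P((NT)^{-1/4})$, while $\|R_\phi\|=o_P((NT)^{-1/4})$.

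The same Cauchy--Schwarz step also disposes of the bias term. Use $\|\overline{\mathcal H}^{-1}\|=O_P(1)$ and $\|\widetilde{\mathcal B}\|=o_P((NT)^{-1/4})$, the $\ell_2$ rate the paper uses in the proof of Corollary~\ref{corollary:asy_parameter}. The $\ell_\infty\to\ell_\infty$ bound on $\overline{\mathcal H}^{-1}$ that you flag as the main obstacle is therefore valid but unnecessary.
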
 
\begin{remark}
	Corollary \ref{corollary:asy_APE}  shows how the posterior mean APE  $\widehat{\Delta}^{E}$ and posterior plug-in APE  $\Delta(\widehat{\beta}^{E},\widehat{\phi}^{E})$  inherit asymptotic normality once the convergence rate $r_{NT}$ is available under a given sampling scheme of $(X_{it},\alpha_i,\gamma_t)$. 
    Our analysis therefore provides the identification of the remaining variance-type bias term. Although this underlying bias component is the same, it enters the two APE estimators differently, i.e., $2\overline{\mathcal{B}}_{\infty}^{\Delta}$ for $\widehat{\Delta}^{E}$, while $\overline{\mathcal{B}}_{\infty}^{\Delta}$ for $\Delta(\widehat{\beta}^{E},\widehat{\phi}^{E})$.
\end{remark}

To our knowledge, there exists no tractable prior that simultaneously corrects both parameter and APE biases. Therefore, we address the remaining bias term directly. We construct the bias-corrected APE estimators by estimating $\mathcal{B}^{\Delta}$ at the posterior means and  performing the following subtraction:
$$
\widetilde{{\Delta}}^E  = \widehat{\Delta}^E -2{\mathcal{\widehat{B}}}^{\Delta}( \widehat{\beta}^E, \widehat{\phi}^E ),
\qquad
\widetilde{\Delta}( \widehat{\beta}^E, \widehat{\phi}^E )  = \Delta( \widehat{\beta}^E, \widehat{\phi}^E )    -{\mathcal{\widehat{B}}}^{\Delta}( \widehat{\beta}^E, \widehat{\phi}^E).
$$

For \textit{inference} on model parameters and APEs, one may either (i) compute analytical asymptotic variances (i.e., $\overline{W}^{-1}_{\infty}$, $\overline{\mathcal{H}}^{-1}_{\infty}$  and $\overline{V}_{\infty}^{\delta}$), or (ii) use posterior draws. The latter is justified by the Bernstein–von Mises theorem: although our priors are improper, Bernstein–von Mises applies under local positivity and continuity of the prior near the true parameter \cite[Ch. 10]{van2000asymptotic}, and implies asymptotic equivalence. Our assumption ensures these conditions, so both routes lead to the same first-order inference.

\subsection{Bias-reducing Penalties for Joint Likelihood}\label{sec:subsecPenaltyforJoint}
Together with \citet{jochmans2019likelihood}, the connection between integrated- and joint-likelihood corrections is direct. The penalty term is given by \textit{Prior GE} minus $\frac{1}{2}\ln{\det\left( \overline{\mathcal{H}}\left( \beta,\phi \right) \right)}$.\footnote{
    To see the link, the expansion of the concentrated log-likelihood around the target mode $\overline{\phi}(\beta)$ gives the leading bias
    $-\tfrac{1}{2}\,\mathrm{tr}\!\left\{\overline{\mathcal H}^{-1}(\beta,\overline{\phi}(\beta))\,\sqrt{NT}\,\mathcal S(\beta,\overline{\phi}(\beta))\,\mathcal S(\beta,\overline{\phi}(\beta))^{\prime}\right\}.$
    When moving to the integrated log-likelihood, the standard Laplace expansion about the sample mode $\widehat{\phi}(\beta)$ (cf. \citealp{tierney1989fully}) contributes an additional term $\frac{1}{2}\ln{\det\left(\cdot \right)}$. Hence the joint–likelihood penalty is obtained by subtracting this term: 
    $\mathrm{Penalty GE}=\mathrm{Prior GE}-\frac{1}{2}\ln{\det\left(\cdot\right)}$.
} Specifically, we obtain  detailed penalty formulas for generic nonlinear panels:	\footnote{\emph{Proof sketch.} Start from the leading bias of the concentrated log-likelihood in \citet{jochmans2019likelihood}, and 
    apply the same decomposition steps as in the proof of Theorem \ref{corollary:generic_prior} to collapse this matrix expression into scalar form, which yields \textit{Penalty GE}s in closed form. A step-by-step derivation note is available upon request.}

\begin{enumerate}[label=(\roman*)]
    \item (\textit{Penalty GE-1: generic models}) \\
    $\ln p^J(\beta,\phi) = 
    \frac{1}{2} \sum_{i=1}^{N} \left[   - \frac{  \sum_{t = 1}^{T}{\mathbb{E}_{\phi}[\partial_{\pi}\ell_{it}(\beta,\pi)^2]}      }{ \sum_{t = 1}^{T}{\mathbb{E}_{\phi}\left[ - \partial_{\pi^{2}}\ell_{it}(\beta,\pi) \right]}} \right] 
    + \frac{1}{2}\sum_{t=1}^{T} \left[  -   \frac{  \sum_{i = 1}^{N}{\mathbb{E}_{\phi}[\partial_{\pi}\ell_{it}(\beta,\pi)^2]}      }{ \sum_{i = 1}^{N}{\mathbb{E}_{\phi}\left[ -\partial_{\pi^{2}}\ell_{it}(\beta,\pi) \right]}} \right]  \\
        \mathrm{\qquad\qquad\quad}  -  \sum_{i=1}^{N} \left[    \frac{  \sum_{t = 1}^{T}{\sum_{\tau=1}^{\min\{L,T-t\}}T/(T-\tau)\mathbb{E}_{\phi}[\partial_{\pi}\ell_{i,t+\tau}(\beta,\pi)\partial_{\pi}\ell_{it}(\beta,\pi)]}      }{ \sum_{t = 1}^{T}{\mathbb{E}_{\phi}\left[ -\partial_{\pi^{2}}\ell_{it}(\beta,\pi) \right]}}  \right] .
    $
    \item (\textit{Penalty GE-2: generic models}) \\
    $\ln p^J(\beta,\phi) = 
    \frac{1}{2}\sum_{i=1}^{N} \left[ \ln\left(  \sum_{t = 1}^{T}{\mathbb{E}_{\phi}\left[ -\partial_{\pi^{2}}\ell_{it}(\beta,\pi) \right]}    \right) \right]\\
    \mathrm{\qquad\qquad\quad} + \frac{1}{2}\sum_{t=1}^{T} \left[ \ln\left(  \sum_{i = 1}^{N}{\mathbb{E}_{\phi}\left[- \partial_{\pi^{2}}\ell_{it}(\beta,\pi) \right]}   \right) \right]  \\
    \mathrm{\qquad\qquad\quad}  -\frac{1}{2} \sum_{i=1}^{N} \left[ \ln\left(  \sum_{t= 1}^{T} \sum_{\tau = 1}^{T}{\mathbb{E}_{\phi}\left[ \partial_{\pi}\ell_{i\tau}(\beta,\pi) \partial_{\pi}\ell_{it}(\beta,\pi) \right]}    \right)  \right]  \\
    \mathrm{\qquad\qquad\quad}  - \frac{1}{2}\sum_{t=1}^{T} \left[ \ln\left(  \sum_{i = 1}^{N}{\mathbb{E}_{\phi}\left[ \partial_{\pi}\ell_{it}(\beta,\pi) \right]^2}   \right) \right]  .
    $
\end{enumerate} 
	
\citet{jochmans2019likelihood} illustrate an MCMC scheme that targets the joint likelihood combined with \textit{Penalty GE}, and report the corresponding posterior mean of $\beta$. 
Our theory distinguishes the roles of \textit{Penalty GE} and \textit{Prior GE}. \textit{Penalty GE} is designed to remove the leading bias of the maximized penalized joint- likelihood estimator. 
For MCMC-based posterior means, in contrast, Corollary~\ref{corollary:asy_parameter} shows that bias reduction is achieved when the likelihood is paired with \textit{Prior GE}.

Given $(\widehat{\beta},\widehat{\phi})$ from the penalized joint likelihood, the APE is computed in plug-in form and can be bias-corrected in the same way:
$
\widetilde{\Delta}( \widehat{\beta}, \widehat{\phi} )  = \Delta( \widehat{\beta}, \widehat{\phi} )    -{\mathcal{\widehat{B}}}^{\Delta}( \widehat{\beta}, \widehat{\phi} ).
$

\section{Model-Specific Bias-reducing Priors}\label{sec:model_specific}
We begin with two workhorse specifications in applied microeconometrics: binary logit and  Poisson counts. Both models are \emph{one-parameter exponential-family}  models.\footnote{
    Examples of \emph{one-parameter} exponential family models include binary logit, Poisson, and normal regression with known variance, among others. Each observation is indexed by a single natural parameter and under canonical links, the variance is pinned down by the mean.}
Precisely because of this structure, the generic priors simplify to \emph{non–data-dependent} formulas in the settings with strict exogenous covariates. 
We then carefully cover logit and Poisson models with predetermined covariates, and other widely used specifications such as dynamic AR($m$) panels, multinomial logit, and probit, allowing for either strict exogeneity or predetermined covariates.

Our aims are twofold. First, where closed-form simplifications exist, we provide explicit priors that improve small sample behavior and are easy to compute. Second, where such simplifications are unavailable, we indicate when the generic priors from Section~\ref{sec:robust_prior} are the appropriate default and when score- or estimator-based corrections may be preferable.

\subsection{One-Parameter Exponential Family Panels}	 
We describe one-parameter exponential family panel models with additive fixed effects, following \cite{mccullagh2019generalized}. 
Let the conditional density of $Y_{it}$ belong to the exponential family with natural index $X_{it}^{\prime}\beta+\pi$, and write the log-likelihood in its canonical form
\begin{equation}\label{eqn:exp_family_likelihood}
    \ell_{it}(\beta,\pi)
    = Y_{it}(X_{it}^{\prime}\beta+\pi)-\mathcal{C}_{1}(X_{it}^{\prime}\beta+\pi)+\mathcal{C}_{2}(Y_{it}), 
\end{equation}for known functions $\mathcal{C}_1$ and $\mathcal{C}_2$. 
Then, 
$\partial_{\pi}\ell_{it} = Y_{it} - \partial_{\pi}\left[\mathcal{C}_1(X_{it}'\beta_0 + \pi_{it,0})\right]$, 
$\partial_{\pi^2}\ell_{it} =  - \partial_{\pi^2}\left[\mathcal{C}_1(X_{it}'\beta_0 + \pi_{it,0})\right]$,
$\partial_{\pi^3}\ell_{it} =  - \partial_{\pi^3}\left[\mathcal{C}_1(X_{it}'\beta_0 + \pi_{it,0})\right]$,
and the mixed derivatives satisfy the canonical identities
$\partial_{\beta\pi}\ell_{it} = -\partial_{\pi^2}\ell_{it} X_{it}$, and
$\partial_{\beta\pi^2}\ell_{it} = -\partial_{\pi^3}\ell_{it} X_{it}$. 
Hence $\partial_{\pi^{2}}\ell_{it}$ and its higher $\pi$-mixed derivatives do not depend on the data $Y_{it}$.

\subsubsection{Exponential family: strict exogeneity.}	 
When $X_{it}$ is strictly exogenous, the expressions for $\Upsilon^{\theta}_{it}$ in \eqref{eqn:Tree} collapse to
$
\Upsilon_{it}^{\theta}
=	 \frac{\partial_{\theta\pi^{2}}\ell_{it} }{\sum_{\tau = 1}^{T}{\partial_{\pi^{2}}\ell_{i\tau} }}
+
\frac{\partial_{\theta\pi^{2}}\ell_{it} }{\sum_{j = 1}^{N}{\partial_{\pi^{2}}\ell_{jt} }} ,
$ for $\theta\in\{\beta, \pi\}$. 
Because $\partial_{\pi^{2}}\ell_{it}$ and its higher derivatives are $Y$-free, observed and expected forms coincide. Integrating $\Upsilon^{\theta}$ yields a non–data-dependent prior:

(\textit{Prior SE:  exponential family with strictly exogenous regressors}) 
\begin{align*}\label{eq:prior:SE}
    \scalebox{1}{
        $\ln p(\beta,\phi) = 
        \sum_{i=1}^{N} \left[ \ln\left(  \sum_{t = 1}^{T}{\left[-\partial_{\pi^{2}}\ell_{it}(\beta,\pi)\right] }    \right) \right]
        + \sum_{t=1}^{T} \left[ \ln\left(  \sum_{i = 1}^{N}{ \left[- \partial_{\pi^{2}}\ell_{it}(\beta,\pi)\right]}   \right) \right] $. }
\end{align*}

This prior exactly cancels the leading bias in the integrated log-likelihood for  one-parameter exponential families with strict exogenous covariates. While the generic priors remain valid, \textit{Prior SE} is preferable in finite samples because it avoids estimated expectations.

\subsubsection{Exponential family: predetermined regressors.}\label{subsection:dynamicexp}
Using exponential-family properties, we have that  $\mathbb{E}_{\phi}(\partial_{\theta\pi}\ell_{it}\,\partial_{\pi}\ell_{it})=0$. With predetermined variable in $X_{it}$, the cross-time term in \eqref{eqn:Tree} survives: 
$$
\Upsilon_{it}^{\theta} = 
  \tfrac{\mathbb{E}_{\phi}(\partial_{\theta\pi^{2}}\ell_{it})}{\sum_{\tau = 1}^{T}{ \mathbb{E}_{\phi}( \partial_{\pi^{2}}\ell_{i\tau} ) }}
+\tfrac{\mathbb{E}_{\phi}(\partial_{\theta\pi^{2}}\ell_{it})}{\sum_{j = 1}^{N}{\mathbb{E}_{\phi}( \partial_{\pi^{2}}\ell_{jt} )}}
+ \tfrac{\sum_{\tau=1}^{\min\{L,T-t\}}T/(T-\tau) \mathbb E_\phi(\partial_{\theta\pi}\ell_{i,t+\tau}\,\partial_\pi\ell_{it})}
  {\sum_{\tau = 1}^{T}{ \mathbb{E}_{\phi}(\partial_{\pi^{2}}\ell_{i\tau} )}},
$$
for $\theta\in\{\beta, \pi\}$. 
An exact closed-form primitive for the differential system \eqref{eqn:differentialsystem} is unavailable. Nevertheless, the following simplified prior is bias-reducing:

(\textit{Prior PE:  exponential family models with predetermined regressors})
\begin{align*}
    & \scalebox{1}{
        $\ln p(\beta,\phi) =  
        \sum_{i=1}^{N} \left[ \ln\left(  \sum_{t = 1}^{T}{\mathbb{E}_{\phi}\left[ -\partial_{\pi^{2}}\ell_{it}(\beta,\pi) \right]}    \right) \right]
        + \sum_{t=1}^{T} \left[ \ln\left(  \sum_{i = 1}^{N}{\mathbb{E}_{\phi}\left[- \partial_{\pi^{2}}\ell_{it}(\beta,\pi) \right]}   \right) \right]    $ } \\
    & \scalebox{1}{
        $\mathrm{\qquad\qquad\quad}  -  \sum_{i=1}^{N} \left[    \frac{  \sum_{t = 1}^{T}{\sum_{\tau=1}^{\min\{L,T-t\}}T/(T-\tau)\mathbb{E}_{\phi}[\partial_{\pi}\ell_{i,t+\tau}(\beta,\pi)\partial_{\pi}\ell_{it}(\beta,\pi)]}      }{ \sum_{t = 1}^{T}{\mathbb{E}_{\phi}\left[ -\partial_{\pi^{2}}\ell_{it}(\beta,\pi) \right]}}  \right] $. } 
\end{align*}
	
Relative to the generic priors, \textit{Prior PE} improves small-$N$ performance and and its finite-sample accuracy does not depend on the individual dimension. 
Its remaining finite-sample error is driven by the cross-time dependence captured by the trimmed lead--lag term.
To see this, consider $\partial_{\alpha_i}\ln p^{PE}$. A direct calculation gives the remaining mismatch between $\partial_{\alpha_i}\ln p^{PE}$ and the target $\textstyle\sum_{t=1}^T  \Upsilon^\pi_{it} $:
\begin{eqnarray*}
			\partial_{\alpha_i}   \ln p^{PE} -  \textstyle\sum_{t=1}^T  \Upsilon^\pi_{it}   & = &
	\tfrac{\sum_{t=1}^T{ \mathbb{E}_{\phi} (\partial_{\beta\pi^{2}}\ell_{it}) } \sum_{t = 1}^{T}{\sum_{\tau=1}^{\min\{L,T-t\}}\frac{T}{T-\tau}\mathbb{E}_{\phi}(\partial_{\pi}\ell_{i,t+\tau}\partial_{\beta\pi}\ell_{it})}}{  \big(\sum_{t=1}^T\mathbb E_\phi({ \partial_{\pi^{2}}\ell_{it} })\big)^2  } \\
	&& 
	+ \tfrac{\sum_{t = 1}^{T}{\sum_{\tau=1}^{\min\{L,T-t\}}\frac{T}{T-\tau}\mathbb{E}_{\phi}(\partial_{\pi}\ell_{i,t+\tau}\partial_{\pi}\ell_{it})}}{\sum_{t = 1}^{T}{ \mathbb E_\phi(\partial_{\pi^{2}}\ell_{it} )}}.
\end{eqnarray*}
Both remainder terms involve the trimmed lead--lag moments, i.e., $\mathbb{E}_{\phi}(\partial_{\pi}\ell_{i,t+\tau}\partial_{\beta\pi}\ell_{it})$ and $\mathbb{E}_{\phi}(\partial_{\pi}\ell_{i,t+\tau}\partial_{\pi}\ell_{it})$.
By the martingale difference property, these expectations are zero at the truth.
In practice, however, they must be estimated by plug-in sample analogs.
As $T$ grows, these moments are consistently estimated, and the mismatch vanishes.
When $T$ is small, they can be noisy to estimate, so a finite-$T$ discrepancy may remain even under \textit{Prior PE}.

When $T$ is small and one seeks additional accuracy, a quick \textit{add-on correction} is available. 
This  alternative  is to estimate $(\widehat{\beta}^E,\widehat{\phi}^E)$ under \textit{Prior SE} and adjust for the cross-time component
$$\mathcal{B}_i^{(\theta,1)} = \tfrac{\textstyle\sum_{t = 1}^{T}{\sum_{\tau=1}^{\min\{L,T-t\}}\frac{T}{T-\tau}\mathbb{E}_{\phi}(\partial_{\beta\pi}\ell_{i,t+\tau}\partial_{\pi}\ell_{it})}}{\sum_{t = 1}^{T}{ \mathbb{E}_{\phi}(\partial_{\pi^{2}}\ell_{it}) }}  $$
to obtain
$$
\widetilde{\beta}^E = \widehat{\beta}^E - 
\tfrac{1}{\sqrt{NT}} \widehat{{W}}^{-1}
\left( \textstyle\sum_{i=1}^{N}\widehat{\mathcal{B}}_i^{(\beta,1)}  + [\partial_{\beta\phi'}\widehat{{\mathcal{L}}}]\widehat{{\mathcal{H}}}^{-1} \widehat{\mathcal{B}}^{(\pi,1)}    \right),
\quad
\widetilde{\phi}^E = \widehat{\phi}^E(\widetilde{\beta}^E) - 
\widehat{{\mathcal{H}}}^{-1}\widehat{\mathcal{B}}^{(\pi,1)},
$$
where 
${\mathcal{B}}^{(\pi,1)}=({\mathcal{B}}_1^{(\pi,1)},\ldots,{\mathcal{B}}_N^{(\pi,1)},0_T')'$,
and the notation $\widehat{f}=f(\widehat{\beta}^E,\widehat{\phi}^E)$ denotes the plug-in fixed effects estimator.

\subsubsection{Poisson panels: strict exogeneity.}	 
After the general analysis of exponential families,  we single out the Poisson model as a worked example because we obtain an additional result. 
In this case, $\mathcal{C}_1\left(X_{it}'\beta+\pi\right) = \exp\left(X_{it}'\beta+\pi\right)$. With $\omega_{it}=\exp\left(X_{it}'\beta_0 + \pi_{it,0}\right)  $, we have
$\partial_{\pi}\ell_{it}=Y_{it}-\omega_{it}$,
$\partial_{\pi^2}\ell_{it}=\partial_{\pi^3}\ell_{it}=-\omega_{it}$,
$\partial_{\beta}\ell_{it}=\partial_{\pi}\ell_{it}X_{it}$,
and $\partial_{\beta\pi}\ell_{it}=\partial_{\beta\pi^2}\ell_{it}=\partial_{\pi^2}\ell_{it}X_{it}$.
When all covariates are strictly exogenous, substituting into Theorem \ref{thm:asy_expansions},  the bias of $\widehat{\beta}^E$ depends on
$
\mathcal{B}^{\beta}   + [\partial_{\beta\phi'}\overline{\mathcal{L}}]\overline{\mathcal{H}}^{-1} \mathcal{B}^{\phi} 
=  \frac{\partial_{\beta}\ln p}{\sqrt{NT}}  
- \frac{1}{N} \sum_{i} 	 \frac{\sum_{t}\omega_{it}X_{it} (\partial_{\alpha_i} \ln p) }{\sum_{t}{ \omega_{it}  }} 
-
\frac{1}{T} \sum_{t} \frac{\sum_{i}\omega_{it}X_{it} (\partial_{\gamma_t} \ln p) }{\sum_{i}{ \omega_{it} }}   + o_P(1);
$
and leading bias of $\widehat{\phi}^E$ depends on
$
\mathcal{B}^{\phi_k}  =  \tfrac{\partial_{\phi_k} \ln p - 2 }{\sqrt{NT}} + o_P((NT)^{-1/2}),
$ for $k=1,\ldots,\dim\phi$.

If the goal is \textit{only} to debias $\widehat{\beta}^E$, a flat prior already removes the leading bias. This finding echoes  \cite{fernandez2016individual}, who show that in the  Poisson model with strict exogenous $X_{it}$, the fixed-effects estimator of $\beta$ is asymptotically unbiased.
However, our derivations indicates that a flat prior does not remove the second-order bias in $\widehat{\phi}^E$, and as a consequence, complicates the APE debiasing (Corollary \ref{corollary:asy_APE}). In contrast, \textit{Prior SE} eliminates the leading bias in both $\widehat{\beta}^E$ and $\widehat{\phi}^E$ and, for the APE correction, leaves only the variance-driven term in Corollary \ref{corollary:asy_APE} to subtract.

\subsection{Dynamic AR($m$)  Panels}\label{sec:arp}
Dynamic linear panels provide a transparent benchmark for incidental-parameter bias.
The classic ``Nickell bias'' arises under fixed-$T$ asymptotics \citep{nickell1981bias}, while explicit bias expansions and higher-order corrections are available under large-$N$, large-$T$ asymptotics \citep{kiviet1995bias, hahnkuersteiner2002,hahn2006reducing}.
Related likelihood-based adjustments for autoregressive panels with fixed effects are studied in \citet{dhaenejochmans2016} and \citet{alvarez2022robust}.
This subsection shows that, under our bias-reducing prior framework, a simple closed-form prior is available for Gaussian AR($m$) panels even in the presence of \emph{two-way} additive effects.

We allow unrestricted time effects $\{\gamma_t\}$ to absorb aggregate shocks.
Such time effects are generally incompatible with strict stationarity of the \emph{level} process $\{Y_{it}\}$ (unless $\gamma_t$ is constant and thus absorbed by normalization).
Accordingly, our derivation does not impose stationarity on $\{Y_{it}\}$.
Instead, we impose stability of the AR polynomial on the innovation-driven component obtained after removing $\alpha_i+\gamma_t$, which ensures an MA($\infty$) representation and delivers the moment calculations below.\footnote{Throughout, the stability restriction is imposed on $\widetilde Y_{it}=Y_{it}-\alpha_i-\gamma_t$.
    In estimation, $\gamma_t$ is still included (or profiled out) to absorb common shocks, and it will not enter the leading bias terms for common parameters.}

Consider
$
Y_{it} = \mu_{1,0} Y_{i,t-1}+\cdots+\mu_{m,0} Y_{i,t-m} + \alpha_{i,0}  + \gamma_{t,0} + \varepsilon_{it},
$
with
$
(\varepsilon_{i1},\ldots,\varepsilon_{iT}) \sim\mathcal{N}(0,\sigma_{0}^{2}I_{T})
$
conditional on the true fixed effects and initial conditions.
Let $X_{it}=(Y_{i,t-1},\ldots,Y_{i,t-m})^{\prime}$ and $\beta=(\mu^{\prime},\sigma)^{\prime}$, $\mu=(\mu_1,\ldots,\mu_m)'$.
The per-observation log-likelihood is given by
$
\ell_{it}(\beta,\pi) = -\frac{1}{2}\ln(2\pi)-\frac{1}{2}\ln(\sigma^2)-\frac{1}{2}\frac{(Y_{it}-X_{it}'\mu-\pi)^2}{\sigma^2}.
$
Then
$\partial_{\pi}\ell_{it}      =  \varepsilon_{it}\sigma^{-2}$,
$\partial_{\pi^2}\ell_{it}  =  -\sigma^{-2}$,
$\partial_{\mu\pi}\ell_{it}      =  \partial_{\pi^2}\ell_{it}X_{it}$,
$\partial_{\sigma\pi}\ell_{it}  = -2\sigma^{-1}\partial_{\pi}\ell_{it}$,
$\partial_{\sigma\pi^2}\ell_{it}  = -2\sigma^{-1}\partial_{\pi^2}\ell_{it}$, and
$\partial_{\pi^3}\ell_{it} =\partial_{\mu\pi^2}\ell_{it}  = 0$. 

Assume the AR polynomial $\Phi(z)=1-\sum_{k=1}^{m}\mu_{k}z^{k}$ is stable.
We now work with the innovation driven component $\widetilde Y_{it}=Y_{it}-\alpha_i-\gamma_t$.
Let $\widetilde Y_{it}=\sum_{h\ge 0}\psi_h \varepsilon_{i,t-h}$ be the MA($\infty$) representation with $\psi_0=1$ and
$\psi_h=\sum_{l=1}^{\min(m,h)}\mu_l\,\psi_{h-l}$ for $h\ge 1$ \citep{hamilton2020time}.
The required cross-moments involve the innovation-driven component only. The fixed effects contribute a deterministic term that is orthogonal to the innovations and therefore drops out from the leading bias expressions.

Standard moments imply, for $k=1,\ldots,m$,
$
\sum_{t,\tau=1}^T \mathbb E(\widetilde Y_{i,\tau-k}\varepsilon_{it})
= \sigma_\varepsilon^2\sum_{h=0}^{T-k-1}(T-k-h)\psi_h,
$
$
\sum_{t,\tau=1}^T \mathbb E(\varepsilon_{i\tau}\varepsilon_{it}) = T\,\sigma_\varepsilon^2,
$
$
\sum_{t<\tau}^T \mathbb E(\varepsilon_{i\tau}\varepsilon_{jt})=0 (\forall i\neq j).
$
Substituting into \eqref{eqn:Tree} gives
$\Upsilon_{it}^{\pi}=0$ and
$\Upsilon_{it}^{\sigma}=0$, while for $k=1,\ldots,m$, we obtain
$
\sum_i\sum_t\Upsilon_{it}^{\mu_k}
=
\sum_{h=0}^{T-k-1}(T-k-h)\psi_h .
$
Hence, for this Gaussian AR($m$) specification, the two-way additive effects do not enter the leading bias terms that determine the bias-reducing prior.
Equivalently, the same primitive applies whether one includes only individual effects or both individual and time effects, because the relevant moments depend on $\mu$ only through the innovation-driven recursion. 

(\textit{Prior AR($m$):  dynamic  AR($m$) panels })
\begin{align*}
    \scalebox{1}{
        $\ln p(\beta,\phi) = \ln p(\mu) 
        = \sum_{t=1}^{T-1}\left( \frac{T-t}{t}	 \sum_{k=1}^m r_k^{t}(\mu) \right),  $ } 
\end{align*}
where $r_{1},\ldots,r_{m}$ are the roots of the AR polynomial $\Phi(z)=1-\sum_{k=1}^{m}\mu_{k}z^{k}$.
\footnote{\textit{Proof sketch.} Since $\sum_{t=1}^T\frac{z^t}{t}\sum_{k=1}^{m}r_k^t = - \sum_{k=1}^{m}\ln(1-r_k z)= - \ln \Phi(z)$, the log prior can be written as $\ln p = -[z^0]\{A(z)z^{-m}\ln \Phi(z)\}$, with $A(z)= \sum_{t=0}^{T-1} (T-t) z^t$ and $[z^0]\{\cdot\}$ denoting the constant term in the Laurent expansion (\cite{ahlfors1979complex} Ch. 4)  around $z=0$. For $k=1,\ldots,m$,
    using 
    $\frac{\partial}{\partial\mu_k}\ln\Phi(z)=z^k/\Phi(z)$
    and
    $1/\Phi(z)=\sum_{t\ge0}^{\infty}\psi_h z^t$ gives that
    $\partial_{\mu_k}\ln p 
    = [z^0]\left\{A(z) \frac{\partial}{\partial\mu_k}[z^{-k}\ln\Phi(z)]\right\}
    = [z^0]\{A(z)/\Phi(z)\}
    =\sum_{t=0}^{T-1}(T-t)\psi_{t-k}
    = \sum_{h=0}^{T-k-1}(T-k-h)\psi_h
    =\sum_i\Upsilon_{\alpha,i}^{\mu_k}+\sum_t\Upsilon_{\gamma,t}^{\mu_k}$,
    which is exactly the desired expression. 
}

Although \textit{Prior AR($m$)} is expressed in terms of the characteristic roots, computing those roots via convolution or fast Fourier transformation is unnecessary. The power sums $\sum_{k=1}^{m} r_k^{t}$ can be generated recursively computed using Newton’s identities \citep[e.g.,][Ch.~1]{knuth1997art}, which is numerically stable for moderate $m$. This recursion yields a simple and fast implementation for  AR models. 
For AR(2) panel model, two roots can be obtained explicitly from the quadratic formula, yielding a closed-form representation.  For AR(1), the single root is simply $r_1=\mu_1$, so we have:

(\textit{Prior AR(1):  dynamic  AR(1) panel models})
\begin{align*}
    \scalebox{1}{
        $\ln p(\beta,\phi) = \ln p(\mu) 
        = \sum_{t=1}^{T-1}\left( \frac{T-t}{t} \mu_1^t \right). $ } 
\end{align*} 
Note that Prior AR(1) coincides with \cite{lancaster2002orthogonal,arellano2009robust} in the one-way case. 
The invariance to two-way additive effects is specific to this Gaussian linear AR($m$) benchmark. In nonlinear panels, unrestricted time effects can alter the leading bias terms and may require a modified asymptotic expansion \citep[cf.][]{fernandez2016individual}.

Given the prior, a convenient point estimator is the penalized fixed-effects Gaussian MLE,
$
(\hat\mu,\hat\sigma,\hat\alpha,\hat\gamma)
\in
\arg\max_{\mu,\sigma,\alpha,\gamma}
\left\{
\sum_{i=1}^N\sum_{t=m+1}^T \ell_{it}(\beta,\alpha_i+\gamma_t)
+\ln p(\mu)
\right\}.
$
Profiling out $(\alpha,\gamma)$ yields the usual two-way within transformation. 
Let $\ddot Y_{it}$ and $\ddot X_{it}$ denote the double-demeaned outcome and regressors over $i$ and $t$.
Then the bias corrected $\hat\mu$ solves the penalized normal equations:
$$
\big(\textstyle\sum_{i,t}\ddot X_{it}\ddot X_{it}'\big)\hat\mu
=
\sum_{i,t}\ddot X_{it}\ddot Y_{it}
+\hat\sigma^{2}\,\partial_{\mu}\ln p(\hat\mu),
$$
with $\hat\sigma^{2}$ given by the profiled Gaussian variance (e.g., $\hat\sigma^{2}=(NT)^{-1}\sum_{i,t}\hat\varepsilon_{it}^{2}$).
Moreover, the gradient admits the closed form $\partial_{\mu_k}\ln p(\mu)=\sum_{h=0}^{T-k-1}(T-k-h)\psi_h$, where ${\psi_h}$ are computed by the usual AR recursion.

\subsection{Multinomial Logit Panels}	 
Multinomial logit is a standard specification for polytomous panel outcomes. 
Building on the one-way fixed effects literature \citep{chamberlain1980analysis,wooldridge2010econometric}, we consider  two-way fixed effects. 
Our approach extends seamlessly.
We assume that $Y_{it}\in\{1,\ldots,J\}$ is a polytomous categorical variable, with category $1$ taken as the baseline. 
For each non-baseline category $k=2,\ldots,J$, let
$
\pi_{it,k}=\alpha_{i,k}+\gamma_{t,k},
$
where $\alpha_{i,k}$ and $\gamma_{t,k}$ are the individual and time effects for category $k$, relative to the baseline. 
Stack these objects as
$\beta=(\beta_2',\ldots,\beta_J')',$
$\alpha_i=(\alpha_{i,2},\ldots,\alpha_{i,J})',$
$\gamma_t=(\gamma_{t,2},\ldots,\gamma_{t,J})',$
and
$
\pi_{it}=(\pi_{it,2},\ldots,\pi_{it,J})'=\alpha_i+\gamma_t.
$
Accordingly, the fixed effect parameter is now vector valued. The choice probabilities are given by
$$
\Pr\left(  Y_{it} =  k  | X_{it},\pi_{it},\beta    \right) = 
\begin{cases}
    \dfrac{1}{1 + \sum_{j=2}^J \exp\!\left( X_{it}'\beta_j + \pi_{it,j}  \right)}, & \text{if } k=1, \\[2ex]
    \dfrac{\exp\!\left( X_{it}'\beta_k +\pi_{it,k}\right)}{1 + \sum_{j=2}^J \exp\!\left( X_{it}'\beta_j + \pi_{it,j} \right)}, & \text{if }  k=2,\ldots,J,
\end{cases}
$$
and per-observation log-likelihood is
$\ell_{it}(\beta,\pi) = \sum_{k=1}^{J} 1\left\{Y_{it}=k\right\}\ln\Pr\left(Y_{it}=k|X_{it},\pi_{it},\beta\right)$. 

Here $\beta=(\beta_2',\ldots,\beta_J')'$ and $\pi_{it}=(\pi_{it,2},\ldots,\pi_{it,J})'$ are stacked vectors. The key difference is that $\pi_{it}$ is now a $(J-1)$-dimensional vector to account for each categorical choice. Consequently, all scalar second derivative objects are replaced by their matrix analogues. In particular, $\partial_{\pi^{2}}\ell_{it}$ becomes the Hessian block $\partial_{\pi\pi'}\ell_{it}$ and cross-time products use $\partial_{\pi}\ell_{i\tau}\,\partial_{\pi'}\ell_{it}$. 
With this notation upgrade, the \textit{Prior SE/PE} constructions carry over:

(\textit{Prior SML: multinomial logit model with strict exogenous regressors})
\begin{align*}
    \scalebox{0.95}{
        $\ln p(\beta,\phi) = 
        \sum_{i=1}^{N} \left[ \ln\det\left(  \sum_{t = 1}^{T}{\left[-\partial_{\pi\pi'}\ell_{it}(\beta,\pi)\right] }    \right) \right]
        + \sum_{t=1}^{T} \left[ \ln\det\left(  \sum_{i = 1}^{N}{ \left[- \partial_{\pi\pi'}\ell_{it}(\beta,\pi)\right]}   \right) \right] . $ } 
\end{align*}

(\textit{Prior PML: multinomial logit model with predetermined regressors}) 
\begin{align*}
    & \scalebox{0.95}{
        $\ln p(\beta,\phi) =  
        \sum_{i=1}^{N} \left[ \ln\det\left(  \sum_{t = 1}^{T}{\mathbb{E}_{\phi}\left[ -\partial_{\pi\pi'}\ell_{it}(\beta,\pi) \right]}    \right) \right]
        + \sum_{t=1}^{T} \left[ \ln\det\left(  \sum_{i = 1}^{N}{ \mathbb{E}_{\phi}\left[- \partial_{\pi\pi'}\ell_{it}(\beta,\pi) \right]}   \right) \right]$ } \\
    & \scalebox{0.95}{
        $\mathrm{\qquad\qquad\quad}  -  \sum_{i=1}^{N} \mathrm{tr}\bigg[    
        \left(\sum_{t = 1}^{T}{\mathbb{E}_{\phi}\left[-\partial_{\pi\pi'}\ell_{it}(\beta,\pi)\right] }\right)^{-1} 
         \sum_{t = 1}^{T}{\sum_{\tau=1}^{\min\{L,T-t\}}\frac{T}{T-\tau}\mathbb{E}_{\phi}\left[\partial_{\pi}\ell_{i,t+\tau}(\beta,\pi)\partial_{\pi'}\ell_{it}(\beta,\pi)\right]}        \bigg].$ } 
\end{align*}

\subsection{Probit Panels}	 
Let $\Phi$ denote the standard normal cdf, $\Phi_{it}=\Phi(X_{it}^{\prime}\beta+\pi)$, and 
$H_{it}=\partial\Phi_{it}/(\Phi_{it}-\Phi_{it}^2)$. The log-likelihood is $\ell_{it}(\beta,\pi)=Y_{it}\ln\Phi_{it}+(1-Y_{it})\ln\{1-\Phi_{it}\},$
so that
$\partial_{\pi}\ell_{it} = H_{it} (Y_{it}-\Phi_{it} )$, 
$\partial_{\pi^2}\ell_{it} =   - H_{it}  \partial\Phi_{it} + \frac{\partial H_{it}}{H_{it}} \partial_{\pi}\ell_{it}  $,
$\partial_{\pi^3}\ell_{it} =   - H_{it}  \partial^2\Phi_{it} - 2\partial H_{it} \partial\Phi_{it} + \partial^2H_{it} (Y_{it}-\Phi_{it} )                 $, 
$\partial_{\beta\pi}\ell_{it} = - \partial_{\pi^2}\ell_{it} X_{it}$, and 
$\partial_{\beta\pi^2}\ell_{it} = - \partial_{\pi^3}\ell_{it} X_{it}$.
Also, $\mathbb{E}_{\phi}(\partial_{\pi^2}\ell_{it}\,\partial_{\pi}\ell_{it})=\mathbb{E}_{\phi}(\partial H_{it} \partial \Phi_{it})$.
Substituting these  into \eqref{eqn:Tree} gives
$$
\Upsilon_{it}^{\pi}
= 
\tfrac{\mathbb{E}_{\phi}  (H_{it}\partial^2\Phi_{it}   +\partial H_{it} \partial \Phi_{it} )   }{\sum_{\tau}{ \mathbb{E}_{\phi} (H_{i\tau}\partial\Phi_{i\tau}) }} +
\tfrac{  \mathbb{E}_{\phi}  (H_{it}\partial^2\Phi_{it}   +\partial H_{it} \partial \Phi_{it} )      }{\sum_{j}{ \mathbb{E}_{\phi} (H_{jt}\partial\Phi_{jt} )}}   
+
\tfrac{{\sum_{\tau>t} \mathbb{E}_{\phi} (H_{i\tau}\partial\Phi_{i\tau}\partial_{\pi}\ell_{it}})}{\sum_{\tau}{\mathbb{E}_{\phi}( H_{i\tau}\partial\Phi_{i\tau} )}}  .
$$

Carrying through the algebra shows that an explicit non–data-dependent prior that precisely cancels the bias is not available, even in the strict exogenous case. The generic priors (\textit{GE-1}/\textit{GE-2}) are therefore the default for probit panels. With small $N$ or small $T$, score or estimator-based corrections may yield tighter finite-sample performance.

\subsection{Guidance for Practitioners: Choosing a Prior and a Correction}\label{sec:subsecGuidance}
As a default, we recommend the robust priors  (GE-1/GE-2) given in Section~\ref{sec:robust_prior}. They remove the leading incidental parameter bias for a wide range of nonlinear panels. The only data dependence arises through terms of the form $\mathbb{E}_{\phi}(\cdot)$ and primarily reflects the cross-time dependence inherent in panel data context. 
In practice, whenever analytical expectations are available, they typically deliver better small-sample stability. When no closed form is available, the observed (sample-analog) form provides a convenient alternative, though it may be less stable in small samples.

With strictly exogenous regressors, one-parameter exponential family panels (logit, Poisson, and multinomial logit) work best with the \textit{non-data-dependent} priors \textit{SE}/\textit{SML}. These priors match the relevant bias exactly and usually deliver tighter finite-sample performance.
With predetermined regressors, the dedicated priors \textit{PE}/\textit{PML} are recommended. The remaining time-driven component vanishes as $T$ grows.
When $T$ is small, a simple improvement is to estimate under \textit{SE}/\textit{SML} and then apply the add-on correction $\mathcal{B}^{(\theta,1)}$ to account for the cross-time component.

For dynamic AR($m$) panels, the explicit \textit{Prior AR($m$)}  is best: it is non–data-dependent, fixed-$T$ consistent, and easy to compute via power-sum recursions. 
\textit{Prior AR($m$)} itself can be used as the joint-likelihood penalty as well. 
In probit models, no closed-form model-specific prior is available. 
The safe choice for probit panel models is GE-1/GE-2. With small $N$ or $T$, one may consider score or estimator corrections.

\begin{table}[h!]
    \caption{Recommended model-specific priors and practical guidance}\label{Table:allprior}
    \centering
    {\fontsize{8.0pt}{12pt}\selectfont
        \begin{tabular}{l c c c l}
            \toprule
            \multirow{2}{*}{Model} 
            & \multicolumn{1}{c}{Regressor} 
            & \multicolumn{1}{c}{Recommended} 
            & \multicolumn{1}{c}{Data-dependent} 
            & \multicolumn{1}{c}{\multirow{2}{*}{Practical note}} \\
            & \multicolumn{1}{c}{exogeneity} 
            & \multicolumn{1}{c}{prior} 
            & \multicolumn{1}{c}{prior} 
            & \\
            \midrule
            Generic models                & Either   & GE-1/GE-2  & \checkmark & Large-$N,T$ correction; broad applicability \\
            Logit                         & Strict   & SE         & $\times$   & Best finite-sample behavior \\
            Poisson                       & Strict   & SE/Flat prior & $\times$   & Use SE when debiasing $\phi$ and APEs \\
            Logit/Poisson                 & Predet   & PE         & \checkmark & Small-$N$ gains; SE plus add-on correction if $T$ small \\
            Multinomial logit             & Strict   & SML        & $\times$   & Matrix analogue of SE \\
            Multinomial logit             & Predet   & PML        & \checkmark & Matrix analogue of PE \\
            Probit                        & Either   & GE-1/GE-2  & \checkmark & Use score/estimator corrections for small $N,T$ \\
            Dynamic AR($m$)               & Predet   & AR($m$)    & $\times$   & Fixed-$T$ consistent; $\mathcal{O}(Tm)$ evaluation \\
            \bottomrule
        \end{tabular}
        \legend{ ``Strict'' indicates strictly exogenous regressors. ``Predet'' indicates predetermined regressors (e.g., lagged outcomes). ``Either'' indicates that the recommended prior applies under either ``Strict'' or ``Predet'' assumption.}
    }
\end{table}

Table \ref{Table:allprior} lists bias-reducing priors for the \emph{integrated} likelihood. For \emph{joint} likelihood corrections, as discussed in Section \ref{sec:subsecPenaltyforJoint}, the only additional component when moving from the joint to the integrated likelihood is the $\tfrac{1}{2}\ln\det\!\big(\overline{\mathcal{H}}(\beta,\phi)\big)$ term. The same logic delivers model-specific penalties by inserting the model’s Hessian into this term. The algebra is omitted for brevity.

\section{Monte Carlo Experiments}\label{sec:montecarlo}
This section studies the finite sample performance of our likelihood-based bias corrections. We consider four distinct nonlinear panel models: ordered logit, binary logit, multinomial logit, and probit, considering both static designs with strictly exogenous regressors and dynamic designs with predetermined regressors (including lagged dependent variables). These models span settings where model-specific priors are available (e.g., binary logit) as well as cases where only generic corrections are  feasible. 
To our knowledge, bias correction for ordered and multinomial response panels with two-way fixed effects has not previously been studied.  See \citet{honore2025dynamic} for related work on dynamic ordered logit panels with individual effects.

\subsection{Data-generating processes}\label{subsec:DGP}
We specify four data generating processes (DGPs) that encompass the models of interest. The design largely follows \citet{fernandez2016individual}. In all cases, the unobserved individual effects $\alpha_i$ and time effects $\gamma_t$ are drawn independently from $\mathcal{N}(0, 0.25^2)$. The exogenous covariate $Z_{it}$ evolves according to an AR(1) process:
$
Z_{it} = 0.5 Z_{i,t-1} + \alpha_i + \gamma_t + v_{it}, 
$
where $v_{it} \sim \mathcal{N}(0, 0.5),$ and $Z_{i0} \sim \mathcal{N}(0, 1)$.

For binary and ordered designs, outcomes are generated from a latent representation. 
In the static specification,
\[
Y_{it}^* = Z_{it} \beta_Z + \alpha_i + \gamma_t + \varepsilon_{it},
\]
and we set  $\beta_Z = 1$. In the dynamic case we introduce state dependence via
\[
Y_{it}^* =  Y_{i,t-1} \beta_Y + Z_{it} \beta_Z  + \alpha_i + \gamma_t + \varepsilon_{it},
\]
with $\beta_Z = 1$ and $\beta_Y = 0.5$.\footnote{The multinomial logit design uses an alternative-specific index and is described separately below.} The  disturbance $\varepsilon_{it}$ is standard logistic in logit-family designs and standard normal in probit.
Given $Y_{it}^*$, the observed outcome $Y_{it}$ is constructed as described next:

(1) \textit{Binary response models (logit/probit)}. The outcome is an indicator function of the latent variable:
$
Y_{it} = \mathbf{1}\{Y_{it}^* > 0\}.
$

(2) \textit{Ordered logit models}. The outcome takes values in $\{1, 2, 3, 4\}$. We generate the observed responses by discretizing the latent variable $Y_{it}^*$ according to three true cutoffs $\tau_1=-2.5$, $\tau_2=0.5$, and $\tau_3=2.5$:
$
Y_{it} = \sum_{k=1}^{4} k \cdot \mathbf{1}\left\{ \tau_{k-1} < Y_{it}^* \leq \tau_{k} \right\},
$
with $\tau_{0} = -\infty$ and $\tau_4 = \infty$. 
In estimation, the cutoffs are treated as common parameters. Since the index is identified only up to location, we fix $\tau_1$ and estimate $\beta=(\beta_Z,\beta_Y,\tau_2,\tau_3)'$. Note that the  cutoffs $\tau_2$ and $\tau_3$ are subject to incidental parameter bias, which our method corrects. In our simulations, the choice of cutoffs ensures that the ordered-response likelihood is well behaved under two-way effects, in the sense of the identification condition in Assumption~\ref{assumption:main}.\footnote{
For the ordered logit, Assumption~\ref{assumption:main}(v) rules out asymptotically individuals or time periods with outcomes concentrated in a single category, and implies local identification (a nonsingular Hessian in a neighborhood) with probability approaching one as $N,T\to\infty$.}

(3) \textit{Multinomial logit models}. We consider a setting with $J=3$ alternatives, where $Y_{it} \in \{1, 2, 3\}$. The probability of observing outcome $k$ is given by the standard multinomial logit formulation with base category 1:
\[
\Pr(Y_{it} = k \mid \cdot) = \exp(V_{it,k}) / \big(  1 + \textstyle\sum_{j=2}^{3} \exp(V_{it,j})    \big), \quad \text{for } k \in \{2, 3\},
\]
where the linear index $V_{it,k} = X_{it}'\beta_k + \alpha_{i,k} + \gamma_{t,k}$. Here, $\beta_k$ represents the alternative specific coefficients, and $\alpha_{i,k}, \gamma_{t,k}$ are the alternative specific fixed effects. In the static case, $X_{it} = Z_{it}$; in the dynamic case, $X_{it} = (\mathbf{1}\{Y_{i,t-1}=1\}, Z_{it})'$.

We set $N=45$ and consider $T\in\{15,45\}$,\footnote{Whenever a correction requires estimating cross-time spectral expectations, we set $L=1$ for $T=15$ and $L=2$ for $T=45$, in line with the small-lag choices commonly used in practice.} with 500 replications. Across designs we compare: (i) the uncorrected fixed-effects MLE; (ii) the analytical and jackknife corrections of \citet{fernandez2016individual} (FW16) when available; and (iii)
our bias-reducing estimators, implemented either as posterior mean under the corresponding  prior or as a penalized likelihood estimators. 
We report Monte Carlo bias, standard deviation (SD), the ratio of average standard errors to SD, and empirical 95\% coverage. APEs are computed using the plug-in estimator  described in Section~\ref{sec:modelandestimators}.\footnote{
    All simulation designs and estimators reported in this section can be reproduced using our open-source Python package \texttt{twowaypanel}. 
    In particular, the package implements the four model classes considered here under both static and dynamic specifications, and provides likelihood-based prior/penalty corrections together with inference for parameters and APEs. 
    For binary logit and probit panels, it also includes the analytical bias correction of FW16. 
}

\subsection{Monte Carlo results}\label{subsec:MC_results}
\emph{Ordered logit.}
We begin with ordered logit panels. 
Table \ref{Sim:ologit} reports the results for both static and dynamic specifications. 
The uncorrected estimator shows severe bias not only in the slope coefficients ($\beta$) but also in the cutoff parameters ($\tau$). 
Our likelihood-based corrections (\textit{GE-1}) reduce these biases sharply and restore coverage close to the nominal level, even in the dynamic specification. 
For average partial effects, the static model shows only modest bias to begin with, whereas the dynamic model displays large distortions. In both cases the corrections deliver well calibrated inference. 
A key advantage is that the procedure treats $\beta$ and the ancillary cutoffs $\tau$ in a unified way, avoiding derivation of specific bias formulas for each parameter.

\emph{Binary logit.}
Table \ref{Sim:logit} reports results for the binary logit model, a one-parameter exponential-family specification.
For the static model, we use the model-specific \textit{SE} correction, which is non-data-dependent and cancels the leading bias in the integrated likelihood. Consistent with this property, prior/penalty \textit{ SE} is nearly unbiased already at $T=15$, with no visible variance inflation.
For the dynamic logit, we use the dedicated \textit{PE} correction, which delivers substantial bias reduction with minimal implementation burden. At $T=15$, prior \textit{PE} cuts the large bias in $\beta_{Y_{t-1}}$ from $-64.1\%$ to $-13.0\%$. With short $T$, the remaining bias is largely cross-time. Accordingly, the add-on adjustment (\textit{SE+AC}) further improves performance (to $-5.6\%$ and $-3.7\%$). By $T=45$, residual bias under \textit{PE} is small (about $-4.2\%$), and \textit{SE+AC} remains close to the analytical benchmark. The same ordering holds for APEs. All likelihood-based corrections improve bias and coverage, with \textit{SE+AC} more helpful when $T$ is small.

\emph{Multinomial logit.}
We next consider the multinomial logit model (Table \ref{Sim:mlogit}). This model extends the one-parameter exponential family to a vector-valued setting, increasing the number of nuisance parameters. Despite this added complexity, the matrix versions \textit{Prior SML} (static) and \textit{Prior PML} (dynamic) substantially reduce the bias in the coefficients and yield well-calibrated inference. In the dynamic case with $T=15$, the uncorrected estimator for $\beta_Y$ exhibits a large bias of $-60.6\%$. The Prior PML reduces this to $-8.3\%$, while maintaining a standard error close to the uncorrected estimator. This demonstrates the robustness of our approach in handling vector-valued outcomes and higher-dimensional fixed effects parameters.

\emph{Binary Probit.}
In probit panels (Table~\ref{Sim:probit}), GE-1 reduces bias relative to the uncorrected estimator and improves coverage. While the bias reduction is significant, it is slightly less complete than in the logit cases for small $T$. For example, in the dynamic probit model with $T=15$, the bias in $\beta_{Y_{t-1}}$ is reduced from $-39.8\%$ (Uncorrected) to $-15.3\%$ (Prior GE-1), compared to $-4.6\%$ for FW16. However, as $T$ increases to 45, the performance gap closes, and the Prior GE-1 estimator becomes nearly unbiased. This pattern is consistent with our theoretical predictions of the generic correction for binary probit panel, which relies on large-$N,T$ asymptotics. 

\emph{Summary.} 
Across all four model classes, the proposed likelihood-based corrections provide a simple and reusable route to debiasing two-way nonlinear panels, including ordered and multinomial response models where alternative corrections were previously unavailable. For standard logit models, the model-specific priors deliver excellent finite-sample performance, while for probit models, the generic priors provide a robust improvement that converges rapidly as the time dimension grows.

\section{Empirical Illustration: Female Labor Force Participation }\label{sec:empirical}
To illustrate the utility of our method, we revisit the classic relationship between fertility and female labor force participation. This relationship is central to labor economics but is difficult to identify due to the joint determination of fertility decisions and labor supply (see, e.g., \citealp{angrist1998children}). In contrast to earlier one-way applications \citep[e.g.,][]{fernandez2009fixed}, we incorporate both individual heterogeneity (e.g., preferences for work) and aggregate time shocks (e.g., business cycles) via two-way fixed effects.

The data come from the Panel Study of Income Dynamics (PSID), covering calendar years 1979--1988 (waves 13--22). Our estimation window is 1980--1988 (9 waves) and the 1979 participation outcome is retained only to initialize the lagged dependent variable in the dynamic specification. The sample is restricted to women aged 18--60 in 1985 who remained married to a husband in the labor force throughout the sample period. 
The full sample contains 1,461 women observed for 9 years. Because the logit model with two-way additive fixed effects is not identified from individuals whose participation status never changes over the panel, estimation of the index parameters uses the subsample of 664 ``switchers'' (women who change participation status at least once).\footnote{In switcher subsample, about 16\% are black women; average age in 1985 is 35.6; average schooling is 12 years; the mean participation rate is 57\%. Average numbers of children per woman are 0.28 (ages 0--2), 0.36 (ages 3--5), and 1.11 (ages 6--17).}

We consider static and dynamic two-way fixed-effects logit models. Let $Y_{it}$ denote labor force participation and write $\beta=(\beta_Y,\beta_Z')'$ and
$X_{it}=(Y_{i,t-1},Z_{it}')'$, where $Z_{it}$ includes the numbers of children aged 0--2, 3--5, and 6--17, and log husband's earnings. We estimate
$$
	\Pr(Y_{it}=1\mid X_{it},\alpha_i,\gamma_t)
	=
	F\!\left(X_{it}'\beta+\alpha_i+\gamma_t\right),
	\qquad
	F(v)={e^{v}}/{(1+e^{v})},
$$
using the $664$ switchers ($i=1,\ldots,664$) observed over $t=1,\ldots,9$.

Although index parameters are estimated on switchers, we report population-scaled averages over the full panel of $1461\times 9$ person-years. In our data, 121 women never participate and 676 always participate over 1980--1988. 
For these 797 ``stayers'', their fixed effects index $\alpha_i$ diverges. As a result, the logit derivative $F'(\cdot)$ vanishes and so do the  partial effects.\footnote{
	For a continuous $X_{it,k}$, the per-observation partial effect is
	$
	\Delta_{it,k}(\beta,\phi)=\beta_k\,F'\!\left(X_{it}'\beta+\alpha_i+\gamma_t\right),
	$
	and for a binary regressor $X_{it,k}$, it is defined by
	$
	\Delta_{it,k}(\beta,\phi)
	=
	F\!\left(X_{it,-k}'\beta_{-k}+\beta_k+\alpha_i+\gamma_t\right)
	-
	F\!\left(X_{it,-k}'\beta_{-k}+\alpha_i+\gamma_t\right).
	$ 
	When $|\alpha_i|\to\infty$, $\Delta_{it,k}(\beta,\phi)\to 0$.
}
Accordingly, for APEs it is without loss to average over switchers and scale by the population size
$$
\Delta(\beta,\phi):=\tfrac{1}{1461\times 9}\textstyle\sum_{i=1}^{664}\sum_{t=1}^{9} \Delta(X_{it},\beta,\alpha_i+\gamma_t).
$$

In the dynamic specification, we follow \cite{browning2014dynamic} and define one-step transition probabilities in a two-state first order Markov representation. Specifically, conditional on $(Z_{it},\alpha_i+\gamma_t)$:
\begin{eqnarray*}
	P_{01}(Z_{it},\beta,\alpha_i+\gamma_t)
	&=&\Pr(Y_{it}=1\mid Y_{i,t-1}=0,Z_{it},\alpha_i+\gamma_t)	=F(Z_{it}'\beta_Z+\alpha_i+\gamma_t), \\
	P_{11}(Z_{it},\beta,\alpha_i+\gamma_t)
	&=&\Pr(Y_{it}=1\mid Y_{i,t-1}=1,Z_{it},\alpha_i+\gamma_t)	=F(Z_{it}'\beta_Z+\beta_Y+\alpha_i+\gamma_t).
\end{eqnarray*}

The \emph{marginal dynamic effect} is the difference in transition probabilities,
$$
\Delta_{M}(Z_{it},\beta,\alpha_i+\gamma_t)
:=P_{11}(Z_{it},\beta,\alpha_i+\gamma_t)-P_{01}(Z_{it},\beta,\alpha_i+\gamma_t),
$$
which coincides exactly with the partial effect $\Delta(X_{it},\beta,\alpha_i+\gamma_t)$ for the discrete change in the lagged indicator $Y_{i,t-1}$. Hence the reported APE for $Participation_{t-1}$ in Table~\ref{table:laborStaticResults} can be read as the \emph{average marginal dynamic effect} (AMDE): 
$$
\Delta_M(\beta,\phi)=\tfrac{1}{1461\times 9}\textstyle\sum_{i=1}^{664}\sum_{t=1}^{9} \Delta_{M}(Z_{it},\beta,\alpha_i+\gamma_t).
$$

\citet{browning2014dynamic} also report the \emph{long-run proportion of unit values} implied by a time-homogeneous two state Markov chain. In our two-way fixed effects setting, $\gamma_t$ is unrestricted and the implied transitions need not be time invariant. We therefore evaluate the long-run proportion \emph{locally} by freezing $(Z_{it},\alpha_i+\gamma_t)$ at each $(i,t)$ and applying their stationary formula:
\[
\Delta_{L}(Z_{it},\alpha_i+\gamma_t)
:=
\frac{P_{01}(Z_{it},\alpha_i+\gamma_t)}{1-P_{11}(Z_{it},\alpha_i+\gamma_t)+P_{01}(Z_{it},\alpha_i+\gamma_t)}.
\]
We summarize this local \emph{long-run participation probability} by the same population-scaled normalization used for APEs,
\[
\Delta_L(\beta,\phi)
=\tfrac{1}{1461\times 9}\textstyle\sum_{i=1}^{664}\sum_{t=1}^{9}\Delta_{L,it}(Z_{it},\alpha_i+\gamma_t).
\]

Table~\ref{table:laborStaticResults} reports results for four estimators: (i) the uncorrected fixed effects MLE; (ii) the analytical correction of \citet{fernandez2016individual} (FW16); (iii) the bias-reducing penalized likelihood estimator (Penalty); and (iv) our posterior-mean estimator (Prior).\footnote{
    For the cross-time spectral expectations entering the dynamic corrections we use $L=1$ given the short panel length ($T=9$).
} We employ the closed form corrections (SE) for the static specification and (PE) for the dynamic case. For the prior-based approach, we report the posterior plug-in estimates for the APEs. Standard errors are derived from analytical asymptotic formulae for the likelihood based estimators.

The posterior quantities were computed using the algorithm detailed in Appendix \ref{app:mcmc}.\footnote{As an empirical demonstration, we deliberately run a long MCMC chain so that the reported estimates are numerically stable in this high-dimensional two-way panel setting.
    Specifically, we run 150,000 iterations, discard the first 50,000 as burn-in, and thin every 20th draw thereafter.
    The reported estimates are not sensitive to shorter runs.} Convergence diagnostics indicate excellent performance. The \citet{geweke1992evaluating} test fails to reject the null hypothesis of stationarity for the key dynamic parameters (Table \ref{table:geweke}), and Figure~\ref{fig:empirical_combined} shows stable traces and fast-decaying autocorrelation, confirming good mixing.

Turning to Table~\ref{table:laborStaticResults}, the uncorrected estimates in both specifications exhibit the inflated bias predicted by the theory. All three corrected estimators move the coefficients in the same direction, which is consistent with bias reduction.
In the static model, the estimates from FW16, Penalty SE, and Prior SE are very similar, and the standard errors are almost unchanged. This provides a useful benchmark. In a setting where an analytical correction is already available, the likelihood-based implementations lead to essentially the same empirical conclusions.

The dynamic model is also the case in which the likelihood-based approach goes beyond the existing analytical correction. Using Prior PE and its penalized analogue, the largest changes arise for the state-dependence parameter and its associated APE. The APE for Participation$_{t-1}$, equivalently the average marginal dynamic effect $\Delta_M$ in \citet{browning2014dynamic}, increases under all bias corrections. This points to stronger persistence once incidental-parameter bias is taken into account. 

The likelihood-based correction extends directly to nonlinear dynamic functionals. In particular, Table \ref{table:laborStaticResults} reports the local long-run participation probability $\Delta_L$. We do not report an FW16-corrected counterpart because their analytical correction is derived for the index parameters and APEs, whereas $\Delta_L$ is a nonlinear Markov functional and would require a separate derivation. In our framework, once the corrected model parameters are available, $\Delta_L$ follows directly by plug-in together with the same variance-type adjustment. The similarity of the penalty and prior estimates for $\Delta_L$ is also reassuring for the substantive conclusion.

\section{Concluding Remarks}\label{sec:conclusion}
This paper develops a practical integrated likelihood approach to bias reduction in nonlinear panel models with additive two-way fixed effects under large-$N,T$ asymptotics. Because maximizing the integrated likelihood is infeasible in this setting, we implement the correction through posterior means under a carefully constructed prior, while maintaining an entirely frequentist interpretation.

Our main theoretical contribution is to make integrated-likelihood debiasing operational in the two-way regime, where classical Laplace approximations can fail. A target-centered full-exponential Laplace expansion, together with a cumulant-based argument, yields valid asymptotic approximations for posterior moments.  This delivers robust priors that are bias-reducing for both common parameters and fixed effects without variance inflation, and supports valid inference for APE after a simple variance-type adjustment. For empirical practice, we derive model-specific priors and companion joint-likelihood penalties for workhorse specifications and provide implementation guidance. 

Our approach opens several avenues for future research.  The theoretical machinery developed here, particularly the handling of slow-converging nuisance parameters via sparse cumulant expansions, could be adapted to models with interactive fixed effects or network dependencies. Extending the framework to unbalanced panels or to semiparametric settings, and developing corresponding bias-reducing priors, would further expand the set of tools available to empirical researchers.

	\newpage
	\section*{Figures \& Tables}
	
	\begin{figure}[h]
		\centering
		\begin{subfigure}{0.8\linewidth}
			\centering
			\includegraphics[width=\linewidth]{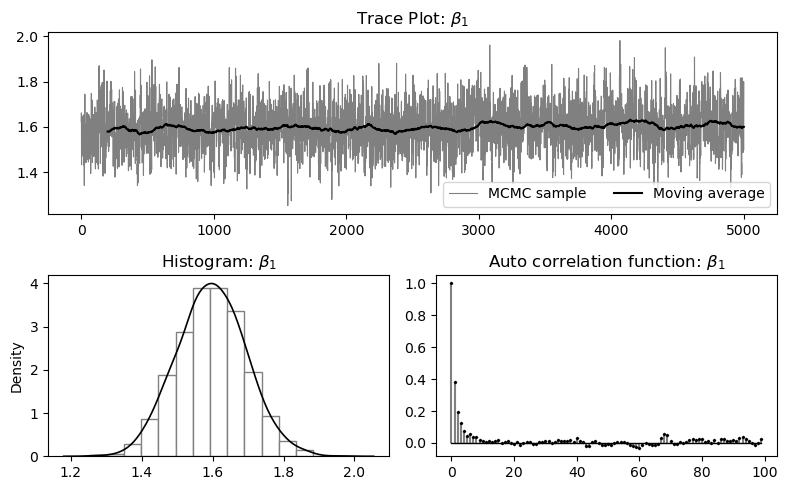}
			\caption{Diagnostic plots of coefficient on \textit{Participation}$_{t-1}$}
			\label{fig:empirical_participation}
		\end{subfigure}
		
		\vspace{1em}
		
		\begin{subfigure}{0.8\linewidth}
			\centering
			\includegraphics[width=\linewidth]{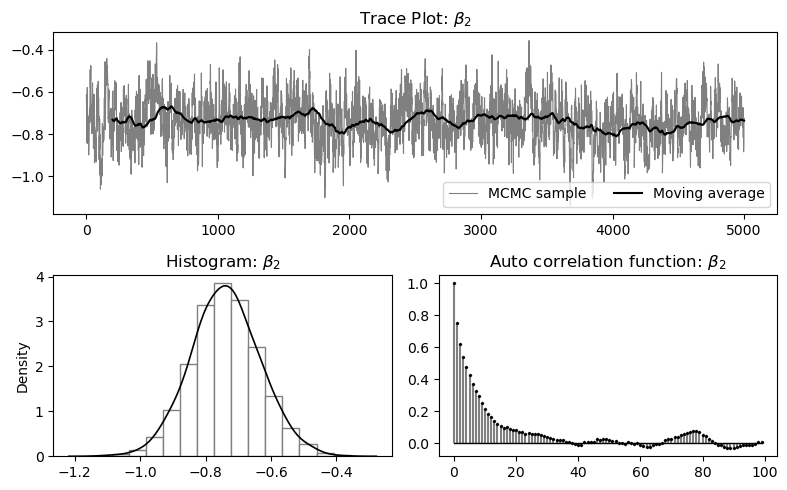}
			\caption{Diagnostic plots of coefficient on \textit{Kids 0--2}}
			\label{fig:empirical_kids}
		\end{subfigure}
		
		\caption{Female labor force participation --- MCMC diagnostics for selected coefficients in the dynamic logit model}
		\label{fig:empirical_combined}
	\end{figure}

	\begin{landscape}
		{\fontsize{10pt}{12.7pt}\selectfont
			\begin{longtable}{>{\centering\arraybackslash}m{0.2cm}p{2.3cm}rcccp{1cm}rcccp{1cm}rccc} 
				\caption{Finite-sample properties in  static and dynamic ordered logit models ($N=45$)} \label{Sim:ologit} \\
				\toprule
				\multicolumn{2}{c}{} 
				& \multicolumn{4}{c}{Model coefficients} 
				& 
				& \multicolumn{4}{c}{\begin{tabular}[c]{@{}c@{}}Average partial effects\\on $\Pr(Y=2|\cdot)$\end{tabular}} 
				& 
				& \multicolumn{4}{c}{\begin{tabular}[c]{@{}c@{}}Average partial effects\\on $\Pr(Y=4|\cdot)$\end{tabular}} \\ 
				\cline{3-6} \cline{8-11} \cline{13-16}
				\multicolumn{2}{c}{} 
				& Bias  & SD & SE/SD & $\widehat{p}_{.95}$ 
				& 
				& Bias  & SD & SE/SD & $\widehat{p}_{.95}$ 
				& 
				& Bias & SD & SE/SD & $\widehat{p}_{.95}$  \\
				\midrule
				
				\multicolumn{16}{l}{\cellcolor{gray!7}\textit{A: Static ordered logit model}} \\
				\hline
				& & \multicolumn{13}{l}{$T=15$} \\
				\cline{3-16}
				\multirow[c]{3}{*}{$\beta_{Z}$}
				& Uncorrected      & 6.3  & 0.12 & 0.96 & 0.91    & & -2.4  & 0.02 & 0.91 & 0.90   & & -0.5  & 0.02 & 0.91 & 0.91 \\
				& Prior GE-1       & -0.5 & 0.11 & 0.99 & 0.95    & & -2.1  & 0.02 & 0.85 & 0.89   & & -0.4  & 0.02 & 0.89 & 0.90 \\
				& Penalty GE-1     & -0.4 & 0.11 & 0.99 & 0.95    & & -1.1  & 0.02 & 0.85 & 0.89   & & -0.2  & 0.02 & 0.90 & 0.91 \\
				\cdashline{1-16}
				\multirow[c]{3}{*}{$\tau_{2}$}
				& Uncorrected      & 39.3 & 0.16 & 1.03 & 0.80    & &      &      &      &        & &      &      &      &       \\
				& Prior GE-1       & -1.8 & 0.15 & 1.03 & 0.97    & &      &      &      &        & &      &      &      &       \\
				& Penalty GE-1     & 3.0  & 0.15 & 1.02 & 0.96    & &      &      &      &        & &      &      &      &       \\
				\cdashline{1-16}
				\multirow[c]{3}{*}{$\tau_{3}$}
				& Uncorrected      & 13.1 & 0.23 & 0.95 & 0.69    & &      &      &      &        & &      &      &      &       \\
				& Prior GE-1       & -0.2 & 0.22 & 0.95 & 0.94    & &      &      &      &        & &      &      &      &       \\
				& Penalty GE-1     & 1.2  & 0.22 & 0.94 & 0.95    & &      &      &      &        & &      &      &      &       \\
				\cline{1-16}
				
				& & \multicolumn{13}{l}{$T=45$} \\
				\cline{3-16}
				\multirow[c]{3}{*}{$\beta_{Z}$}
				& Uncorrected      & 3.1  & 0.06 & 0.98 & 0.92    & & -1.0 & 0.01 & 0.96 & 0.94   & & 0.1  & 0.01 & 0.93 & 0.93 \\
				& Prior GE-1       & -0.3 & 0.06 & 1.00 & 0.94    & & -1.3 & 0.01 & 0.92 & 0.93   & & -0.6 & 0.01 & 0.93 & 0.92 \\
				& Penalty GE-1     & 0.2  & 0.06 & 1.00 & 0.94    & & -0.5 & 0.01 & 0.92 & 0.93   & & -0.5 & 0.01 & 0.93 & 0.93 \\
				\cdashline{1-16}
				\multirow[c]{3}{*}{$\tau_{2}$}
				& Uncorrected      & 19.2 & 0.10 & 0.97 & 0.84    & &      &      &      &        & &      &      &      &       \\
				& Prior GE-1       & 0.5  & 0.12 & 0.98 & 0.94    & &      &      &      &        & &      &      &      &       \\
				& Penalty GE-1     & 1.1  & 0.09 & 0.98 & 0.94    & &      &      &      &        & &      &      &      &       \\
				\cdashline{1-16}
				\multirow[c]{3}{*}{$\tau_{3}$}
				& Uncorrected      & 6.5  & 0.13 & 0.97 & 0.74    & &      &      &      &        & &      &      &      &       \\
				& Prior GE-1       & -0.4 & 0.12 & 0.98 & 0.95    & &      &      &      &        & &      &      &      &       \\
				& Penalty GE-1     & 0.5  & 0.12 & 0.98 & 0.94    & &      &      &      &        & &      &      &      &       \\
				\midrule
				
				\multicolumn{16}{l}{\cellcolor{gray!7}\textit{B: Dynamic ordered logit model}} \\
				\hline
				& & \multicolumn{13}{l}{$T=15$} \\
				\cline{3-16}
				\multirow[c]{3}{*}{$\beta_{Y}$}
				& Uncorrected      & -33.2 & 0.19 & 0.93 & 0.83    & & -39.6 & 0.02 & 0.94 & 0.72    & & -37.1 & 0.02 & 0.93 & 0.77 \\
				& Prior GE-1       & -5.7  & 0.18 & 0.97 & 0.94    & & -7.7  & 0.02 & 0.91 & 0.91    & & -6.7  & 0.02 & 0.97 & 0.92 \\
				& Penalty GE-1     & -5.4  & 0.18 & 0.97 & 0.94    & & -7.3  & 0.02 & 0.90 & 0.91    & & -7.1  & 0.02 & 0.96 & 0.92 \\
				\cdashline{1-16}
				\multirow[c]{3}{*}{$\beta_{Z}$}
				& Uncorrected      & 8.6 & 0.12 & 0.95 & 0.87    & & -1.3 & 0.02 & 0.93 & 0.93    & & 1.7 & 0.02 & 0.89 & 0.92 \\
				& Prior GE-1       & 1.2 & 0.11 & 0.98 & 0.96    & & -2.6 & 0.02 & 0.87 & 0.91    & & 0.3 & 0.02 & 0.88 & 0.91 \\
				& Penalty GE-1     & 1.8 & 0.11 & 0.98 & 0.96    & & -1.8 & 0.02 & 0.87 & 0.92    & & 0.4 & 0.02 & 0.88 & 0.91 \\
				\cdashline{1-16}
				\multirow[c]{3}{*}{$\tau_{2}$}
				& Uncorrected      & 41.5 & 0.20 & 0.91 & 0.78    & &      &      &      &        & &      &      &      &       \\
				& Prior GE-1       & 3.8  & 0.18 & 0.91 & 0.94    & &      &      &      &        & &      &      &      &       \\
				& Penalty GE-1     & -8.5 & 0.19 & 0.91 & 0.97    & &      &      &      &        & &      &      &      &       \\
				\cdashline{1-16}
				\multirow[c]{3}{*}{$\tau_{3}$}
				& Uncorrected      & 13.8 & 0.25 & 0.91 & 0.66    & &      &      &      &        & &      &      &      &       \\
				& Prior GE-1       & 1.3  & 0.18 & 0.91 & 0.94    & &      &      &      &        & &      &      &      &       \\
				& Penalty GE-1     & 1.8  & 0.11 & 0.98 & 0.96    & &      &      &      &        & &      &      &      &       \\
				\cline{1-16}
				
				& & \multicolumn{13}{l}{$T=45$} \\
				\cline{3-16}
				\multirow[c]{3}{*}{$\beta_{Y}$}
				& Uncorrected      & -12.5 & 0.10 & 1.00 & 0.92    & & -16.4 & 0.01 & 0.99 & 0.84    & & -14.8 & 0.01 & 1.00 & 0.86 \\
				& Prior GE-1       & -2.9  & 0.10 & 1.01 & 0.96    & & -3.5  & 0.01 & 0.95 & 0.93    & & -3.2  & 0.01 & 1.00 & 0.95 \\
				& Penalty GE-1     & -2.8  & 0.10 & 1.01 & 0.96    & & -3.4  & 0.01 & 0.95 & 0.93    & & -3.1  & 0.01 & 1.00 & 0.95 \\
				\cdashline{1-16}
				\multirow[c]{3}{*}{$\beta_{Z}$}
				& Uncorrected      & 4.3 & 0.06 & 0.94 & 0.87    & & -0.2 & 0.01 & 0.97 & 0.95    & & 1.2 & 0.01 & 0.95 & 0.94 \\
				& Prior GE-1       & 0.9 & 0.06 & 0.96 & 0.94    & & -0.5 & 0.01 & 0.93 & 0.94    & & 0.6 & 0.01 & 0.93 & 0.93 \\
				& Penalty GE-1     & 0.9 & 0.06 & 0.95 & 0.94    & & -0.4 & 0.01 & 0.94 & 0.95    & & 0.5 & 0.01 & 0.93 & 0.94 \\
				\cdashline{1-16}
				\multirow[c]{3}{*}{$\tau_{2}$}
				& Uncorrected      & 19.0 & 0.10 & 1.02 & 0.86    & &      &      &      &        & &      &      &      &       \\
				& Prior GE-1       & 1.0  & 0.09 & 1.02 & 0.94    & &      &      &      &        & &      &      &      &       \\
				& Penalty GE-1     & 2.1  & 0.09 & 1.01 & 0.94    & &      &      &      &        & &      &      &      &       \\
				\cdashline{1-16}
				\multirow[c]{3}{*}{$\tau_{3}$}
				& Uncorrected      & 6.4 & 0.12 & 1.01 & 0.77     & &      &      &      &        & &      &      &      &       \\
				& Prior GE-1       & 0.4 & 0.12 & 1.01 & 0.95     & &      &      &      &        & &      &      &      &       \\
				& Penalty GE-1     & 0.6 & 0.12 & 1.01 & 0.96     & &      &      &      &        & &      &      &      &       \\
				\bottomrule
				\multicolumn{16}{@{}p{\dimexpr\linewidth-2\tabcolsep\relax}@{}}{%
					\scriptsize\quad\emph{Notes:} 500 replications. ``Uncorrected'' is the standard fixed-effects MLE. 
					``FW16'' is the analytical correction of \citet{fernandez2016individual}. 
					``Penalty'' and ``Prior'' are the likelihood-based corrections proposed in this paper, implemented with the model-specific choices indicated in the row labels.
					Bias is reported in percent; SD is the Monte Carlo standard deviation; $\widehat p_{.95}$ is the empirical 95\% coverage. 
					Posterior means are computed by MCMC (26{,}000 iterations, 6{,}000 burn-in, thinning every 10 draws).%
				}
			\end{longtable}
		} 
	\end{landscape}

	{\fontsize{10pt}{11pt}\selectfont 
		\begin{longtable}{>{\centering\arraybackslash}m{0.6cm}p{2.8cm}rcccp{1cm}rccc}
			\caption{Finite-sample properties in  static and dynamic logit models ($N=45$)} \label{Sim:logit} \\
			\toprule
			\multicolumn{2}{c}{} & \multicolumn{4}{c}{Model coefficients} & & \multicolumn{4}{c}{Average partial effects} \\ 
			\cline{3-6} \cline{8-11}
			\multicolumn{2}{c}{} & Bias  & SD & SE/SD & $\widehat{p}_{.95}$ & & Bias & SD & SE/SD & $\widehat{p}_{.95}$  \\
			\midrule
			
			\multicolumn{11}{l}{\cellcolor{gray!7}\textit{A: Static logit model}} \\
			\hline
			& & \multicolumn{8}{l}{$T=15$} \\
			\cline{3-11}
			\multirow[c]{4}{*}{$\beta_{Z}$}
			& Uncorrected         & 10.1 & 0.15 & 0.94 & 0.89    & & -0.6 & 0.02 & 0.96 & 0.94 \\
			& FW16 Analytical     & -1.0 & 0.13 & 1.06 & 0.97    & & -1.4 & 0.02 & 0.97 & 0.94 \\
			& FW16 Jackknife     & -4.1 & 0.16 & 0.87 & 0.90    & & -1.8 & 0.03 & 0.77 & 0.85 \\
			& Prior SE            & -1.0 & 0.13 & 1.01 & 0.95    & & -1.0 & 0.02 & 0.97 & 0.94 \\
			& Penalty SE          & -0.2 & 0.13 & 1.01 & 0.95    & & -0.6 & 0.02 & 0.97 & 0.94 \\
			\cline{1-11}
			
			& & \multicolumn{8}{l}{$T=45$} \\
			\cline{3-11}
			\multirow[c]{4}{*}{$\beta_{Z}$}
			& Uncorrected         & 4.8  & 0.08 & 0.98 & 0.90    & & -0.2 & 0.01 & 1.02 & 0.95 \\
			& FW16 Analytical     & -0.4 & 0.07 & 1.03 & 0.94    & & -0.4 & 0.01 & 1.02 & 0.96 \\
			& FW16 Jackknife     & -1.0 & 0.08 & 0.95 & 0.92    & & -0.4 & 0.01 & 0.83 & 0.89 \\
			& Prior SE            & -0.9 & 0.07 & 1.01 & 0.94    & & -0.6 & 0.01 & 1.02 & 0.95 \\
			& Penalty SE          & -0.1 & 0.07 & 1.01 & 0.94    & & -0.2 & 0.01 & 1.02 & 0.95 \\
			\midrule
			
			\multicolumn{11}{l}{\cellcolor{gray!7}\textit{B: Dynamic logit model}} \\
			\hline
			& & \multicolumn{8}{l}{$T=15$} \\
			\cline{3-11}
			\multirow[c]{6}{*}{$\beta_{Y_{t-1}}$}
			& Uncorrected         & -64.1 & 0.21 & 0.98 & 0.66    & & -67.9 & 0.04 & 0.99 & 0.55 \\
			& FW16 Analytical     & -1.3  & 0.19 & 1.09 & 0.97    & & -1.6  & 0.04 & 0.95 & 0.93 \\
			& FW16 Jackknife     &  5.5 & 0.24 & 0.88 & 0.91    & & -10.7 & 0.04 & 0.80 & 0.87 \\
			& Prior SE+AC         & -5.6  & 0.19 & 1.06 & 0.96    & & -5.7  & 0.04 & 0.98 & 0.94 \\
			& Penalty SE+AC       & -3.7  & 0.19 & 1.06 & 0.96    & & -3.8  & 0.04 & 0.98 & 0.94 \\
			& Prior PE            & -13.0 & 0.19 & 1.05 & 0.95    & & -13.9 & 0.04 & 1.00 & 0.95 \\
			& Penalty PE          & -11.6 & 0.19 & 1.05 & 0.95    & & -12.9 & 0.04 & 1.00 & 0.94 \\
			\cdashline{1-11}
			\multirow[c]{6}{*}{$\beta_{Z}$}
			& Uncorrected         & 15.5 & 0.16 & 0.91 & 0.82    & & 2.8  & 0.02 & 0.98 & 0.94 \\
			& FW16 Analytical     & 0.7  & 0.14 & 1.06 & 0.98    & & -0.5 & 0.02 & 1.00 & 0.95 \\
			& FW16 Jackknife     & -3.2 & 0.17 & 0.87 & 0.90    & &0.8 & 0.03 & 0.81 & 0.88 \\
			& Prior SE+AC         & 0.6  & 0.14 & 1.00 & 0.97    & & -0.3 & 0.02 & 1.00 & 0.95 \\
			& Penalty SE+AC       & 1.1  & 0.14 & 1.00 & 0.97    & & 0.0  & 0.02 & 1.00 & 0.96 \\
			& Prior PE            & 2.7  & 0.14 & 0.99 & 0.96    & & 0.4  & 0.02 & 0.99 & 0.95 \\
			& Penalty PE          & 3.1  & 0.14 & 0.99 & 0.95    & & 0.3  & 0.02 & 0.99 & 0.95 \\
			\cline{1-11}
			
			& & \multicolumn{8}{l}{$T=45$} \\
			\cline{3-11}
			\multirow[c]{6}{*}{$\beta_{Y_{t-1}}$}
			& Uncorrected         & -19.7 & 0.11 & 1.00 & 0.87    & & -24.3 & 0.02 & 1.00 & 0.79 \\
			& FW16 Analytical     & -0.3  & 0.11 & 1.03 & 0.96    & & -0.5  & 0.02 & 0.97 & 0.95 \\
			& FW16 Jackknife     & -0.2 & 0.11 & 1.01 & 0.95    & & -2.2 & 0.02 & 0.95 & 0.93 \\
			& Prior SE+AC         & -2.3  & 0.11 & 1.03 & 0.96    & & -2.4  & 0.02 & 0.99 & 0.95 \\
			& Penalty SE+AC       & -0.7  & 0.11 & 1.02 & 0.96    & & -0.8  & 0.02 & 0.98 & 0.95 \\
			& Prior PE            & -4.2  & 0.11 & 1.03 & 0.96    & & -4.4  & 0.02 & 1.00 & 0.95 \\
			& Penalty PE          & -2.5  & 0.11 & 1.02 & 0.96    & & -3.0  & 0.02 & 0.99 & 0.95 \\
			\cdashline{1-11}
			\multirow[c]{6}{*}{$\beta_{Z}$}
			& Uncorrected         & 7.2 & 0.09 & 0.89 & 0.80    & & 1.5 & 0.01 & 0.94 & 0.93 \\
			& FW16 Analytical     & 0.7 & 0.08 & 0.95 & 0.95    & & 0.2 & 0.01 & 0.94 & 0.94 \\
			& FW16 Jackknife     & -0.4 & 0.09 & 0.88 & 0.91    & & 0.2 & 0.01 & 0.81 & 0.89 \\
			& Prior SE+AC         & 0.1 & 0.08 & 0.93 & 0.94    & & -0.2 & 0.01 & 0.93 & 0.94 \\
			& Penalty SE+AC       & 0.8 & 0.08 & 0.93 & 0.94    & & 0.3 & 0.01 & 0.94 & 0.94 \\
			& Prior PE            & 0.7 & 0.08 & 0.93 & 0.94    & & 0.0 & 0.01 & 0.94 & 0.93 \\
			& Penalty PE          & 1.3 & 0.08 & 0.92 & 0.94    & & 0.3 & 0.01 & 0.94 & 0.94 \\
			\bottomrule
			
			\multicolumn{11}{l}{\legend{Same as Table~\ref{Sim:ologit}. ``SE+AC'' denotes SE augmented with the add-on cross-time correction in Section~\ref{subsection:dynamicexp}.
			}}
		\end{longtable}
	}

	\begin{landscape}
		{\fontsize{9pt}{12.5pt}\selectfont
			\begin{longtable}{>{\centering\arraybackslash}m{0.2cm}lcccccccc p{0.1cm}cccccccc}
				\caption{Finite-sample properties in  static and dynamic panel multinomial logit models ($N=45$)}\label{Sim:mlogit} \\
				\toprule
				\multicolumn{2}{c}{} & \multicolumn{8}{c}{Model coefficients} & & \multicolumn{8}{c}{Average partial effects} \\ 
				\cline{3-10} \cline{12-19}
				\multicolumn{2}{c}{} 
				& \multicolumn{4}{c}{on $\Pr(Y=2|\cdot)$} & \multicolumn{4}{c}{on $\Pr(Y=3|\cdot)$} 
				& 
				& \multicolumn{4}{c}{on $\Pr(Y=2|\cdot)$} & \multicolumn{4}{c}{on $\Pr(Y=3|\cdot)$} \\ 
				\cline{3-6} \cline{7-10} \cline{12-15} \cline{16-19}
				\multicolumn{2}{c}{}
				& Bias  & SD & SE/SD & $\widehat{p}_{.95}$ & Bias  & SD & SE/SD & $\widehat{p}_{.95}$ 
				& 
				& Bias  & SD & SE/SD & $\widehat{p}_{.95}$ & Bias  & SD & SE/SD & $\widehat{p}_{.95}$ \\
				\midrule
				
				\multicolumn{19}{l}{\cellcolor{gray!7} \textit{A: Static panel multinomial logit model}} \\
				\hline
				& & \multicolumn{16}{l}{$T=15$} \\
				\cline{3-19}
				\multirow[c]{3}{*}{$\beta_{Z}$}
				& Uncorrected      & 12.7 & 0.17 & 0.94 & 0.88 & 12.6 & 0.18 & 0.90 & 0.87 & & 5.5 & 0.02 & 1.00 & 0.94 & 5.2 & 0.02 & 0.98 & 0.95 \\
				& Prior SML        & 1.6  & 0.16 & 1.01 & 0.95 & 2.0  & 0.16 & 0.96 & 0.94 & & 3.4 & 0.02 & 0.99 & 0.94 & 4.7 & 0.02 & 0.96 & 0.94 \\
				& Penalty SML      & 2.0  & 0.15 & 1.01 & 0.95 & 1.9  & 0.16 & 0.97 & 0.94 & & 4.3 & 0.02 & 0.98 & 0.93 & 3.9 & 0.02 & 0.96 & 0.94 \\
				\cline{1-19}
				
				& & \multicolumn{16}{l}{$T=45$} \\
				\cline{3-19}
				\multirow[c]{3}{*}{$\beta_{Z}$}
				& Uncorrected      & 5.8  & 0.09 & 0.96 & 0.91 & 5.3  & 0.09 & 0.97 & 0.91 & & 7.0 & 0.01 & 1.06 & 0.94 & 5.5 & 0.01 & 1.04 & 0.95 \\
				& Prior SML        & -0.1 & 0.08 & 1.00 & 0.96 & -0.4 & 0.08 & 1.02 & 0.95 & & 5.3 & 0.01 & 1.04 & 0.94 & 4.2 & 0.01 & 1.04 & 0.95 \\
				& Penalty SML      & 0.8  & 0.08 & 1.00 & 0.95 & 0.3  & 0.08 & 1.01 & 0.95 & & 6.1 & 0.01 & 1.04 & 0.94 & 4.6 & 0.01 & 1.02 & 0.95 \\
				\midrule
				
				\multicolumn{19}{l}{\cellcolor{gray!7} \textit{B: Dynamic panel multinomial logit model}} \\
				\hline
				& & \multicolumn{16}{l}{$T=15$} \\
				\cline{3-19}
				\multirow[c]{3}{*}{$\beta_{Y}$}
				& Uncorrected      & -60.6 & 0.25 & 0.98 & 0.77 & 11.0 & 0.26 & 0.95 & 0.93 & & -29.5 & 0.02 & 0.96 & 0.86 & -29.1 & 0.02 & 0.97 & 0.88 \\
				& Prior PML        & -8.3  & 0.23 & 1.06 & 0.95 & 3.0  & 0.24 & 1.03 & 0.96 & & -2.4  & 0.02 & 0.99 & 0.94 & -1.7  & 0.02 & 1.00 & 0.96 \\
				& Penalty PML      & -8.1  & 0.23 & 1.05 & 0.95 & 3.2  & 0.24 & 1.02 & 0.96 & & -2.0  & 0.02 & 0.98 & 0.94 & -1.5  & 0.02 & 0.99 & 0.94 \\
				\cdashline{1-19}
				\multirow[c]{3}{*}{$\beta_{Z}$}
				& Uncorrected      & 14.4 & 0.18 & 0.93 & 0.87 & 13.2 & 0.18 & 0.93 & 0.88 & & 7.5 & 0.02 & 0.96 & 0.93 & 3.6 & 0.02 & 0.96 & 0.94 \\
				& Prior PML        & 3.9  & 0.16 & 1.00 & 0.94 & 4.2  & 0.16 & 0.99 & 0.95 & & 4.0 & 0.02 & 0.96 & 0.93 & 5.6 & 0.03 & 0.95 & 0.94 \\
				& Penalty PML      & 3.5  & 0.16 & 1.00 & 0.94 & 3.6  & 0.16 & 0.99 & 0.95 & & 4.0 & 0.02 & 0.95 & 0.93 & 4.7 & 0.03 & 0.94 & 0.93 \\
				\cline{1-19}
				
				& & \multicolumn{16}{l}{$T=45$} \\
				\cline{3-19}
				\multirow[c]{3}{*}{$\beta_{Y}$}
				& Uncorrected      & -16.3 & 0.14 & 0.98 & 0.90 & 6.2 & 0.13 & 1.05 & 0.96 & & -6.1 & 0.01 & 1.04 & 0.95 & -5.8 & 0.01 & 1.03 & 0.96 \\
				& Prior PML        & -0.2  & 0.13 & 1.00 & 0.95 & 3.8 & 0.12 & 1.09 & 0.97 & & 3.6  & 0.01 & 1.04 & 0.95 & 4.7  & 0.01 & 1.01 & 0.96 \\
				& Penalty PML      & -0.3  & 0.13 & 1.00 & 0.95 & 2.1 & 0.12 & 1.09 & 0.97 & & 3.7  & 0.01 & 1.05 & 0.95 & 4.1  & 0.01 & 1.04 & 0.95 \\
				\cdashline{1-19}
				\multirow[c]{3}{*}{$\beta_{Z}$}
				& Uncorrected      & 6.4 & 0.09 & 1.00 & 0.90 & 6.0 & 0.09 & 0.95 & 0.88 & & 6.7 & 0.01 & 1.02 & 0.94 & 5.4 & 0.01 & 0.97 & 0.93 \\
				& Prior PML        & 1.4 & 0.07 & 1.04 & 0.96 & 1.2 & 0.08 & 0.98 & 0.94 & & 4.8 & 0.01 & 1.02 & 0.95 & 5.5 & 0.01 & 0.99 & 0.94 \\
				& Penalty PML      & 1.1 & 0.08 & 1.04 & 0.96 & 1.1 & 0.09 & 0.98 & 0.94 & & 4.8 & 0.01 & 1.01 & 0.95 & 5.4 & 0.01 & 0.95 & 0.93 \\
				\bottomrule
				
				\multicolumn{19}{@{}p{\dimexpr\linewidth-2\tabcolsep\relax}@{}}{%
					\scriptsize\quad\emph{Notes:} Same as Table~\ref{Sim:ologit}. 
				}
			\end{longtable}
		}
	\end{landscape}

	{\fontsize{10pt}{13pt}\selectfont 
		\begin{longtable}{>{\centering\arraybackslash}m{0.6cm}p{3.4cm}rcccp{1cm}rccc}
			\caption{Finite-sample properties in static and dynamic probit models ($N=45$)} \label{Sim:probit} \\
			\toprule
			\multicolumn{2}{c}{} & \multicolumn{4}{c}{Model coefficients} & & \multicolumn{4}{c}{Average partial effects} \\ 
			\cline{3-6} \cline{8-11}
			\multicolumn{2}{c}{} & Bias & SD & SE/SD & $\widehat{p}_{.95}$ & & Bias & SD & SE/SD & $\widehat{p}_{.95}$  \\
			\midrule
			
			\multicolumn{11}{l}{\cellcolor{gray!7}\textit{A: Static probit model}} \\
			\hline
			& & \multicolumn{8}{l}{$T=15$} \\
			\cline{3-11}
			\multirow[c]{4}{*}{$\beta_{Z}$} 
			& Uncorrected           & 13.3 & 0.12 & 0.92 & 0.76    & & 0.7  & 0.02 & 1.03 & 0.96 \\
			& FW16 Analytical       & 0.4  & 0.10 & 1.09 & 0.97    & & -0.1 & 0.02 & 1.03 & 0.96 \\
			& FW16 Jackknife      & -3.5  & 0.12 & 0.89 & 0.91    & & 0.3  & 0.02 & 0.87 & 0.91 \\
			& Prior GE-1            & 2.9  & 0.10 & 0.98 & 0.95    & & 1.4  & 0.02 & 1.00 & 0.95 \\
			& Penalty GE-1          & 4.0  & 0.10 & 0.98 & 0.94    & & 1.4  & 0.02 & 1.01 & 0.95 \\
			\cline{1-11}
			
			& & \multicolumn{8}{l}{$T=45$} \\
			\cline{3-11}
			\multirow[c]{4}{*}{$\beta_{Z}$} 
			& Uncorrected           & 5.5  & 0.06 & 0.94 & 0.85    & & 0.1  & 0.01 & 1.00 & 0.94 \\
			& FW16 Analytical       & -0.2 & 0.05 & 1.01 & 0.95    & & -0.1 & 0.01 & 1.00 & 0.95 \\
			& FW16 Jackknife      & -1.1  & 0.06 & 0.95 & 0.93    & & 0.1  & 0.01 & 0.96 & 0.94 \\
			& Prior GE-1            & 0.1  & 0.05 & 0.98 & 0.95    & & 0.2  & 0.01 & 0.98 & 0.95 \\
			& Penalty GE-1          & 0.6  & 0.06 & 0.97 & 0.94    & & 0.3  & 0.01 & 0.98 & 0.94 \\
			\midrule
			
			\multicolumn{11}{l}{\cellcolor{gray!7}\textit{B: Dynamic probit model}} \\
			\hline
			& & \multicolumn{8}{l}{$T=15$} \\
			\cline{3-11}
			\multirow[c]{4}{*}{$\beta_{Y_{t-1}}$} 
			& Uncorrected           & -39.8 & 0.15 & 1.00 & 0.74    & & -48.7 & 0.03 & 1.01 & 0.53 \\
			& FW16 Analytical       & -4.6  & 0.13 & 1.14 & 0.97    & & -5.6  & 0.03 & 0.97 & 0.93 \\
			& FW16 Jackknife      & 3.7  & 0.17 & 0.87 & 0.92    & & -9.3  & 0.04 & 0.79 & 0.87 \\
			& Prior GE-1            & -15.3 & 0.14 & 1.08 & 0.94    & & -19.1 & 0.03 & 1.04 & 0.90 \\
			& Penalty GE-1          & -14.7 & 0.14 & 1.07 & 0.94    & & -19.2 & 0.03 & 1.04 & 0.90 \\
			\cdashline{1-11}
			\multirow[c]{4}{*}{$\beta_{Z}$} 
			& Uncorrected           & 18.9  & 0.13 & 0.89 & 0.66    & & 3.3 & 0.02 & 1.03 & 0.94 \\
			& FW16 Analytical       & 1.7   & 0.11 & 1.09 & 0.97    & & 0.3 & 0.02 & 1.06 & 0.96 \\
			& FW16 Jackknife      & -5.4  & 0.14 & 0.83 & 0.86    & & 1.7  & 0.03 & 0.89 & 0.92 \\
			& Prior GE-1            & 7.0   & 0.12 & 0.95 & 0.92    & & 2.1 & 0.02 & 1.02 & 0.94 \\
			& Penalty GE-1          & 7.9   & 0.12 & 0.94 & 0.90    & & 2.0 & 0.02 & 1.02 & 0.95 \\
			\cline{1-11}
			
			& & \multicolumn{8}{l}{$T=45$} \\
			\cline{3-11}
			\multirow[c]{4}{*}{$\beta_{Y_{t-1}}$} 
			& Uncorrected           & -11.1 & 0.08 & 0.99 & 0.90    & & -16.9 & 0.02 & 1.00 & 0.82 \\
			& FW16 Analytical       & -0.3  & 0.08 & 1.03 & 0.95    & & -0.3  & 0.02 & 0.97 & 0.94 \\
			& FW16 Jackknife      & 0.3  & 0.09 & 0.92 & 0.92    & & -0.8  & 0.02 & 0.88 & 0.91 \\
			& Prior GE-1            & -2.6  & 0.08 & 1.01 & 0.95    & & -3.3  & 0.02 & 1.01 & 0.94 \\
			& Penalty GE-1          & -2.2  & 0.08 & 1.01 & 0.95    & & -2.0  & 0.02 & 1.01 & 0.95 \\
			\cdashline{1-11}
			\multirow[c]{4}{*}{$\beta_{Z}$} 
			& Uncorrected           & 7.7   & 0.06 & 0.96 & 0.73    & & 1.3 & 0.01 & 1.06 & 0.95 \\
			& FW16 Analytical       & 0.3   & 0.06 & 1.05 & 0.97    & & 0.1 & 0.01 & 1.07 & 0.96 \\
			& FW16 Jackknife      &-1.2  & 0.06 & 0.95 & 0.94    & & 0.1  & 0.01 & 1.02 & 0.95 \\
			& Prior GE-1            & 1.3   & 0.06 & 1.01 & 0.96    & & 0.3 & 0.01 & 1.04 & 0.95 \\
			& Penalty GE-1          & 1.8   & 0.06 & 1.00 & 0.95    & & 0.4 & 0.01 & 1.04 & 0.95 \\
			\bottomrule
			
			\multicolumn{11}{l}{\legend{Same as Table~\ref{Sim:ologit}. }}
		\end{longtable}
	}

	\newpage
	
	{\fontsize{9.5pt}{10.0pt}\selectfont
		\begin{longtable}{
				>{\raggedright\arraybackslash}p{5.0cm}
				>{\centering\arraybackslash}p{2.3cm}
				>{\centering\arraybackslash}p{2.3cm}
				>{\centering\arraybackslash}p{2.3cm}
				>{\centering\arraybackslash}p{2.3cm}
			}
			\caption{Female labor force participation --- Static and dynamic logit models}\label{table:laborStaticResults} \\
			\toprule
			& \multirow{2}{*}{Uncorrected}  & \multicolumn{3}{c}{Corrected}  \\
			\cline{3-5}
			&  & FW16 & Penalty & Prior \\
			\cline{2-5}
			& (i) & (ii) & (iii) & (iv) \\
			\hline
			
			\multicolumn{5}{l}{\cellcolor{gray!7}\textit{A: Static logit model}} \\
			\hline
			\multicolumn{5}{l}{\textit{\underline{Index coefficients ($\beta$)}}} \\
			Kids 0-2           & -1.174***  & -1.027***  & -1.032*** & -0.970*** \\
			& (0.098)    & (0.098)    & (0.093)   & (0.091)   \\
			Kids 3-5           & -0.591***  & -0.518***  & -0.520*** & -0.482*** \\
			& (0.086)    & (0.086)    & (0.082)   & (0.081)   \\
			Kids 6-17          & -0.016     & -0.013     & -0.013    & -0.001    \\
			& (0.061)    & (0.061)    & (0.058)   & (0.057)   \\
			ln(Husband income)   & -0.405***  & -0.357***  & -0.358*** & -0.377*** \\
			& (0.094)    & (0.094)    & (0.090)   & (0.089)   \\
			
			\multicolumn{5}{l}{\textit{\underline{Average partial effects}}} \\
			Kids 0-2           & -0.089**   & -0.098**   & -0.093**  & -0.092**  \\
			& (0.041)    & (0.041)    & (0.040)   & (0.038)   \\
			Kids 3-5           & -0.045**   & -0.049**   & -0.047**  & -0.046**  \\
			& (0.021)    & (0.021)    & (0.021)   & (0.020)   \\
			Kids 6-17          & -0.001     & -0.001     & -0.001    & 0.000     \\
			& (0.005)    & (0.005)    & (0.005)   & (0.005)   \\
			ln(Husband income)   & -0.031*    & -0.034**   & -0.032**  & -0.036**  \\
			& (0.016)    & (0.016)    & (0.015)   & (0.016)   \\
			\hline
			
			\multicolumn{5}{l}{\cellcolor{gray!7}\textit{B: Dynamic logit model}} \\
			\hline
			\multicolumn{5}{l}{\textit{\underline{Index coefficients ($\beta$)}}} \\
			Participation$_{t-1}$ & 1.278***  & 1.730***  &  1.687*** &  1.597***\\
			& (0.072)   & (0.072)     &  (0.072)  &  (0.071) \\
			Kids 0-2            & -0.904*** & -0.710***   & -0.768*** & -0.739*** \\
			& (0.101)   & (0.101)     &    (0.1)  &  (0.098) \\
			Kids 3-5            & -0.379*** & -0.253***   &  -0.27*** & -0.264*** \\
			& (0.090)   & (0.090)     &  (0.089)  &  (0.087) \\
			Kids 6-17           & 0.052     & 0.056       &     0.065 &     0.055 \\
			& (0.063)   & (0.063)     &  (0.063)  &  (0.061) \\
			ln(Husband income)  & -0.407*** & -0.343***   & -0.372*** & -0.287*** \\
			& (0.096)   & (0.096)     &  (0.095)  &  (0.092) \\
			
			\multicolumn{5}{l}{\textit{\underline{Average partial effects}}} \\
			Participation$_{t-1}$ (AMDE)   & 0.107***  & 0.206***  & 0.176*** & 0.172*** \\
			& (0.007)   & (0.007)   & (0.007)  & (0.007) \\
			Kids 0-2                & -0.064**  & -0.067**  & -0.057** & -0.058** \\
			& (0.030)   & (0.030)   & (0.026)  & (0.026) \\
			Kids 3-5               	& -0.027*   & -0.024*   &   -0.02* &  -0.021* \\
			& (0.014)   & (0.014)   & (0.011)  & (0.011) \\
			Kids 6-17              	& 0.004     & 0.005     &    0.005 &    0.004 \\
			& (0.005)   & (0.005)   & (0.005)  & (0.005) \\
			ln(Husband income)      & -0.029*   & -0.032**  & -0.028** &  -0.022* \\
			& (0.015)   & (0.015)   & (0.014)  & (0.012) \\

			\multicolumn{5}{l}{\textit{\underline{Markov summary (long-run proportion)}}} \\
			Long-run participation prob.  & 0.262**  & -  & 0.266** & 0.271** \\
			& (0.121)   &    & (0.120)  & (0.119) \\
			\bottomrule
			
			\multicolumn{5}{l}{\legend{
					Data source: PSID 1980--1988. All specifications include individual and time fixed effects. 
					``Uncorrected'' is the standard fixed-effects maximum likelihood estimator. 
					``FW16'' refers to the analytical bias correction of \citet{fernandez2016individual}. 
					"Penalty" and "Prior" refer to the likelihood-based corrections proposed in this paper, using \textit{SE} specifications for the static one-parameter exponential model and \textit{PE} for the dynamic case. 
					Asymptotic standard errors are in parentheses. ***, **, and * indicate significance at the 1\%, 5\%, and 10\% levels.
			}}
		\end{longtable}
	}

	\begin{table}[]
		\caption{Female labor force participation --- Convergence diagnosis}\label{table:geweke}
		\centering
		{\fontsize{10.0pt}{11pt}\selectfont
			\begin{tabularx}{\linewidth}{
					>{\raggedright\arraybackslash}X
					>{\centering\arraybackslash}p{0.16\linewidth}
					>{\centering\arraybackslash}p{0.20\linewidth}
					>{\centering\arraybackslash}p{0.20\linewidth}
				}
				\toprule
				& \multirow{2}{*}{Acceptance(\%)}  & \multicolumn{2}{c}{\cite{geweke1992evaluating} Convergence Diagnostic Test}  \\
				\cline{3-4}
				&  & Average $p$-value & Smallest $p$-value \\
				\hline
				\multicolumn{4}{l}{\cellcolor{gray!7}\textit{A: Static logit model}} \\
				\hline
				Kids 0–2             & 49.3\%         & 0.868           & 0.652            \\
				Kids 3–5             & 48.0\%         & 0.838           & 0.591            \\
				Kids 6–17            & 51.9\%         & 0.794           & 0.487            \\
				ln(Husband income)   & 36.8\%         & 0.658           & 0.331            \\
				\hline
				\multicolumn{4}{l}{\cellcolor{gray!7}\textit{B: Dynamic logit model}} \\
				\hline
				$\text{Participation}_{t-1}$& 38.5\%        & 0.897       & 0.770    \\
				Kids 0–2             		& 44.3\%        & 0.903       & 0.724    \\
				Kids 3–5             		& 43.3\%        & 0.891       & 0.744    \\
				Kids 6–17            		& 45.1\%        & 0.879       & 0.620    \\
				ln(Husband income)   		& 28.2\%        & 0.644       & 0.093    \\
				\bottomrule 
		\end{tabularx}}
		{\legend{Acceptance(\%) reports the Metropolis--Hastings acceptance rate for the corresponding parameter. 
				\cite{geweke1992evaluating} diagnostics comparing the first 10\% of retained draws to 20 batches from the last 50\%. 
				Average and smallest $p$-values summarize the batch-wise Geweke tests; high values indicate no evidence against convergence.}}
	\end{table}

	
	\newpage
	
	\begin{appendix}

		\section{Computation of Posterior Means and APEs}\label{app:mcmc}
		We adopt a Metropolis--Hastings (MH) sampler with \emph{random blocking} and a \emph{self-adaptive} proposal scheme to compute the posterior means $(\widehat\beta^E,\widehat\phi^E)$ and the associated APE estimators.
		Because two-way additive effects are only identified up to one normalization,
		we impose  $\alpha_1=0$ and estimate the remaining parameters $
		\theta
		=
		(\beta',\alpha_2,\ldots,\alpha_N,\gamma_1,\ldots,\gamma_T)'
		\in\mathbb R^{\dim(\beta)+N+T-1}
		$.\footnote{Alternative normalizations (e.g., $\gamma_1=0$ or
			$\sum_i\alpha_i=0$) are equivalent.}
		
		We use a random-walk MH algorithm, but \emph{do not} update the full parameter vector $\theta$ at each iteration. 
		Since the dimension of parameters is large, a full-dimensional random-walk proposal yields very low acceptance rates of the posterior.
		This is consistent with the standard high-dimensional behavior of random-walk MH: as the proposal dimension increases, the step size must shrink (roughly at rate $d^{-1/2}$ for $d$ updated coordinates) to maintain a non-degenerate acceptance probability, which in turn leads to poor mixing.
		To address this, we adopt \emph{random blocking}. At each iteration we update a small subset of coordinates, allowing the Markov chain to move while still accounting for the dependence across parameters.
		
		The block size $B$ is a tuning parameter. In our implementation, we may set $B=10$ or draw $B$ uniformly from $\{8,9,10,11,12\}$ at each iteration.
		Given $B$, the index set $J\subset\{1,\ldots,\dim(\beta)+N+T-1\}$ is sampled uniformly among all subsets of size $B$. We draw $J$ from the \emph{full} coordinate set, so a block mixes elements of $\beta$, $\alpha$, and $\gamma$.

		\begin{algorithm}[htbp]
			\caption{Random-block  MH with adaptive proposal covariance}\label{algorithm}
			\begin{enumerate}
				\item \textbf{Initialization}. Choose $\theta^{(0)}$ (we use the fixed-effects
				MLE or another consistent starting value). Fix the number of iterations $M$ and
				a burn-in length $M_0$.
				
				\item \textbf{Iteration.} For $m=1,\ldots,M$:
				\begin{enumerate}
					\item Draw a block size $B\in\{8,9,10,11,12\}$ and a coordinate set
					$J$ with $|J|=B$.
					
					\item Let $\theta_{-J}$ denote coordinates not in $J$.
					Propose $\theta^\star$ by keeping $\theta^\star_{-J}=\theta^{(m-1)}_{-J}$ and
					drawing
					$
					\theta^\star_{J}\sim \mathcal N\!\left(\theta^{(m-1)}_{J},\ \lambda\,\mathsf C_{J}\right),
					$
					where $\mathsf C_{J}$ is the $B\times B$ submatrix of 
					$\mathsf C$ corresponding to indices in $J$, and $\lambda$ is a scaling factor.
					
					\item Set $\theta^{(m)}=\theta^\star$ with probability 
					$
					\min\Big\{1,\ \frac{\exp\big(\sqrt{NT}\,\mathcal L(\theta^\star)\big)\,p(\theta^\star)}
					{\exp\big(\sqrt{NT}\,\mathcal L(\theta^{(m-1)})\big)\,p(\theta^{(m-1)})  }\Big\},
					$
					and otherwise set $\theta^{(m)}=\theta^{(m-1)}$.
				\end{enumerate}
				\item  \textbf{Estimators.} After discarding the burn-in draws $\{\theta^{(m)}\}_{m\le M_0}$, compute
				posterior means using the remaining $M^\star=M-M_0$ draws:
				\[
				\widehat{\beta}^E
				=
				\frac{1}{M^\star}\sum_{m=M_0+1}^{M}\beta^{(m)},
				\widehat{\phi}^E
				=
				\frac{1}{M^\star}\sum_{m=M_0+1}^{M}\phi^{(m)},
				\widehat{\Delta}^E
				=
				\frac{1}{M^\star}\sum_{m=M_0+1}^{M}\Delta\!\big(\beta^{(m)},\phi^{(m)}\big),
				\]
				and the posterior plug-in APE is $\Delta(\widehat{\beta}^E,\widehat{\phi}^E)$.
				\item \textbf{Inference.} One may estimate the asymptotic variance formulas in Corollaries~\ref{corollary:asy_parameter}--\ref{corollary:asy_APE}
				by replacing  expectations with sample analogs and evaluating  at
				$(\widehat{\beta}^E,\widehat{\phi}^E)$. 
				Alternatively, we use the sample variance of the posterior draws. For instance,
				the posterior covariance of $\beta$ is estimated by
				$\mathrm{Var}\{\beta^{(m)}\}_{m>M_0}$, and
				analogously for any fixed-dimensional subvector of $\phi$ and for $\Delta$.
			\end{enumerate}
		\end{algorithm}

		Algorithm \ref{algorithm} lists the  implementation details. In Algorithm \ref{algorithm}, the efficiency  depends on the proposal covariance $\lambda\mathsf C$. We determine $\mathsf C$ and $\lambda$ in two stages,
		following standard practice in the MCMC literature (e.g., \citealp{gelman2003bayesian}).
		We initialize $\mathsf C=\mathrm{diag}(\eta^2)$, where $\eta$ is a vector of standard deviations. We choose $\eta$ so that the overall acceptance rate is around 20--25\%. 
		After the pilot run reaches approximate convergence, we estimate the empirical covariance matrix of the draws,
		$\mathsf C=\mathrm{Var}(\theta^{(m)})$.
		We fix the scaling factor according to the rule of thumb for Gaussian random-walk MH \citep{gelman1996efficient}, i.e.,
		$
		\lambda={2.38^2}/{B}.
		$

		\section{Norms Embeddings and Contractions: A Panel Data Toolkit}\label{sec:appA:NotationNorms}
		Norms are as in \citet{fernandez2016individual}.
		This appendix collects the specific norm inequalities and sparsity properties required to bound the remainder terms in our target-centered Laplace expansions. 

		\subsection*{A.1 Embeddings for Vectors and Matrices}
		For a vector $x\in\mathbb R^{\dim\phi}$, the standard embedding yields, 
		$
		\|x\|\ \le\ (\dim\phi)^{1/2-1/q}\ \|x\|_q $ 
		for $q\ge2$, 
		where 
		$
		\|x\|:=\|x\|_2$
		and 
		$\|x\|_q:=\Big(\sum_{j=1}^{\dim\phi}|x_j|^q\Big)^{1/q}$.
		For vectors in the fixed-dimensional $\beta$-space, all norms are equivalent, $\|x\| \asymp \|x\|_q$ uniformly.

		For a matrix $B\in\mathbb{R}^{\dim\phi\times\dim\phi}$, we use the spectral norm $\|B\|$, the Frobenius norm $\|B\|_F=(\sum_{jk}B_{jk}^2)^{1/2}$, and the entrywise $\ell_q$ norm $\|B\|_q=(\sum_{jk}|B_{jk}|^q)^{1/q}$. The relevant embedding inequality is:
		$
		\|B\|_F\ \le\ (\dim\phi)^{\,1-{2}/{q}}\ \|B\|_q
		$ for $q\ge2$.
		When $B\in\mathbb{R}^{\dim\phi\times\dim\beta}$ or $\mathbb{R}^{\dim\beta\times\dim\phi}$ (one fixed dimension), this bound tightens to
		$
		\|B\|_F\ \le\ (\dim\phi)^{\,1/2-1/q}\ \|B\|_q.
		$
		While for $\mathbb{R}^{\dim\beta\times\dim\beta}$ all matrix induced norms are stochastically equivalent.
		These embeddings are essential for controlling the contraction of higher-order tensors against the rate of dimension growth.

		\subsection*{A.2 Contractions for Higher-Order Derivatives} 
		We view the third and fourth derivatives of $\mathcal{L}$ as tensors on the $\phi$-space, denoted by $\mathcal{T} \in (\mathbb{R}^{\dim\phi})^{\otimes 3}$ and $\mathcal{Q} \in (\mathbb{R}^{\dim\phi})^{\otimes 4}$, respectively. We define the mode-1 slices as $\mathcal{T}_i^{(1)} \in (\mathbb{R}^{\dim\phi})^{\otimes 2}$ with entries $(\mathcal{T}_i^{(1)})_{jk} = \mathcal{T}_{ijk}$, and similarly for $\mathcal{Q}_i^{(1)}$.

		At third order, only $\{\partial_{\alpha_i^3}, \partial_{\gamma_t^3}, \partial_{\alpha_i^2\gamma_t}, \partial_{\alpha_i\gamma_t^2}\}$ are nonzero. Consequently, each mode-1 slice $\mathcal{T}_i^{(1)}$ contains only $\mathcal{O}(T)$  or $\mathcal{O}(N)$ non-zero elements.
		Our smoothness assumptions stop at fourth order, where the sparsity pattern implies the only nonzeros are  $\{\partial_{\alpha_i^4}, \partial_{\gamma_t^4}, \partial_{\alpha_i^3\gamma_t}, \partial_{\alpha_i\gamma_t^3}, \partial_{\alpha_i^2\gamma_t^2}\}$. Since $\dim\beta$ is fixed, any object carrying $\beta$–indices contributes $\mathcal{O}(1)$ factors. 
		
		Although all norms on the finite dimensional spaces are equivalent, the choice of $q$ is critical for the asymptotic analysis. By selecting $q > 4$, the embedding factor $(\dim\phi)^{1/2-1/q}$ becomes sufficiently small to counteract the dimensional growth of the fixed effects parameters when bounding tensor contractions.

		\section{Regularity Conditions}\label{sec:app_RegularityConditions}
		This appendix collects the high-level conditions used in the stochastic expansions underlying our Laplace approximation.
		Conditions (i)-(viii) below are identical to those in \citet{fernandez2016individual} and are implied by Assumption~\ref{assumption:main}(i)-(v) and Assumption\ref{assumption:ape}, and we therefore do not re-verify them.
		The only additional requirement is condition (ix), which controls the contribution of the log-prior to the score and Hessian. 
		Verification of condition (ix) (and hence the prior condition in Assumption~\ref{assumption:main}(vi)) is provided in Appendix~\ref{sec:app_VerifyRegularity} of the supplemental material.
		
		\begin{assumption}[Regularity conditions for stochastic expansions]\label{assumption:regularity}
			Let $q>4$ and $0 \leq \epsilon < 1/8-1/(2q)$. Let $r_{\beta} = r_{\beta,NT} > 0$ and $r_\phi= r_{\phi,NT} > 0$ with $r_{\beta} = o\big( (NT)^{-1/(2q)-\epsilon} \big)$ and $r_{\phi} = o\big((NT)^{-\epsilon}\big)$. Define $\pi_{it} = \alpha_i + \gamma_t$. We assume that
			\begin{enumerate}[label=(\roman*)]
				\item
				$\frac{\dim \phi}{\sqrt{NT}} \rightarrow a, 0 < a < \infty$.
				
				\item
				With probability approaching one, in $\mathscr{B}(r_\beta,\beta_0) \times \mathscr B_q(r_\phi,\phi_0)$, $(\beta,\phi) \rightarrow \mathcal{L}(\beta,\phi)$ is four times continuously differentiable, $(\beta,\phi) \rightarrow \Delta(\beta,\phi)$ is three times continuously differentiable,
				and $(\beta,\phi) \rightarrow \ln p(\beta,\phi)$ is twice continuously differentiable.
				
				\item
				$\overline{\mathcal{H}}\succ 0$, $\Vert \overline{\mathcal{H}} \Vert_q = \mathcal{O}_P(1)$ and $\Vert \overline{\mathcal{H}} \Vert = \mathcal{O}_P(1)$.
				
				\item 
				$\Vert \mathcal{S} \Vert_q = \mathcal{O}_P\big( (NT)^{-\frac14+\frac{1}{2q}} \big)$, 
				$\Vert \mathcal{S} \Vert = \mathcal{O}_P(1),$
				$\Vert \partial_{\beta} \mathcal{L} \Vert = \mathcal{O}_P(1),$ 
				$\Vert \partial_{\beta\beta'} \mathcal{L} \Vert = \mathcal{O}_P(\sqrt{NT}),$ \\
				$\Vert \partial_{\beta\phi\phi} \mathcal{L} \Vert = \mathcal{O}_P\big((NT)^{\epsilon}\big),$ 
				$\big\|\sum_{g=1}^{\dim\phi} \partial_{\phi \phi^{\prime} \phi_g} \mathcal{L}[\overline{\mathcal{H}}^{-1} \mathcal{S}]_g\big\|
				=\mathcal{O}_P\big((N T)^{-1 / 4+1 /(2 q)+\epsilon}\big).$
				
				\item 
				$\Vert \widetilde{\mathcal{H}} \Vert_q = o_P(1),$ 
				$\Vert \widetilde{\mathcal{H}} \Vert = o_P\big( (NT)^{-1/8} \big),$ 
				$\Vert \partial_{\beta\beta'}\widetilde{\mathcal{L}} \Vert = \mathcal{O}_P(1),$ 
				$\Vert \partial_{\beta\phi}\widetilde{\mathcal{L}} \Vert = \mathcal{O}_P(1),$  \\
				$\Vert \partial_{\beta\phi\phi}\widetilde{\mathcal{L}} \Vert = o_P\big( (NT)^{-1/8} \big),$ 
				$ \Vert  \sum_{g,h=1}^{\dim\phi} \partial_{\phi\phi_g\phi_h} \widetilde{\mathcal{L}} [\overline{\mathcal{H}}^{-1}\mathcal{S}]_g[\overline{\mathcal{H}}^{-1}\mathcal{S}]_h\Vert  = o_P\big((NT)^{-1/4}\big)$.

				\item Uniformly over $\beta\in\mathscr B(r_\beta,\beta_0)$  and $\phi\in\mathscr B_q(r_\phi,\phi_0)$, \\
				$\Vert \partial_{\beta\phi'} \mathcal{L} (\beta,\phi) \Vert_q = \mathcal{O}_P\big((NT)^{1/(2q)}\big),$
				$\Vert \partial_{\beta\phi'} \mathcal{L} (\beta,\phi) \Vert = \mathcal{O}_P\big((NT)^{1/4}\big),$\\
				$\Vert \partial_{\beta\beta\beta}\mathcal{L}(\beta,\phi) \Vert= \mathcal{O}_P(\sqrt{NT}),$
				$\Vert \partial_{\beta\beta\phi}\mathcal{L}(\beta,\phi) \Vert= \mathcal{O}_P\big((NT)^{1/(2q)}\big),$ \\
				$\Vert \partial_{\beta\beta\phi\phi}\mathcal{L}(\beta,\phi) \Vert= \mathcal{O}_P\big((NT)^{\epsilon}\big),$ 
				$\Vert \partial_{\beta\phi\phi\phi}\mathcal{L}(\beta,\phi) \Vert_q= \mathcal{O}_P\big((NT)^{\epsilon}\big),$\\
				$\Vert \partial_{\phi\phi\phi} \mathcal{L}(\beta,\phi)  \Vert_q = \mathcal{O}_P\big((NT)^{\epsilon}\big),$  
				$\Vert \partial_{\phi\phi\phi\phi}\mathcal{L}(\beta,\phi) \Vert_q= \mathcal{O}_P\big((NT)^{\epsilon}\big).$
				
				\item $\left\Vert \partial_{\beta} \Delta\right\Vert=\mathcal{O}_P(1),$
				$\left\Vert \partial_{\phi} \Delta\right\Vert_q=\mathcal{O}_P\big( (NT)^{-1/2+1/(2q)} \big),$ \\
				$\left\Vert \partial_{\phi\phi} \Delta\right\Vert_q=\mathcal{O}_P\big( (NT)^{-1/2+\epsilon} \big),$
				$\left\Vert \partial_{\phi\phi} \Delta\right\Vert =\mathcal{O}_P\big( (NT)^{-1/2+\epsilon} \big),$
				$\left\Vert \partial_{\beta} \widetilde{\Delta} \right\Vert=o_P( 1 ),$  \\
				$\left\Vert \partial_{\phi} \widetilde{\Delta} \right\Vert=\mathcal{O}_P\big( (NT)^{-1/2} \big),$ 
				$\left\Vert \partial_{\phi\phi} \widetilde{\Delta} \right\Vert=o_P\big( (NT)^{-5/8} \big).$ 
				\item Uniformly over $\beta\in\mathscr B(r_\beta,\beta_0)$ and $\phi\in\mathscr B_q(r_\phi,\phi_0)$, \\
				$ 	\Vert \partial_{\beta\beta}\Delta(\beta,\phi) \Vert= \mathcal{O}_P\big( 1 \big),$ 
				$ 	\Vert \partial_{\beta\phi'}\Delta(\beta,\phi) \Vert_q= \mathcal{O}_P\big( (NT)^{1/(2q)-1/2} \big),$\\
				$ \Vert \partial_{\phi\phi\phi}\Delta(\beta,\phi) \Vert_q= \mathcal{O}_P\big( (NT)^{\epsilon-1/2} \big).$

				\item Uniformly over $\beta\in\mathscr B(r_\beta,\beta_0)$ and $\phi\in\mathscr B_q(r_\phi,\phi_0)$, \\
				$\big\|\partial_{\phi}\ln p(\beta,\phi)\big\| = \mathcal{O}_P\big((NT)^{1/4}\big),$
				$\big\|\partial_{\phi}\ln p(\beta,\phi)\big\|_q = \mathcal{O}_P\big((NT)^{1/(2q)}\big),$ \\
				$\big\|\partial_{\beta}\ln p(\beta,\phi)\big\| = \mathcal{O}_P\big((NT)^{1/2}\big),$
				$\big\|\partial_{\beta\beta'}\ln p(\beta,\phi)\big\| = \mathcal{O}_P\big((NT)^{1/2}\big),$\\
				$\big\|\partial_{\phi\beta'}\ln p(\beta,\phi)\big\| = \mathcal{O}_P\big((NT)^{1/4}\big),$
				$\big\|\partial_{\phi\phi'}\ln p(\beta,\phi)\big\| 	= \mathcal{O}_P(1).$
				
			\end{enumerate} 
		\end{assumption}

		Once established, Assumption~\ref{assumption:regularity} implies several useful consequences:

		\begin{lemma}\label{lemma:AfterRegularity}
			Suppose Assumption \ref{assumption:regularity} holds. 
			Then, uniformly over $\beta\in\mathscr B(r_\beta,\beta_0)$  and $\phi\in\mathscr B_q(r_\phi,\phi_0)$, 
			\begin{enumerate}[label=(\roman*)]
				\item  
				$\overline{\mathcal H}(\beta,\phi)\succ 0$ wpa 1, $\| \overline{\mathcal H}^{-1}(\beta,\phi) \|_q = \mathcal{O}_P(1)$ and $\| \overline{\mathcal H}^{-1}(\beta,\phi) \| = \mathcal{O}_P(1).$

				\item 
				$\|\overline\phi(\beta) - \phi_0   \| = o_P((NT)^{1/8})$,
				$\|\overline\phi(\beta) - \phi_0   \|_q = o_P((NT)^{-\epsilon}).$

				\item 
				$\big\|\widetilde{\mathcal H}(\beta,\overline\phi(\beta))\big\| \ =\ o_P(1),$
				$\| {\mathcal{H}}(\beta,\phi) - {\mathcal{H}}(\beta,\overline{\phi}(\beta)) \|  = o_P(1),$ \\
				$\big\|{\mathcal S}(\beta,\overline\phi(\beta))\big\| 	= \mathcal{O}_P(1) +  \mathcal{O}_P\big((NT^{1/4} \|\beta-\beta_0\|)\big).$ 
				
				\item Define  $\Sigma_{\phi\beta}:= \frac{1}{NT}\,\overline{\mathcal H}^{-1}(\partial_{\phi\beta'}\overline{\mathcal L})\overline W^{-1}$,  
				$\Sigma_{\beta\beta}:= \frac{1}{NT}\overline W^{-1}$,
				$\Sigma_{\phi\phi}:=\Sigma_{\phi\phi}^{(1)}+\Sigma_{\phi\phi}^{(2)}$ with
				$\Sigma_{\phi\phi}^{(1)}:= \frac{1}{\sqrt{NT}}\overline{\mathcal H}^{-1}$ and
				$\Sigma_{\phi\phi}^{(2)}:=  \frac{1}{{NT}} \overline{\mathcal H}^{-1}(\partial_{\phi\beta'}\overline{\mathcal L})\overline W^{-1}(\partial_{\beta\phi'}\overline{\mathcal L})\overline{\mathcal H}^{-1}$.
				Then, \\
				$\|\sum_{k,g=1}^{\dim\phi}[\partial_{\phi\phi_k\phi_g}\mathcal L(\beta,\phi) ]    (\Sigma_{\phi\phi}^{(1)})_{kg}\ \| =o_P \big((NT)^{-1/8}\big) $, \\
				$\|\sum_{k,g=1}^{\dim\phi}[\partial_{\phi\phi_k\phi_g}\mathcal L(\beta,\phi) ]   (\Sigma_{\phi\phi}^{(2)})_{kg}\ \| =o_P \big((NT)^{-1/2}\big) $,\\
				$\|\sum_{k,g=1}^{\dim\phi}[\partial_{\beta\phi_k\phi_g}\mathcal L(\beta,\phi)] (\Sigma_{\phi\phi}^{(1)})_{kg}\ \| =o_P \big((NT)^{1/8}\big) $,\\
				$\|\sum_{k,g=1}^{\dim\phi}[\partial_{\beta\phi_k\phi_g}\mathcal L(\beta,\phi)] (\Sigma_{\phi\phi}^{(2)})_{kg}\ \| =o_P \big((NT)^{-1/4}\big) $,\\ 
				$\| \sum_{k}^{\dim\beta}\sum_{g}^{\dim\phi} [\partial_{\phi\beta_k\phi_g}{\mathcal L}({\beta},{\phi})] (\Sigma_{\beta\phi})_{kg}\| =o_P \big((NT)^{-1/4}\big) $,\\ 
				$\|  \sum_{k}^{\dim\beta}\sum_{g}^{\dim\phi} [\partial_{\beta\beta_k\phi_g}{\mathcal L}({\beta},{\phi})] (\Sigma_{\beta\phi})_{kg} \| = \mathcal{O}_P\big((NT)^{-1/2}\big)$, \\
				$\|   \sum_{k,g}^{\dim\beta}[\partial_{\beta\beta_k\beta_g}{\mathcal L}({\beta},{\phi})]  (\Sigma_{\beta\beta})_{kg}  \| = \mathcal{O}_P \big((NT)^{-1/2}\big)$, \\
				$\big| \mathrm{tr}\!\big((\partial_{\phi\phi'}\overline\Delta)   \Sigma_{\phi\phi}^{(2)}   \big) \big|=o_P\big((NT)^{-1/2}\big)$.  
			\end{enumerate}
		\end{lemma}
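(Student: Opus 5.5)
The natural route is to derive every claim from Assumption \ref{assumption:regularity} by perturbation arguments around the truth $(\beta_0,\phi_0)$, exploiting the additive sparsity described in Appendix \ref{sec:appA:NotationNorms}.

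\emph{Part (i).} Write $\overline{\mathcal H}(\beta,\phi)=\overline{\mathcal H}+[\overline{\mathcal H}(\beta,\phi)-\overline{\mathcal H}]$. Apply the mean value theorem in $(\beta,\phi)$. The increment is then bounded in $\|\cdot\|_q$ by two terms:
\[
\|\partial_{\beta\phi\phi}\overline{\mathcal L}\|\,r_\beta \quad\text{and}\quad \|\partial_{\phi\phi\phi}\overline{\mathcal L}\|_q\,r_\phi .
\]
By Assumption \ref{assumption:regularity}(vi) and the rate restrictions on $r_\beta,r_\phi$, both terms are $o_P(1)$. Since $\overline{\mathcal H}\succ0$ with $\|\overline{\mathcal H}^{-1}\|_q=\mathcal O_P(1)$ (as in \citet{fernandez2016individual}), a Neumann-series argument gives $\overline{\mathcal H}(\beta,\phi)\succ0$ wpa1. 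It also gives the uniform $\mathcal O_P(1)$ bounds on the inverse in both the $q$-norm and the spectral norm.

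\emph{Parts (ii) and (iii).} Use the first-order condition $\partial_\phi\overline{\mathcal L}(\beta,\overline\phi(\beta))=0$ and the implicit function theorem. This yields
\[
\overline\phi(\beta)-\phi_0=\overline{\mathcal H}^{-1}\,\partial_{\phi\beta'}\overline{\mathcal L}\,(\beta-\beta_0)+\text{h.o.t.}
\]
We then bound it with $\|\partial_{\beta\phi'}\mathcal L\|_q=\mathcal O_P((NT)^{1/(2q)})$ and $r_\beta=o((NT)^{-1/(2q)-\epsilon})$; this gives the $q$-norm rate. The spectral-norm rate follows from the analogous $(NT)^{1/4}$ bound. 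For (iii), first decompose $\widetilde{\mathcal H}(\beta,\overline\phi(\beta))=\widetilde{\mathcal H}+[\text{increment}]$. The first term is $o_P((NT)^{-1/8})$ by (v). The increment is controlled by $\partial_{\beta\phi\phi}\widetilde{\mathcal L}$ and the third $\phi$-derivatives times the rates from (ii). The Lipschitz bound on $\mathcal H(\beta,\phi)-\mathcal H(\beta,\overline\phi(\beta))$ follows in the same way. Finally, expand the score:
\[
\mathcal S(\beta,\overline\phi(\beta))=\mathcal S+\partial_{\phi\beta'}\mathcal L(\beta-\beta_0)-\mathcal H(\overline\phi(\beta)-\phi_0)+\dots
\]
Using $\|\mathcal S\|=\mathcal O_P(1)$ and $\|\partial_{\beta\phi'}\mathcal L\|=\mathcal O_P((NT)^{1/4})$ gives the stated bound.

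\emph{Part (iv).} This is the main obstacle, because the contractions are over index sets of size $\dim\phi\asymp\sqrt{NT}$ and naive norm bounds lose powers of $\dim\phi$. The plan is to use the sparsity of third derivatives. Because $\ell_{it}$ depends on $\phi$ only through $\alpha_i+\gamma_t$, the quantity $\sum_{k,g}\partial_{\phi\phi_k\phi_g}\mathcal L\,\Sigma_{kg}$ reduces, for component $\alpha_i$, to
\[
(NT)^{-1/2}\sum_t\partial_{\pi^3}\ell_{it}\,(\Sigma_{\alpha_i\alpha_i}+2\Sigma_{\alpha_i\gamma_t}+\Sigma_{\gamma_t\gamma_t}),
\]
and similarly for $\gamma_t$.

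For $\Sigma^{(1)}=(NT)^{-1/2}\overline{\mathcal H}^{-1}$, I would invoke the \citet{fernandez2016individual} result that $\overline{\mathcal H}^{-1}$ equals $\mathrm{diag}(\overline{\mathcal H}_{\alpha\alpha}^{*},\overline{\mathcal H}_{\gamma\gamma}^{*})^{-1}$ plus a remainder with max-entry $\mathcal O_P((NT)^{-1/2})$. The diagonal entries of $\Sigma^{(1)}$ are then $\mathcal O_P((NT)^{-1/2})$. The off-diagonal entries are $\mathcal O_P((NT)^{-1})$. Summing over $t$ with the $(NT)^{-1/2}$ prefactor makes each component $\mathcal O_P((NT)^{-1/2})$. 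The Euclidean norm over $\sqrt{NT}$ components is therefore $\mathcal O_P((NT)^{-1/4})=o_P((NT)^{-1/8})$. The $\beta\phi\phi$ version has no $\sqrt{\dim\phi}$ loss, which gives $\mathcal O_P(1)\cdot(NT)^{\epsilon}$, and so $o_P((NT)^{1/8})$ because $\epsilon<1/8$.

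For $\Sigma^{(2)}$ and $\Sigma_{\beta\phi}$, the matrices have rank at most $\dim\beta$, with factors $v=\overline{\mathcal H}^{-1}\partial_{\phi\beta'}\overline{\mathcal L}$ satisfying $\|v\|_\infty=\mathcal O_P((NT)^{-1/4})$. The contraction then becomes a third-derivative form $\partial_{\phi\phi\phi}\mathcal L[v,v]$ divided by $NT$. This is bounded via (vi) as $(NT)^{\epsilon}\|v\|_q^2/NT$, which is $o_P((NT)^{-1/2})$, and analogously for the $\beta$ versions.

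The pure-$\beta$ terms are immediate from $\|\partial_{\beta\beta\beta}\mathcal L\|=\mathcal O_P(\sqrt{NT})$ and $\Sigma_{\beta\beta}=\mathcal O_P((NT)^{-1})$. The last trace bound follows from the same rank-$\dim\beta$ factorization, together with the bound $\|\partial_{\phi\phi}\Delta\|=\mathcal O_P((NT)^{-1/2+\epsilon})$ from (vii).
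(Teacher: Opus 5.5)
The $q$-norm half of (ii) is fine, but the Euclidean half of (ii) does not follow from your argument, and it cannot be proved uniformly over $\mathscr B(r_\beta,\beta_0)$. Your treatment of (i), (iii) and (iv) takes the natural route and is sound. That route is: Neumann-series perturbation of $\overline{\mathcal H}$ at the truth; an implicit-function expansion of $\overline\phi(\beta)$; and, for the contractions, additive sparsity of the third derivatives combined with Lemma~\ref{lemma:FWD1} and the rank-$\dim\beta$ structure of $\Sigma^{(2)}_{\phi\phi}$ and $\Sigma_{\beta\phi}$.

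\textbf{The gap in (ii).} Your bound is $\|\overline{\mathcal H}^{-1}\|\,\|\partial_{\phi\beta'}\overline{\mathcal L}\|\,r_\beta=\mathcal O_P(1)\,\mathcal O_P((NT)^{1/4})\,o((NT)^{-1/(2q)-\epsilon})=o_P((NT)^{1/4-1/(2q)-\epsilon})$. The embedding $\|x\|\le(\dim\phi)^{1/2-1/q}\|x\|_q$ gives the same rate. Because $\epsilon<1/8-1/(2q)$, the exponent $1/4-1/(2q)-\epsilon$ is strictly larger than $1/8$, so $o_P((NT)^{1/8})$ does not follow.

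No sharper argument fixes this. Each coordinate of the leading term $\overline{\mathcal H}^{-1}\partial_{\phi\beta'}\overline{\mathcal L}(\beta-\beta_0)$ is of order $\|\beta-\beta_0\|$. When $\mathbb E_\phi(\partial_{\beta\pi}\ell_{it})$ has a common nonzero component (e.g.\ a regressor with nonzero mean in a linear or logit model), every $\overline\pi_{it}(\beta)$ moves by that order. Then $\|\overline\phi(\beta)-\phi_0\|\asymp(NT)^{1/4}\|\beta-\beta_0\|$, which exceeds $(NT)^{1/8}$ once $\|\beta-\beta_0\|\gg(NT)^{-1/8}$, and $r_\beta=o((NT)^{-1/(2q)-\epsilon})$ allows that.

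\textbf{What you can prove instead.} Your implicit-function step actually delivers the linear bound $\|\overline\phi(\beta)-\phi_0\|=\mathcal O_P((NT)^{1/4}\|\beta-\beta_0\|)$. This gives $o_P((NT)^{1/8})$ only under the extra restriction $\|\beta-\beta_0\|=o_P((NT)^{-1/8})$. That restriction covers the $\mathcal O_P((NT)^{-1/4})$ neighborhood where Lemma~\ref{lemma:LaplaceLikelihood} uses the result, so state and prove (ii) in that form.

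Nothing else in your proof needs the $(NT)^{1/8}$ rate:
\begin{itemize}
\item Run the first two claims of (iii) through the $q$-norm rate. Use induced norms as in \citet{fernandez2016individual}, so that $\|A\|\le\|A\|_q$ for symmetric $A$ and $\|\partial_{\phi\phi\phi}\mathcal L[u]\|_q\le\|\partial_{\phi\phi\phi}\mathcal L\|_q\|u\|_q$.
\item The score claim in (iii) uses exactly the linear bound.
\end{itemize}
Separately, the ``h.o.t.'' in the implicit-function step need a quantitative localization showing $\overline\phi(\beta)\in\mathscr B_q(r_\phi,\phi_0)$, as in the expansion of $\widehat\phi(\beta)$ in Appendix B of \citet{fernandez2016individual}. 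Only then can the uniform bounds of Assumption~\ref{assumption:regularity}(vi) be used along the path.

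\textbf{A smaller slip in (iv).} The matrix $v=\overline{\mathcal H}^{-1}\partial_{\phi\beta'}\overline{\mathcal L}$ does not satisfy $\|v\|_\infty=\mathcal O_P((NT)^{-1/4})$. The entries of $\partial_{\phi\beta'}\overline{\mathcal L}$ are $(NT)^{-1/2}\sum_t\mathbb E_\phi(\partial_{\beta\pi}\ell_{it})=\mathcal O(1)$, and similarly over $i$. Hence $\|v\|_\infty=\mathcal O_P(1)$, $\|v\|=\mathcal O_P((NT)^{1/4})$ and $\|v\|_q=\mathcal O_P((NT)^{1/(2q)})$. The $(NT)^{-1/4}$ rate is that of $\mu_\phi^{(2)}=v\mu_\beta$.

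Your conclusions still hold with the corrected rates:
\begin{itemize}
\item The $\partial_{\phi\phi\phi}$--$\Sigma^{(2)}_{\phi\phi}$ contraction is $\mathcal O_P((NT)^{\epsilon+1/q-1})$ in $q$-norm, hence $\mathcal O_P((NT)^{\epsilon+1/(2q)-3/4})$ in Euclidean norm. Do not skip this embedding step, since the claim is stated in $\|\cdot\|$.
\item The $\partial_{\beta\phi\phi}$--$\Sigma^{(2)}_{\phi\phi}$ contraction is $\mathcal O_P((NT)^{\epsilon-1/2})$.
\item The $\partial_{\phi\beta\phi}$--$\Sigma_{\beta\phi}$ contraction is $\mathcal O_P((NT)^{\epsilon-3/4})$.
\item The $\partial_{\beta\beta\phi}$--$\Sigma_{\beta\phi}$ contraction is $\mathcal O_P((NT)^{-1/2})$.
\item The trace term is bounded by a constant times $(NT)^{-1}\|\overline W^{-1}\|\,\|\partial_{\phi\phi'}\overline\Delta\|\,\|v\|_F^2=\mathcal O_P((NT)^{\epsilon-1})$.
\end{itemize}
So (iv) goes through once these rates are corrected.
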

		
		The proof of Lemma \ref{lemma:AfterRegularity} is provided in Appendix \ref{sec:app_RegularityConditions_proof} of the supplemental material.
		Part (i) upgrades positive definiteness and the norm bounds for the expected Hessian from the truth to a full neighborhood.  
		We provide  the $\ell_2$ and $\ell_q$ controls for $\overline\phi(\beta)-\phi_0$ in part (ii), which feeds directly into part~(iii).  
		
		Because the two–way panel structure renders high‐order derivative tensors sparse, parts~(iv) formalize the key tensor contractions that enter the Laplace remainder. We bound the \emph{contractions} themselves, rather than bounding tensors and vectors separately. This avoids the loose rates that arise from naive products of norms.

		\section{Proof of the main results}\label{sec:app_ProofMain}
		Before proving the main results, we collect two auxiliary lemmas used repeatedly in the calculations below. Their proofs are provided in Appendix~\ref{sec:app_intermedLemma} of the Supplemental Material.
		The first is a quadratic completion identity, and the second gives Gaussian contraction bounds.
		
		\begin{lemma}\label{lemma:pre1}
			Let $u,s\in\mathbb{R}^d$ and let $\Omega\in\mathbb{R}^{d\times d}$ be symmetric positive definite.
			Then
			\[
			s'u-\tfrac{1}{2}\,u' \Omega u
			= -\tfrac{1}{2}\,(u-\Omega^{-1}s)' \Omega (u-\Omega^{-1}s)
			+\tfrac{1}{2}\,s' \Omega^{-1} s .
			\]
		\end{lemma}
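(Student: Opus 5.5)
The identity in Lemma~\ref{lemma:pre1} is pure linear algebra, and I will establish it by expanding the right-hand side and matching it term by term with the left-hand side. The only structural facts needed are that $\Omega$ is symmetric positive definite, so $\Omega^{-1}$ exists and is itself symmetric, and that scalar quadratic forms equal their transposes.

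The first step is to expand the completed square. Writing $w=\Omega^{-1}s$, I compute
\[
(u-w)'\Omega(u-w)=u'\Omega u-u'\Omega w-w'\Omega u+w'\Omega w .
\]
Next I simplify each cross term using the definition of $w$ and symmetry:
\[
\Omega w=s,\qquad u'\Omega w=u's=s'u,\qquad w'\Omega u=(\Omega w)'u=s'u,\qquad w'\Omega w=s'\Omega^{-1}s .
\]
In the third equality I use $w'\Omega=(\Omega' w)'=(\Omega w)'$, which relies on $\Omega'=\Omega$. In the last equality I use $(\Omega^{-1})'=\Omega^{-1}$.

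Substituting these simplifications gives
\[
-\tfrac12(u-w)'\Omega(u-w)=-\tfrac12 u'\Omega u+s'u-\tfrac12 s'\Omega^{-1}s .
\]
Adding $\tfrac12 s'\Omega^{-1}s$ to both sides then yields exactly $s'u-\tfrac12u'\Omega u$, which is the left-hand side of the claimed identity.

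No step is a genuine obstacle. The one point requiring care is the symmetry argument that makes the two cross terms equal, since it is what produces the coefficient $1$ on $s'u$. Positive definiteness is not needed for the algebra itself, only for the invertibility of $\Omega$. Its real role is downstream: in the Laplace computations it makes the completed square a Gaussian kernel, which is how the lemma will be used.
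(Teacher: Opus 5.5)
Your proof is correct and follows the route the paper's completing-the-square identity calls for: expand $-\tfrac12(u-\Omega^{-1}s)'\Omega(u-\Omega^{-1}s)$, use $\Omega\Omega^{-1}s=s$ and the symmetry of $\Omega$ to reduce both cross terms to $s'u$, and add back $\tfrac12 s'\Omega^{-1}s$. Your remark is also accurate: positive definiteness matters here only for invertibility, and its real use is in the downstream Gaussian integrals.
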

		
		\begin{lemma}\label{lemma:pre2}
			Given a positive definite covariance matrix $\Omega\in\mathbb{R}^{d\times d}$, let $\mathcal{U}\sim\mathcal{N}(0,\Omega)$ and 
			write $\langle g\rangle:=\mathbb{E}[g(\mathcal{U})]$ for expectation with respect to $\mathcal{U}$.
			Let $\kappa_m(\cdot)=\kappa(\cdot^{\times m})$ be the $m$th order cumulant function with respect to    $\mathcal{U}$.
			Let $b\in\mathbb{R}^{d}$ and let $A\in\mathbb{R}^{d\times d}$ be symmetric.  Then, there exists an absolute constant $c^{st}>0$ such that
			\begin{enumerate}[label=(\roman*)]
				\item $\big|\langle \mathcal{U}' A \mathcal{U}\rangle\big|\le d\,\|A\|\,\|\Omega\|.$ 
				\item $\big|\kappa((b'\mathcal U+\mathcal U'A\mathcal U)^{\times m})\big|    \leq c^{st} d \|\Omega \|^m\| A\|^m   + c^{st}  \|\Omega\|^{m-1}\|A\|^{m-2} \|b\|^2$, for $m\geq 2$.
			\end{enumerate}
		\end{lemma}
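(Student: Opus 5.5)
The plan is to reduce both parts to a diagonal Gaussian computation, after which everything follows from the explicit cumulant generating function of a shifted, scaled $\chi^2_1$ variable.

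\emph{Part (i).} I will use $\langle \mathcal U'A\mathcal U\rangle=\mathrm{tr}(A\Omega)$. The trace of a $d\times d$ matrix is bounded by $d$ times its spectral norm, so $|\mathrm{tr}(A\Omega)|\le d\|A\Omega\|\le d\|A\|\,\|\Omega\|$.

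\emph{Part (ii), whitening and diagonalization.} Write $\mathcal U=\Omega^{1/2}Z$ with $Z\sim\mathcal N(0,I_d)$, and set $M:=\Omega^{1/2}A\Omega^{1/2}$, which is symmetric. Take an eigendecomposition $M=P\Lambda P'$ with $P$ orthogonal and $\Lambda=\mathrm{diag}(\lambda_1,\dots,\lambda_d)$. Put $W:=P'Z\sim\mathcal N(0,I_d)$ and $c:=P'\Omega^{1/2}b$. Then
\[
b'\mathcal U+\mathcal U'A\mathcal U=\sum_{j=1}^d\big(c_jW_j+\lambda_jW_j^2\big).
\]
The summands are independent, so for every $m\ge 1$ the cumulant $\kappa_m$ of the sum equals the sum of the summands' cumulants.

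\emph{Part (ii), one summand.} For a single standard normal $W$ and $|s|<1/(2|\lambda|)$,
\[
\log \mathbb E\, e^{s(cW+\lambda W^2)}=-\tfrac12\ln(1-2\lambda s)+\frac{c^2s^2}{2(1-2\lambda s)}.
\]
Expanding both terms in powers of $s$ and multiplying the coefficient of $s^m$ by $m!$ gives, for $m\ge 2$,
\[
\kappa_m(cW+\lambda W^2)=2^{m-1}(m-1)!\,\lambda^m+m!\,2^{m-3}\,c^2\lambda^{m-2}.
\]
This holds for $m=2$ as well, where the second term is $c^2$. Note that for $m\ge2$ the mean $\sum_j\lambda_j$ plays no role. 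The expansion is legitimate because all moments exist and the cumulant generating function is analytic near $s=0$.

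\emph{Part (ii), summing and converting norms.} Summing over $j$,
\[
|\kappa_m|\le C_m\Big(\sum_j|\lambda_j|^m+\sum_j c_j^2|\lambda_j|^{m-2}\Big)\le C_m\big(d\|M\|^m+\|M\|^{m-2}\|c\|^2\big).
\]
It remains to return to the original quantities:
\[
\|M\|\le\|\Omega\|\,\|A\|,\qquad \|c\|^2=b'\Omega b\le\|\Omega\|\,\|b\|^2.
\]
Substituting these yields $c^{st}d\|\Omega\|^m\|A\|^m+c^{st}\|\Omega\|^{m-1}\|A\|^{m-2}\|b\|^2$, which is the claimed bound.

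\emph{Main point requiring care.} The main subtlety is that $C_m$ grows like $m!\,2^m$. The ``absolute constant'' therefore has to be read as depending only on $m$, which is fixed in every application since only finitely many cumulant orders are used. It does not depend on $d$, $\Omega$, $A$ or $b$, and this independence from $d$ is exactly what the later sparse-contraction arguments require.
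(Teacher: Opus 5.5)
Your proof is correct and follows the route the stated bound is built on: whitening and diagonalizing reduce $b'\mathcal U+\mathcal U'A\mathcal U$ to independent terms $c_jW_j+\lambda_jW_j^2$, which yields the closed form $\kappa_m=2^{m-1}(m-1)!\,\mathrm{tr}\big((A\Omega)^m\big)+m!\,2^{m-3}\,b'\Omega(A\Omega)^{m-2}b$ for $m\ge 2$, and you then bound the two pieces in operator norm exactly as the statement's two terms require. Your remark that $c^{st}$ must depend on $m$ is also right: with $d=1$, $b=0$, $A=\Omega=1$ one gets $\kappa_m=2^{m-1}(m-1)!$ while the right-hand side is just $c^{st}$. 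Your justification, however, is inaccurate: it is not true that only finitely many orders are used, since the paper sums the full series $\sum_{m\ge 2}\kappa_m/m!$ in the proof of Lemma~\ref{lemma:LaplaceLikelihood}. The $m$-dependence is harmless there for a different reason: after dividing by $m!$, both terms are geometric in $2\|A\|\,\|\Omega\|$, which is $o_P(1)$ under the rates $\|A\|=o_P((NT)^{1/2})$ and $\|\Omega\|=\mathcal O_P((NT)^{-1/2})$ used in that proof.
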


\begin{proof}[Proof of Lemma \ref{lemma:LaplaceLikelihood}]
    For conciseness, let $u_{\bar{\phi}}:=\phi-\overline{\phi}(\beta)$ and
    \[
    \Omega_{\bar\phi}:=\tfrac{1}{\sqrt{NT}}\,\overline{\mathcal{H}}(\beta,\overline\phi(\beta))^{-1},\quad
    \mu_{\bar\phi}:=\sqrt{NT}\Omega_{\bar\phi} \, \mathcal S(\beta,\overline\phi(\beta)),\quad
    \mathcal U_{\bar\phi}:=u_{\bar\phi}-\mu_{\bar\phi}.
    \]
    
    Consider a mean–value expansion of $\sqrt{NT}\,\mathcal{L}(\beta,\phi)+\ln p(\beta,\phi)$ around $\phi=\overline{\phi}(\beta)$:
    \begin{eqnarray}
        &&\sqrt{NT}\,\mathcal{L}(\beta,\phi)+\ln p(\beta,\phi)
        -\Big\{\sqrt{NT}\,\mathcal{L}(\beta,\overline{\phi}(\beta))+\ln p(\beta,\overline{\phi}(\beta))\Big\}  \notag \\
        &=&\sqrt{NT}\,u_{\bar{\phi}}'\,\mathcal{S}(\beta,\overline{\phi}(\beta))
        -\tfrac{\sqrt{NT}}{2}\,u_{\bar{\phi}}'\,\overline{\mathcal{H}}(\beta,\overline{\phi}(\beta))\,u_{\bar{\phi}}
        + R_{\mathcal{L}^I}(\beta)         ,          \label{eqn:prof1}
    \end{eqnarray}
    where
    $
    R_{\mathcal{L}^I}(\beta)
    =-\tfrac{\sqrt{NT}}{2}\,u_{\bar{\phi}}'\,\widetilde{\mathcal{H}}(\beta,\overline{\phi}(\beta))\,u_{\bar{\phi}}
    -\tfrac{\sqrt{NT}}{2}\,u_{\bar{\phi}}' \left( {\mathcal{H}}(\beta,\dot{\phi}(\beta)) - {\mathcal{H}}(\beta,\overline{\phi}(\beta)) \right) u_{\bar{\phi}} \\
    + u_{\bar{\phi}}' [\partial_{\phi}\ln p(\beta,\check{\phi}(\beta))],
    $
    and $\dot{\phi}(\beta)$ and $\check{\phi}(\beta)$ are intermediate points between $\phi$ and $\overline{\phi}(\beta)$.
    Completing the square via Lemma~\ref{lemma:pre1}, we rewrite equation \eqref{eqn:prof1} as
    \begin{eqnarray*}
        \tfrac{\sqrt{NT}}{2}\,\mathcal{S}(\beta,\overline{\phi}(\beta))'
        \,\overline{\mathcal{H}}(\beta,\overline{\phi}(\beta))^{-1}\,
        \mathcal{S}(\beta,\overline{\phi}(\beta))
        -\tfrac{\sqrt{NT}}{2}\,\mathcal{U}_{\bar{\phi}}'\,\overline{\mathcal{H}}(\beta,\overline{\phi}(\beta))\,\mathcal{U}_{\bar{\phi}}
        +R_{\mathcal{L}^I}(\beta) ,
    \end{eqnarray*}
    and $\mathrm{d}\phi=\mathrm{d}u_{\bar{\phi}}=\mathrm{d}\mathcal{U}_{\bar{\phi}}$ by translation.
    
    Suppose $\mathcal{U}_{\bar{\phi}}$ follows a $\dim \phi$–variate normal distribution with mean zero and covariance 
    $\Omega_{\bar\phi}$, and
    write $\langle g\rangle_{\bar\phi}:=\mathbb{E}[g(\mathcal{U}_{\bar{\phi}})]$ for expectation with respect to $\mathcal{U}_{\bar{\phi}}$.
    Recalling $e^{\sqrt{NT}\,\mathcal{L}^{I}(\beta)}
    =\int e^{\sqrt{NT}\,\mathcal{L}(\beta,\phi)+\ln p(\beta,\phi)} \mathrm{d}\phi$, and evaluating the Gaussian integral, we obtain
    $
    e^{\sqrt{NT} \mathcal{L}^I(\beta)} =   (\frac{2\pi}{\sqrt{NT}} )^{\frac{\dim\phi}{\sqrt{NT}}}  
    e^{
        \sqrt{NT}\mathcal{L} ( \beta,\overline{\phi}(\beta)  ) +\ln p (\beta, \overline{\phi}(\beta)  ) 
        - \mathcal{D}_{\mathcal{L}^I} ( \beta,\overline{\phi}(\beta)  )
    }
    \times  \langle e^{R_{\mathcal{L}^I}(\beta)}  \rangle_{\bar\phi} ,
    $
    where $\mathcal{D}_{\mathcal{L}^I}$ is defined in Lemma \ref{lemma:LaplaceLikelihood}. Taking logs and dividing by $\sqrt{NT}$ gives
    \begin{eqnarray*} 
        \mathcal{L}^I(\beta) - \mathcal{L} ( \beta,\overline{\phi}(\beta)  )  
        &= & C^{st}   +\tfrac{\ln p (\beta, \overline{\phi}(\beta)  ) - \mathcal{D}_{\mathcal{L}^I} ( \beta,\overline{\phi}(\beta)  )}{\sqrt{NT}}
        + \tfrac{\ln  \langle e^{ R_{\mathcal{L}^I}(\beta)  }  \rangle_{\bar\phi}}{\sqrt{NT}} .
    \end{eqnarray*}
    
    It remains to show that $ |\ln\langle e^{  R_{\mathcal{L}^I}(\beta)  }\rangle_{\bar{\phi}} |=o_P((NT)^{1/2})$ uniformly over $\beta\in\mathscr{B}(r_{\beta},\beta_0)$. 
    We decompose the remainder 
    $
    R_{\mathcal{L}^I}(\beta)= R^{(0)}_{\mathcal{L}^I} + R^{(1)}_{\mathcal{L}^I} +R^{(2)}_{\mathcal{L}^I},
    $
    \begin{eqnarray*}
        R^{(0)}_{\mathcal{L}^I}&=& 	-\tfrac{\sqrt{NT}}{2}\,\mu_{\bar{\phi}}'\,\widetilde{\mathcal{H}}(\beta,\overline{\phi}(\beta))\,\mu_{\bar{\phi}}
        -\tfrac{\sqrt{NT}}{2}\,\mu_{\bar{\phi}}' \left( {\mathcal{H}}(\beta,\dot{\phi}(\beta)) - {\mathcal{H}}(\beta,\overline{\phi}(\beta)) \right) \mu_{\bar{\phi}}  \notag \\
        && + \mu_{\bar{\phi}}'\,[\partial_{\phi}\ln p(\beta,\check{\phi}(\beta))], \label{eqn:R0} \\
        R^{(1)}_{\mathcal{L}^I}&=&   \mathcal{U}_{\bar{\phi}}'b , \label{eqn:R1}  \qquad
        R^{(2)}_{\mathcal{L}^I}= \mathcal{U}_{\bar{\phi}}' A \mathcal{U}_{\bar{\phi}} 
        , \label{eqn:R2}  
    \end{eqnarray*}
    with
    $
    b:=-\sqrt{NT} \widetilde{\mathcal{H}}(\beta,\overline{\phi}(\beta))\mu_{\bar{\phi}}
    -\sqrt{NT}\left( {\mathcal{H}}(\beta,\dot{\phi}(\beta)) - {\mathcal{H}}(\beta,\overline{\phi}(\beta)) \right) \mu_{\bar{\phi}} 
    + \partial_{\phi}\ln p(\beta,\check{\phi}(\beta)),
    $
    and
    $
    A:=-\tfrac{\sqrt{NT}}{2} \widetilde{\mathcal{H}} (\beta,\overline{\phi}(\beta))
    -\tfrac{\sqrt{NT}}{2} \left( {\mathcal{H}}(\beta,\dot{\phi}(\beta)) - {\mathcal{H}}(\beta,\overline{\phi}(\beta)) \right).
    $
    Here
    $R^{(0)}_{\mathcal{L}^I}$ is constant in $\mathcal U_{\bar\phi}$, 
    $R^{(1)}_{\mathcal{L}^I}$ is linear, and 
    $R^{(2)}_{\mathcal{L}^I}$ is a quadratic form.
    
    Before proceeding, we record several baseline controls for the bounds. By Assumption \ref{assumption:regularity} and Lemma \ref{assumption:regularity}, uniformly in $\beta$ and $\phi$,
    $\| \overline{\mathcal{H}}^{-1}(\beta,\overline{\phi}(\beta)) \| = \mathcal{O}_P(1)$,
    $\| \widetilde{\mathcal{H}}(\beta,\overline{\phi}(\beta)) \| = o_P(1)$,
    $\| {\mathcal{H}}(\beta,\dot{\phi}(\beta)) - {\mathcal{H}}(\beta,\overline{\phi}(\beta)) \| =o_P(1)$, and
    $\| \partial_{\phi} \ln p(\beta,\phi)  \| =\mathcal{O}_P((NT)^{1/4})$.
    Further assuming $\| \beta-\beta_0\| = \mathcal{O}_P((NT)^{-1/4})$, Lemma \ref{assumption:regularity} gives
    $\mathcal{S}(\beta,\overline{\phi}(\beta))=\mathcal{O}_P(1)$.
    Consequently,
    \begin{eqnarray}
        &&\|\Omega_{\bar\phi}\|=\mathcal{O}_P((NT)^{-1/2}),
        \qquad   
        \|\mu_{\bar\phi}\|=\mathcal{O}_P(1),
        \qquad   
        |R^{(0)}_{\mathcal{L}^I}| = o_P((NT)^{1/2}), \notag \\
        &&   
        \| b \| = o_P((NT)^{1/2}), 
        \qquad   
        \| A \| = o_P((NT)^{1/2}) .\label{eqn:boundsProofL1}
    \end{eqnarray}
    These bounds are used in cumulant calculations below.
    By a cumulant expansion, 
    $$
    \ln\langle e^{R_{\mathcal{L}^I}(\beta)}\rangle_{\bar{\phi}} = \sum_{m=1}^{\infty} \dfrac{\kappa_m (R_{\mathcal{L}^I}(\beta) )}{m!},
    $$
    where $\kappa_m(\cdot)=\kappa(\cdot^{\times m})$ denotes the $m$th order cumulant.
    The first cumulant $\kappa_1$ equals the mean. 	
    In particular, 
    $\langle R^{(1)}_{\mathcal{L}^I}\rangle_{\bar\phi} =0$ by oddness.
    Thus,
    $
    \kappa_1 (R^I_{\mathcal L}(\beta) )=\langle R^I_{\mathcal L}(\beta)\rangle_{\bar\phi}
    = R^{(0)}_{\mathcal{L}^I} +\langle R^{(2)}_{\mathcal{L}^I}  \rangle_{\bar\phi}.
    $
    By Lemma \ref{lemma:pre2}(i),
    $\langle R^{(2)}_{\mathcal{L}^I}  \rangle_{\bar\phi} \le\mathcal{O}(N+T)\|A\| \|\Omega_{\bar\phi}\|   = o_P((NT)^{1/2})$. Therefore,  $|\kappa_1 (R^I_{\mathcal L}(\beta) )|= o_P((NT)^{1/2})  $.

    For $m\ge 2$, the constant term  $R^{(0)}_{\mathcal{L}^I}$ does not enter the cumulant.  By Lemma \ref{lemma:pre2}(ii) and together with the bounds provided in \eqref{eqn:boundsProofL1}, 
    $$
    \kappa_m (R_{\mathcal{L}^I}(\beta)) =\kappa((b'\mathcal{U}_{\bar{\phi}} +  \mathcal{U}_{\bar{\phi}}' A \mathcal{U}_{\bar{\phi}} )^{\times m}) =  o_P((NT)^{1/2}),
    \qquad
    m\ge 2.
    $$
    Combining the foregoing results we have
    $
    \big|  \ln\langle e^{R^I_{\mathcal{L}}(\beta)}\rangle_{\bar{\phi}} \big| =o_P((NT)^{1/2})
    $.
    The conclusion follows, as stated in Lemma~\ref{lemma:LaplaceLikelihood}.
\end{proof}

		We next develop a target–centered Laplace approximation for posterior means: 
		Theorem~\ref{thm:asy_expansions} (posterior means of $(\beta,\phi)$) and Theorem~\ref{thm:APE_expansions}(i) (posterior mean of the APE) follow from the similar argument. We prove Theorem~\ref{thm:asy_expansions} first.

		\begin{proof}[Proof of Theorem \ref{thm:asy_expansions}]
			\quad\\
			\underline{\textit{Step 1) Gaussian completion and posterior means.}} 
			
			Let $u_\beta =\beta-\beta_0$, $u_\phi =\phi-\phi_0$,
			\(
			\theta=(\beta',\phi')',\ 
			u_\theta=(u_\beta',u_\phi')',\ 
			\partial_\theta\mathcal L=(\partial_{\beta'}\mathcal L,\ \mathcal S')'
			\)
			and
			\(
			\Omega_{\theta\theta}
			:=
			-\begin{pmatrix}
				\partial_{\beta\beta'}\overline{\mathcal L} &\quad  \partial_{\beta\phi'}\overline{\mathcal L}\\
				\partial_{\phi\beta'}\overline{\mathcal L} &\quad  -\overline{\mathcal H}
			\end{pmatrix}.
			\)
			A mean-value expansion of $\sqrt{NT}\,\mathcal L(\beta,\phi)+\ln p(\beta,\phi)$ around $(\beta_0,\phi_0)$ yields
			\begin{eqnarray*}
				&&	\sqrt{NT}\,\mathcal{L}(\beta,\phi)+\ln p(\beta,\phi) \\
				&=&  \sqrt{NT}\,\mathcal{L}+\ln p+ \sqrt{NT}u_{\theta}' \partial_{\theta}\mathcal{L}  -\tfrac{\sqrt{NT}}{2} u_{\theta}' \Omega_{\theta\theta} u_{\theta}  + R_{\mathcal{L}}(\beta,\phi) \\
				&=&  \sqrt{NT}\,\mathcal{L}+\ln p+     \tfrac{\sqrt{NT}}{2}   (\partial_{\theta'}\mathcal{L}) \Omega_{\theta\theta}^{-1} (\partial_{\theta}\mathcal{L})    - \tfrac{\sqrt{NT}}{2}\mathcal{U}_{\theta}' \Omega_{\theta\theta} \mathcal{U}_{\theta} + R_{\mathcal{L}}(\beta,\phi) 
			\end{eqnarray*}
			where the second equality completes the square (Lemma \ref{lemma:pre1}) with
			$\mathcal U_\theta:=u_\theta-\mu_\theta$, $\mu_\theta:=\Omega^{-1}\partial_\theta\mathcal{L}$,
			and the explicit remainder $R_{\mathcal L}(\theta)$ is given in Appendix~\ref{app:remainders} of the Supplemental Material (see equations \eqref{eqn:Rtheta}--\eqref{eqn:R_quardic}).
			
			Let $\Sigma_{\theta\theta}:=\Omega_{\theta\theta}^{-1}/\sqrt{NT}$.
			Using the $2\times 2$ block inversion formula,
			\begin{footnotesize}
				\[
				\Sigma_{\theta\theta}
				=
				\begin{pmatrix}
					\frac{1}{NT}\overline W^{-1}=:\Sigma_{\beta\beta}
					&\mathrm{\qquad}
					\frac{1}{NT}\,\overline W^{-1}(\partial_{\beta\phi'}\overline{\mathcal L})\overline{\mathcal H}^{-1}=:\Sigma_{\beta\phi}
					\\
					\frac{1}{NT}\,\overline{\mathcal H}^{-1}(\partial_{\phi\beta'}\overline{\mathcal L})\overline W^{-1}=:\Sigma_{\phi\beta}
					&\mathrm{\qquad}
					\frac{1}{\sqrt{NT}}\overline{\mathcal H}^{-1}
					+\frac{1}{NT}\,\overline{\mathcal H}^{-1}(\partial_{\phi\beta'}\overline{\mathcal L})\overline W^{-1}(\partial_{\beta\phi'}\overline{\mathcal L})\overline{\mathcal H}^{-1}=:\Sigma_{\phi\phi}
				\end{pmatrix},
				\]
			\end{footnotesize}
			where $\overline W$ is defined in Theorem \ref{thm:asy_expansions}.
			We then have 
			$$\mu_{\beta} = \tfrac{1}{\sqrt{NT}} \overline{W}^{-1}U^0, \qquad
			\mu_{\phi} =  \overline{\mathcal{H}}^{-1}\left(  \mathcal{S}   + (\partial_{\phi\beta'}\overline{\mathcal{L}} ) \mu_{\beta}     \right),  $$
			where $U^0$ in defined in Theorem \ref{thm:asy_expansions}. 
			Applying Assumption \ref{assumption:regularity} and norm inequalities, it is straightforward to obtain
			$\|  \mu_{\beta} \| =  \mathcal{O}_P((NT)^{-1/2})$ ,
			$\|\mu_{\phi} \| = \mathcal{O}_P(1)$,
			$\|\Sigma_{\beta\beta} \| = \mathcal{O}_P((NT)^{-1})$,
			$\| \Sigma_{\beta\phi} \| =  \mathcal{O}_P((NT)^{-3/4})$, and
			$\| \Sigma_{\phi\phi} \| =  \mathcal{O}_P((NT)^{-1/2})$.

			Treating $\mathcal U_\theta\sim\mathcal N(0,\Sigma_{\theta\theta})$ and writing
			\(
			\langle g\rangle_\theta:=\mathbb E[g(\mathcal U_\theta)],
			\)
			the Gaussian integral gives the normalization factor $\mathcal Z$ and the posterior-means\footnote{By the Gaussian integral,
				$
				\int e^{\sqrt{NT}\mathcal{L}(\beta,\phi)+\ln p(\beta,\phi)} \mathrm{d}\theta 
				\;=\; \mathcal{Z}\cdot \langle e^{R_{\mathcal{L}}(\theta)} \rangle_{\theta},
				$
				with the normalizing factor 
				\(
				\mathcal{Z} =(2\pi)^{(\dim\beta+\dim\phi)/2}(\det\Sigma_{\theta\theta})^{-1/2}
				\exp\!\big\{\sqrt{NT}\mathcal L+\ln p+\frac{1}{2}
				(\partial_{\theta'}\mathcal L)  \Sigma_{\theta\theta}   (\partial_\theta\mathcal L)\big\}.
				\)
				Moreover, using $\theta=\theta_0+\mu_\theta + \mathcal{U}_\theta $,  we have 
				$
				\widehat\theta^E
				=\frac{\langle (\theta_0+\mu_\theta+\mathcal U_\theta)e^{R_{\mathcal L}(\theta)}\rangle_\theta}{\langle e^{R_{\mathcal L}(\theta)}\rangle_\theta}
				=\theta_0+\mu_\theta+\frac{\langle \mathcal U_\theta e^{R_{\mathcal L}(\theta)}\rangle_\theta}{\langle e^{R_{\mathcal L}(\theta)}\rangle_\theta}
				$.
			}
			\begin{equation}
				\widehat\beta^E=\beta_0+\mu_\beta+\frac{\langle \mathcal U_\beta e^{R_{\mathcal L}(\beta,\phi)}\rangle_\theta}{\langle e^{R_{\mathcal L}(\beta,\phi)}\rangle_\theta},
				\qquad
				\widehat\phi^E=\phi_0+\mu_\phi+\frac{\langle \mathcal U_\phi e^{R_{\mathcal L}(\beta,\phi)}\rangle_\theta}{\langle e^{R_{\mathcal L}(\beta,\phi)}\rangle_\theta}.\label{eq:post-mean-ratio}
			\end{equation}

			\underline{\textit{Step 2) Cumulant expansions.}}
			
			For any measurable $f(\mathcal U_\theta)$, the following identity holds: 
			\begin{equation}\label{eq:cumulant-ratio}
				\frac{\langle f(\mathcal U_\theta)e^{R_{\mathcal L}(\beta,\phi)}\rangle_\theta}{\langle e^{R_{\mathcal L}(\beta,\phi)}\rangle_\theta}
				=
				\sum_{m=0}^\infty\frac{1}{m!}\kappa\big(f(\mathcal U_\theta),R_{\mathcal L}^{\times m}(\beta,\phi)\big).
			\end{equation}
			where $\kappa(\cdot)$ denotes the joint cumulant. 
			We apply \eqref{eq:cumulant-ratio} with $f(\mathcal U_\theta)=\mathcal U_\beta$ and $f(\mathcal U_\theta)=\mathcal U_\phi$. 
			Note that $\kappa(U_\bullet)=\langle\mathcal U_\bullet\rangle_\theta=0$, the $m=0$ term is zero, we therefore have
			$
			\frac{\big\langle \mathcal U_\bullet,e^{R_{\mathcal L}}\big\rangle_\theta}{\big\langle            e^{R_{\mathcal L}}\big\rangle_\theta}
			= \sum_{m=1}^{\infty}\frac{1}{m!}\kappa\big(\mathcal U_\bullet,R_{\mathcal L}^{\times m}\big).
			$
			
			For $m=1$, the first cumulant $\kappa(\mathcal U_\bullet,R_{\mathcal L}(\beta,\phi)) = \mathrm{Cov}(\mathcal U_\bullet,R_{\mathcal L}(\beta,\phi)) = \langle \mathcal U_\bullet R_{\mathcal L}(\beta,\phi) \rangle_\theta   $.
			Write $R_{\mathcal L}(\beta,\phi)=R^{(0)}+R^{(1)}+R^{(2)}+R^{(3)}+R^{(4)}$, where $R^{(j)}$ collects the terms that contain precisely $j$ factors of $(\mathcal U_\beta,\mathcal U_\phi)$ (see \eqref{eqn:Rtheta_decompose} in Appendix~\ref{app:remainders}).
			By oddness, $\kappa(\mathcal U_\bullet,R^{(0)})=\kappa(\mathcal U_\bullet,R^{(2)})=\kappa(\mathcal U_\bullet,R^{(4)})=0$. Hence  the only contributions to the first cumulant  come from $R^{(1)}$ (\textit{linear} part) and $R^{(3)}$ (\textit{cubic} part).
			
			\underline{\emph{2a) First cumulant: linear part $R^{(1)}$.}}
			From \eqref{eqn:Rtheta_decompose}, the linear terms are exactly those listed in \eqref{eqn:R_linear}.
			Contracting each of them with $\mathcal U_\phi$ and using the Gaussian covariance blocks
			$\Sigma_{\beta\beta},\Sigma_{\beta\phi},\Sigma_{\phi\phi}$, we obtain
			\begin{eqnarray*}
				&&\kappa(\mathcal U_\phi,R^{(1)})
				=
				\sqrt{NT}\,\Sigma_{\phi\phi}(-\widetilde{\mathcal H})\mu_\phi
				+\sqrt{NT}\,\Sigma_{\phi\beta}(\partial_{\beta\phi'}\widetilde{\mathcal L})\mu_\phi
				+\Sigma_{\phi\phi}\,\partial_\phi\ln p
				+\Sigma_{\phi\beta}\,\partial_\beta\ln p \notag\\
				&& \quad
				+\sqrt{NT}\,\Sigma_{\phi\phi}\!\sum_{k,g} (\partial_{\phi\phi_k\phi_g}\overline{\mathcal L})\,\mu_{\phi,k}\mu_{\phi,g}
				+\sqrt{NT}\,\Sigma_{\phi\beta}\!\sum_{k,g} (\partial_{\beta\phi_k\phi_g}\overline{\mathcal L})\,\mu_{\phi,k}\mu_{\phi,g} 
				+\mathsf r_\phi^{(1)}, \label{eq:K1phi}
			\end{eqnarray*}
			where, by Assumption~\ref{assumption:regularity} and Lemma~\ref{lemma:AfterRegularity},
			$\|\mathsf r^{(1)}_\phi\|=o_P((NT)^{-1/4})$.\footnote{We show three representative bounds in $ \mathsf{r}^{(1)}_\phi$. By Assumption \ref{assumption:regularity} and Lemma \ref{lemma:AfterRegularity}:
				
				(i) $\|\kappa(\mathcal U_\phi,\sqrt{NT}\mathcal U_\phi'(\partial_{\beta\phi'}\widetilde{\mathcal L})\mu_\beta)\|
				\le \sqrt{NT}\|\Sigma_{\phi\phi}\|\|\partial_{\beta\phi'}\widetilde{\mathcal L}\| \|\mu_\beta\| \\
				\mathrm{\qquad}
				= \sqrt{NT} \mathcal O_P((NT)^{-1/2}) \mathcal O_P(1) \mathcal O_P((NT)^{-1/2})
				=\mathcal O_P((NT)^{-1/2})$.
				
				(ii) $\|\kappa(\mathcal U_\phi,\sqrt{NT} \partial_{\beta\phi^2}\widetilde{\mathcal L}[\mathcal U_\beta,\mu_\phi^{\times2}])\| \\
				\mathrm{\qquad}
				\le \sqrt{NT}\|\Sigma_{\beta\phi}\| \|\partial_{\beta\phi\phi}\widetilde{\mathcal L}\| \|\mu_\phi\|^2
				=\sqrt{NT} \mathcal O_P((NT)^{-3/4})  o_P((NT)^{-1/8})  \mathcal O_P(1)
				=o_P((NT)^{-1/4})$.
				
				(iii) $\|\kappa(\mathcal U_\phi,\sqrt{NT} \partial_{\beta\phi^2}\overline{\mathcal L}[\mu_\beta,\mathcal U_\phi,\mu_\phi])\| 
				\le \sqrt{NT}\|\Sigma_{\phi\phi}\| \|\partial_{\beta\phi\phi}\overline{\mathcal L}\| \|\mu_\beta\| \|\mu_\phi\| \\
				\mathrm{\qquad}
				=\sqrt{NT} \mathcal O_P((NT)^{-1/2}) \mathcal O_P((NT)^\epsilon) \mathcal O_P((NT)^{-1/2}) \mathcal O_P(1)
				=\mathcal O_P((NT)^{-1/2+\epsilon})=o_P((NT)^{-1/4})$.}
			Further, we decompose $\mu_\phi=\mu_\phi^{(1)}+\mu_\phi^{(2)}$ with
			$\mu_\phi^{(1)}=\overline{\mathcal H}^{-1}\mathcal S$ and
			$\mu_\phi^{(2)}=\overline{\mathcal H}^{-1}(\partial_{\phi\beta'}\overline{\mathcal L})\mu_\beta$,
			where $\|\mu_\phi^{(2)}\|=\mathcal O_P((NT)^{-1/4})$. Thus
			\begin{eqnarray*}
				&&	\kappa(\mathcal U_\phi,R^{(1)})
				=
				\sqrt{NT}\,\Sigma_{\phi\phi}(-\widetilde{\mathcal H})\mu_\phi^{(1)}
				+\sqrt{NT}\,\Sigma_{\phi\beta}(\partial_{\beta\phi'}\widetilde{\mathcal L})\mu_\phi^{(1)}
				+\Sigma_{\phi\phi}\,\partial_\phi\ln p
				+\Sigma_{\phi\beta}\,\partial_\beta\ln p \notag\\
				&& \quad
				+\tfrac{\sqrt{NT}}{2}\Sigma_{\phi\phi}\!\sum_{k,g} (\partial_{\phi\phi_k\phi_g}\overline{\mathcal L})\,\mu_{\phi,k}^{(1)}\mu_{\phi,g}^{(1)}
				+\tfrac{\sqrt{NT}}{2}\Sigma_{\phi\beta}\!\sum_{k,g} (\partial_{\beta\phi_k\phi_g}\overline{\mathcal L})\,\mu_{\phi,k}^{(1)}\mu_{\phi,g}^{(1)}
				+ \widetilde{\mathsf r}^{(1)}_\phi,\label{eq:K1phi-final}
			\end{eqnarray*}
			with $\| \widetilde{\mathsf r}_\phi\|=o_P((NT)^{-1/4})$.
			A completely analogous contraction gives
			\begin{eqnarray*}
				&&	\kappa(\mathcal U_\beta,R^{(1)})
				=
				\sqrt{NT}\,\Sigma_{\beta\phi}(-\widetilde{\mathcal H})\mu_\phi^{(1)}
				+\sqrt{NT}\,\Sigma_{\beta\beta}(\partial_{\beta\phi'}\widetilde{\mathcal L})\mu_\phi^{(1)}
				+\Sigma_{\beta\phi}\,\partial_\phi\ln p
				+\Sigma_{\beta\beta}\,\partial_\beta\ln p \notag\\
				&&\quad
				+\tfrac{\sqrt{NT}}{2}\Sigma_{\beta\phi}\!\sum_{k,g} (\partial_{\phi\phi_k\phi_g}\overline{\mathcal L})\,\mu_{\phi,k}^{(1)}\mu_{\phi,g}^{(1)}
				+\tfrac{\sqrt{NT}}{2}\Sigma_{\beta\beta}\!\sum_{k,g} (\partial_{\beta\phi_k\phi_g}\overline{\mathcal L})\,\mu_{\phi,k}^{(1)}\mu_{\phi,g}^{(1)} 
				+ \widetilde{\mathsf r}_\beta^{(1)},\label{eq:K1beta-final}
			\end{eqnarray*}
			with $\| \widetilde{\mathsf r}_\beta^{(1)}\|=o_P((NT)^{-1/2})$.

			\underline{\emph{2b) First cumulant: cubic part $R^{(3)}$.}}
			$R^{(3)}$ consists of both the purely cubic terms in $(\mathcal U_\beta,\mathcal U_\phi)$ and the mixed cubic terms involving one deterministic mean, and its explicit form is given in \eqref{eqn:R_cubic} in the Supplemental Material. 
			By Wick/Isserlis formulae, we have
			$
			\langle  \mathcal{U}_\bullet \mathcal{U}_\phi'  \sum_{k,g}(\partial_{\phi\phi_k\phi_g}\overline{\mathcal L}) \mathcal{U}_{\phi_k} \mathcal{U}_{\phi_g}   \rangle_\theta
			= 3 \Sigma_{\bullet\phi}\sum_{k,g}(\partial_{\phi\phi_k\phi_g}\overline{\mathcal L})\,(\Sigma_{\phi\phi})_{kg},
			$
			and
			$$
			\langle  \mathcal{U}_\bullet \mathcal{U}_\beta'  \sum_{k,g}(\partial_{\beta\phi_k\phi_g}\overline{\mathcal L}) \mathcal{U}_{\phi_k} \mathcal{U}_{\phi_g}   \rangle_\theta
			=  \Sigma_{\bullet\beta}\sum_{k,g}(\partial_{\beta\phi_k\phi_g}\overline{\mathcal L})\,(\Sigma_{\phi\phi})_{kg} 
			+  \Sigma_{\bullet\phi}\sum_{k,g}(\partial_{\phi\beta_k\phi_g}\overline{\mathcal L})\,(\Sigma_{\beta\phi})_{kg}.
			$$
			By Assumption \ref{assumption:regularity}(iv),  the second term on the RHS vanishes asymptotically, i.e.,
			$
			\Big\|\Sigma_{\bullet\phi}\sum_{k,g}(\partial_{\phi\beta_k\phi_g}\overline{\mathcal L})\,(\Sigma_{\beta\phi})_{kg} \Big\| \leq \|\Sigma_{\bullet\phi}\| o_P((NT)^{-1/4}).
			$
			Therefore
			\begin{align}
				\kappa(\mathcal U_\phi,R^{(3)})
				&=
				\tfrac{\sqrt{NT}}{2}\,\Sigma_{\phi\phi}\sum_{k,g}(\partial_{\phi\phi_k\phi_g}\overline{\mathcal L})\,(\Sigma_{\phi\phi})_{kg}
				+\tfrac{\sqrt{NT}}{2}\,\Sigma_{\phi\beta}\sum_{k,g}(\partial_{\beta\phi_k\phi_g}\overline{\mathcal L})\,(\Sigma_{\phi\phi})_{kg}
				+\mathsf r^{(3)}_\phi, \notag
			\end{align}
			where $\mathsf r^{(3)}_\phi$ collects all other terms, and Assumption \ref{assumption:regularity} is sufficient to ensure   $\| \mathsf r^{(3)}_\phi \|=o_P((NT)^{-1/4})$. 
			Further, we  decompose  $\Sigma_{\phi\phi}=\Sigma_{\phi\phi}^{(1)}+\Sigma_{\phi\phi}^{(2)}$ with
			$\Sigma_{\phi\phi}^{(1)}=\overline{\mathcal H}^{-1}/\sqrt{NT}$ and
			$\Sigma_{\phi\phi}^{(2)}=\overline{\mathcal H}^{-1}(\partial_{\phi\beta'}\overline{\mathcal L})\overline W^{-1}(\partial_{\beta\phi'}\overline{\mathcal L})\overline{\mathcal H}^{-1}/(NT)$.
			Lemma~\ref{lemma:AfterRegularity}(iv)  implies that the contribution of $\Sigma_{\phi\phi}^{(2)}$  can be absorbed into the remainder. Thus
			\begin{align}
				\kappa(\mathcal U_\phi,R^{(3)})
				&=
				\tfrac{\sqrt{NT}}{2}\,\Sigma_{\phi\phi}\sum_{k,g}(\partial_{\phi\phi_k\phi_g}\overline{\mathcal L})\,(\Sigma_{\phi\phi}^{(1)})_{kg}
				+\tfrac{\sqrt{NT}}{2}\,\Sigma_{\phi\beta}\sum_{k,g}(\partial_{\beta\phi_k\phi_g}\overline{\mathcal L})\,(\Sigma_{\phi\phi}^{(1)})_{kg}
				+\widetilde{\mathsf r}^{(3)}_\phi, \notag
			\end{align}
			with $\|\widetilde{\mathsf r}^{(3)}_\phi\|=o_P((NT)^{-1/4})$.
			An identical contraction yields
			\begin{align}
				\kappa(\mathcal U_\beta,R^{(3)})
				&=
				\tfrac{\sqrt{NT}}{2}\,\Sigma_{\beta\phi}\sum_{k,g}(\partial_{\phi\phi_k\phi_g}\overline{\mathcal L})\,(\Sigma_{\phi\phi}^{(1)})_{kg}
				+\tfrac{\sqrt{NT}}{2}\,\Sigma_{\beta\beta}\sum_{k,g}(\partial_{\beta\phi_k\phi_g}\overline{\mathcal L})\,(\Sigma_{\phi\phi}^{(1)})_{kg}
				+\widetilde{\mathsf r}^{(3)}_\beta,\notag
			\end{align}
			with $\|\widetilde{\mathsf r}^{(3)}_\beta\|=o_P((NT)^{-1/2})$.
			
			\underline{\emph{2c) First cumulant: final collection.}}
			By Lemma \ref{lemma:LaplaceLikelihood}, the leading bias in the integrated log–likelihood
			$\mathcal L^I(\beta)$ is given by $\mathcal D_{\mathcal L^I}(\beta,\phi)$.
			Evaluated at the truth $(\beta_0,\phi_0)$, its gradient admits the representation
			\begin{align}
				&&\scalebox{0.94}{$\partial_\beta \mathcal D_{\mathcal L^I}
					= \sqrt{NT}\partial_{\beta\phi'}\widetilde{\mathcal L}\ \overline{\mathcal H}^{-1}\mathcal S
					+\tfrac{\sqrt{NT}}{2}\sum_{k,g}\partial_{\beta\phi_k\phi_g}\overline{\mathcal L}
					\Big( (\overline{\mathcal H}^{-1}\mathcal S)_k(\overline{\mathcal H}^{-1}\mathcal S)_g
					+\tfrac{(\overline{\mathcal H}^{-1})_{kg}}{\sqrt{NT}}\Big),$} \label{eq:biasbeta} \\
				&&\scalebox{0.94}{$\partial_\phi \mathcal D_{\mathcal L^I}
					= -\sqrt{NT}\widetilde{\mathcal H}\,\overline{\mathcal H}^{-1}\mathcal S
					+\tfrac{\sqrt{NT}}{2}\sum_{k,g}\partial_{\phi\phi_k\phi_g}\overline{\mathcal L}
					\Big( (\overline{\mathcal H}^{-1}\mathcal S)_k(\overline{\mathcal H}^{-1}\mathcal S)_g
					+\tfrac{(\overline{\mathcal H}^{-1})_{kg}}{\sqrt{NT}}\Big),$} \label{eq:biasphi}
			\end{align}
			where we use the fact that $\mathcal{S}=\mathcal{S}-\overline{\mathcal{S}}=\widetilde{\mathcal{S}}$.
			Collecting the bounds from the foregoing steps and substituting
			\(
			\Sigma_{\beta\beta},
			\Sigma_{\beta\phi},
			\Sigma_{\phi\phi}^{(1)},
			\)
			together with $\mu_\beta$ and
			$\mu_\phi^{(1)}$,
			we therefore obtain
			\begin{eqnarray*}
				&& \kappa(\mathcal U_\beta,R_{\mathcal L}(\beta,\phi))=\overline W^{-1}\Big(\mathcal B^\beta+(\partial_{\beta\phi'}\overline{\mathcal L})\overline{\mathcal H}^{-1}\mathcal B^\phi\Big)/\sqrt{NT} + {\mathsf r}^{(1)}_\beta + \widetilde{\mathsf r}^{(3)}_\beta  \\
				&& \kappa(\mathcal U_\phi,R_{\mathcal L}(\beta,\phi))=\overline{\mathcal{H}}^{-1}  \Big(
				\mathcal{B}^{\phi}
				+ \tfrac{1}{\sqrt{NT}}[\partial_{\phi\beta'}\overline{\mathcal{L}}]\overline{W}^{-1}  \big( \mathcal{B}^{\beta} 
				+ [\partial_{\beta\phi'}\overline{\mathcal{L}}]\overline{\mathcal{H}}^{-1} \mathcal{B}^{\phi}\big)
				\Big)
				+ {\mathsf r}^{(1)}_\phi + \widetilde{\mathsf r}^{(3)}_\phi    .
			\end{eqnarray*}

			\underline{\emph{2d) Higher cumulants.}} 
			By Lemma \ref{lemma:remainders}(ii), for all $m\geq 2$, the higher cumulants satisfy
			$\|\kappa(\mathcal U_\beta,R_{\mathcal L}^{\times m}(\beta,\phi))\| = o_P\big((NT)^{-1/2}\big)$ and 
			$\|\kappa(\mathcal U_\phi,R_{\mathcal L}^{\times m}(\beta,\phi))\|  = o_P\big((NT)^{-1/4}\big)$. Thus, the proof is complete.
		\end{proof}

		\begin{proof}[Proof of Theorem \ref{thm:APE_expansions}]
			\quad\\
			\underline{\#\textit{Part (i): posterior mean of the APE.  }} \\
			Recall the notation introduced in the proof of Theorem \ref{thm:asy_expansions}.
			A second–order expansion of $\Delta(\beta,\phi)$ at $(\beta_0,\phi_0)$ gives
			\begin{align*}
				\Delta(\beta,\phi)
				= \Delta +(\partial_{\beta'}\overline\Delta) u_\beta +(\partial_{\phi'}\overline\Delta) u_\phi
				+\tfrac12 \mu_\phi'(\partial_{\phi\phi'}\overline\Delta)\mu_\phi
				+\tfrac12 \mathcal U_\phi'(\partial_{\phi\phi'}\overline\Delta)\mathcal U_\phi
				+R_\Delta(\beta,\phi),
			\end{align*}
			where the remainder $R_\Delta(\beta,\phi)$ is reported in \eqref{eqn:RDelta_decompose} in Appendix \ref{app:remainders}.
			For the posterior mean of APE $\widehat\Delta^E  =\frac{\langle \Delta(\beta,\phi)e^{R_{\mathcal L}(\beta,\phi)}\rangle_\theta}{\langle e^{R_{\mathcal L}(\beta,\phi)}\rangle_\theta} $,
			taking posterior expectations via the Gaussian completion used in Step 1 of the proof of Theorem \ref{thm:asy_expansions} yields
			\begin{align}
				\widehat\Delta^E
				&= \Delta 
				+(\partial_{\beta'}\overline\Delta)\!\left(\mu_\beta+\frac{\langle \mathcal U_\beta e^{R_{\mathcal L}(\beta,\phi)}\rangle_\theta}{\langle e^{R_{\mathcal L}(\beta,\phi)}\rangle_\theta}\right)
				+(\partial_{\phi'}\overline\Delta)\!\left(\mu_\phi+\frac{\langle \mathcal U_\phi e^{R_{\mathcal L}(\beta,\phi)}\rangle_\theta}{\langle e^{R_{\mathcal L}(\beta,\phi)}\rangle_\theta}\right) \notag\\
				&\quad
				+\tfrac12\,\mu_\phi'(\partial_{\phi\phi'}\overline\Delta)\mu_\phi
				+\tfrac12\,\frac{\langle \mathcal U_\phi'(\partial_{\phi\phi'}\overline\Delta)\mathcal U_\phi e^{R_{\mathcal L}(\beta,\phi)}\rangle_\theta}{\langle e^{R_{\mathcal L}(\beta,\phi)}\rangle_\theta}
				+\frac{\langle R_\Delta(\theta) e^{R_{\mathcal L}(\beta,\phi)}\rangle_\theta}{\langle e^{R_{\mathcal L}(\beta,\phi)}\rangle_\theta}.
				\label{eq:APE-mean-master}
			\end{align}
			The two linear terms in \eqref{eq:APE-mean-master} equal
			$
			(\partial_{\beta'}\overline\Delta)\,(\widehat\beta^E-\beta_0)
			+
			(\partial_{\phi'}\overline\Delta)\,(\widehat\phi^E-\phi_0),
			$
			exactly by \eqref{eq:post-mean-ratio}, i.e.,
			$\widehat\beta^E-\beta_0=\mu_\beta+\frac{\langle \mathcal U_\beta e^{R_{\mathcal L}(\beta,\phi)}\rangle}{\langle e^{R_{\mathcal L}(\beta,\phi)}\rangle}$ and the analogous identity for $\phi$.
			
			Therefore, it remains to analyze the last three terms in \eqref{eq:APE-mean-master}.

			\underline{\textit{1) Quadratic  term $\mathcal U_\phi'(\partial_{\phi\phi'}\overline\Delta)\mathcal U_\phi$.  }} 
			Applying the connected cumulant expansion \eqref{eq:cumulant-ratio} to 
			$f(\mathcal U_\theta)=\mathcal U_\phi'(\partial_{\phi\phi'}\overline\Delta)\mathcal U_\phi$ gives
			\[
			\frac{\langle \mathcal U_\phi'(\partial_{\phi\phi'}\overline\Delta)\mathcal U_\phi\,e^{R_{\mathcal L}(\beta,\phi)}\rangle_\theta}{\langle e^{R_{\mathcal L}(\beta,\phi)}\rangle_\theta}
			=\langle  \mathcal U_\phi'(\partial_{\phi\phi'}\overline\Delta)\mathcal U_\phi  \rangle_\theta
			+\sum_{m\ge1}\frac{\kappa\left(\mathcal U_\phi'(\partial_{\phi\phi'}\overline\Delta)\mathcal U_\phi,R_{\mathcal L}^{\times m}(\beta,\phi)\right)}{m!}.
			\]
			
			The second term ($m\ge1$ terms) is $o_P\big((NT)^{-1/2}\big)$ as provided by Lemma \ref{lemma:remainders}(ii). 
			For the first term, by Gaussianity,
			$
			\langle \mathcal U_\phi'(\partial_{\phi\phi'}\overline\Delta)\mathcal U_\phi\rangle_\theta
			=\mathrm{tr}\!\big((\partial_{\phi\phi'}\overline\Delta)\Sigma_{\phi\phi}\big)$.
			Decomposing $\Sigma_{\phi\phi}=\Sigma_{\phi\phi}^{(1)}+\Sigma_{\phi\phi}^{(2)}$,
			Lemma \ref{lemma:AfterRegularity}(iv) gives
			$\mathrm{tr}\!\big((\partial_{\phi\phi'}\overline\Delta)\Sigma_{\phi\phi}^{(2)}\big)=o_P\big((NT)^{-1/2}\big)$. Therefore 
			\begin{equation*}
				\frac{\langle \mathcal U_\phi'(\partial_{\phi\phi'}\overline\Delta)\mathcal U_\phi\,e^{R_{\mathcal L}(\beta,\phi)}\rangle_\theta}{\langle e^{R_{\mathcal L}}\rangle_\theta}
				=\tfrac{1}{\sqrt{NT}}\;\mathrm{tr}\!\big((\partial_{\phi\phi'}\overline\Delta)\,\overline{\mathcal H}^{-1}\big)
				+o_P\big((NT)^{-1/2}\big).
				\label{eq:UphiDUphi-ratio}
			\end{equation*}

			\underline{\textit{2) Deterministic term $\tfrac12\,\mu_\phi'(\partial_{\phi\phi'}\overline\Delta)\mu_\phi$.  }} 
			Decompose $\mu_\phi=\mu_\phi^{(1)}+\mu_\phi^{(2)}$ with $\mu_\phi^{(1)}=\overline{\mathcal H}^{-1}\mathcal S$ and $\mu_\phi^{(2)}=\overline{\mathcal H}^{-1}(\partial_{\phi\beta'}\overline{\mathcal L})\mu_\beta$.
			Under Assumption \ref{assumption:regularity}, $\|\mu_\phi^{(1)}\| = \mathcal O_P(1)$, $\|\mu_\phi^{(2)}\| = \mathcal O_P((NT)^{-1/4})$, and $\|\partial_{\phi\phi'}\overline\Delta\| = \mathcal{O}_P((NT)^{\epsilon-1/2})$. Consequently, the quadratic form involving $\mu_\phi^{(2)}$ and the cross-term are asymptotically negligible:
			\[
			\|\mu_\phi^{(2)\prime}(\partial_{\phi\phi'}\overline\Delta)\mu_\phi^{(2)}\| = o_P\big((NT)^{-1/2}\big), \quad \|\mu_\phi^{(1)\prime}(\partial_{\phi\phi'}\overline\Delta)\mu_\phi^{(2)}\| = o_P\big((NT)^{-1/2}\big).
			\]
			Therefore,
			$
			\tfrac12\,\mu_\phi'(\partial_{\phi\phi'}\overline\Delta)\mu_\phi
			= \tfrac12\,\mathcal S'\overline{\mathcal H}^{-1}(\partial_{\phi\phi'}\overline\Delta)\overline{\mathcal H}^{-1}\mathcal S
			+o_P\big((NT)^{-1/2}\big).
			$
			Taking the conditional expectation of the leading term on the RHS, and invoking the information matrix identity ($\mathbb E_\phi\!\big[\sqrt{NT}\,\mathcal S\mathcal S'\big]=\overline{\mathcal H}$), we obtain 
			\[
			\mathbb E_\phi\!\left[ \mathcal S'\overline{\mathcal H}^{-1}(\partial_{\phi\phi'}\overline\Delta)\overline{\mathcal H}^{-1}\mathcal S \right] 
			= \mathrm{tr}\!\left( \overline{\mathcal H}^{-1}(\partial_{\phi\phi'}\overline\Delta)\overline{\mathcal H}^{-1} \, \mathbb E_\phi[\mathcal S \mathcal S'] \right)
			= \tfrac{1}{2\sqrt{NT}}\;\mathrm{tr}\!\big((\partial_{\phi\phi'}\overline\Delta)\,\overline{\mathcal H}^{-1}\big).
			\]
			Since the quadratic form converges in probability to its expectation, we conclude
			\begin{equation}
				\tfrac12\,\mu_\phi'(\partial_{\phi\phi'}\overline\Delta)\mu_\phi
				= \tfrac{1}{2\sqrt{NT}}\mathrm{tr}\!\big((\partial_{\phi\phi'}\overline\Delta)\,\overline{\mathcal H}^{-1}\big) + o_P\big((NT)^{-1/2}\big).
				\label{eq:mu-phi-part}
			\end{equation}

			\underline{\textit{3) The $R_\Delta$ term.  }} 
			For the $R_\Delta$ term, by cumulant expansion, 
			$
			\frac{\langle R_\Delta(\beta,\phi) e^{R_{\mathcal L}(\beta,\phi)}\rangle_\theta}{\langle e^{R_{\mathcal L}(\beta,\phi)}\rangle_\theta}
			= \sum_{m\ge0}\frac{1}{m!}\kappa\big(R_\Delta(\beta,\phi),R_{\mathcal L}(\beta,\phi)^{\times m}\big).
			$
			Under our assumptions, Lemma \ref{lemma:remainders}(iii) ensures it is bounded by $o_P((NT)^{-1/2})$.
			
			\underline{\textit{4) Final collections.  }} 
			Combining the above results, the last three terms in \eqref{eq:APE-mean-master} are given by
			$
			\tfrac{1}{\sqrt{NT}}\;\mathrm{tr}\!\big((\partial_{\phi\phi'}\overline\Delta)\,\overline{\mathcal H}^{-1}\big)
			+o_P\big((NT)^{-1/2}\big).
			$
			By Lemma \ref{lemma:FWD1}, $\overline{\mathcal H}^{-1}$ can be precisely approximated by  the inverse of the diagonalized $\overline{\mathcal H}$. Thus the last three terms in \eqref{eq:APE-mean-master}  are given by  
			$
			2\,\overline{\mathcal B}^\Delta + o_P((NT)^{-1/2}),
			$
			as exactly stated in Part (i).

			\underline{\#\textit{Part (ii): plug–in APE at posterior means.  }}
			
			A mean-value expansion of \(\Delta(\widehat{\beta}^E,\widehat{\phi}^E)\) around the truth  \((\beta_0,\phi_0)\)  yields
			\begin{eqnarray}
				\Delta(\widehat{\beta}^E,\widehat{\phi}^E) - \Delta
				&=&
				(\widehat{\beta}^E-\beta_0)'\partial_\beta\overline\Delta
				+(\widehat{\phi}^E-\phi_0)'\partial_\phi\overline\Delta
				+\tfrac12 (\widehat{\phi}^E-\phi_0)'(\partial_{\phi\phi'}\overline\Delta)(\widehat{\phi}^E-\phi_0) \notag \\
				&&+R_{\Delta,2}(\widehat{\theta}^E), 	\label{eq:plug-in-expansion}
			\end{eqnarray}
			where 
			$
			R_{\Delta,2}(\widehat{\theta}^E) 
			= \tfrac{1}{2}(\widehat{\beta}^E-\beta_0)'(\partial_{\beta\beta'} \Delta(\breve{\beta},\phi_0))(\widehat{\beta}^E-\beta_0)
			+(\widehat{\beta}^E-\beta_0)'(\partial_{\beta\phi'} \Delta(\beta_0,\breve{\phi}))(\widehat{\phi}^E-\phi_0) 
			+\tfrac{1}{6} (\widehat{\phi}^E-\phi_0)'\left(\sum_{k=1}^{\dim\phi}  (\partial_{\phi\phi'\phi_k} \Delta(\beta_0,\breve{\phi})) (\widehat{\phi}^E-\phi_0)_{k}\right)  (\widehat{\phi}^E-\phi_0).
			$

			Let $\mu_\phi^{(1)} := \overline{\mathcal{H}}^{-1}\mathcal{S}$ denote the leading score term. We decompose the quadratic $\phi$–term in \eqref{eq:plug-in-expansion} as
			\begin{eqnarray*}
				\tfrac12(\widehat\phi^E-\phi_0)'(\partial_{\phi\phi'}\overline\Delta)(\widehat\phi^E-\phi_0)
				&=&\tfrac12\,\mu_\phi^{(1)\prime}(\partial_{\phi\phi'}\overline\Delta)\mu_\phi^{(1)}
				+\mu_\phi^{(1)\prime}(\partial_{\phi\phi'}\overline\Delta)\big(\widehat\phi^E-\phi_0-\mu_\phi^{(1)}  \big) \\
				&&+\tfrac12\big(\widehat\phi^E-\phi_0-\mu_\phi^{(1)}  \big)'(\partial_{\phi\phi'}\overline\Delta)\big(\widehat\phi^E-\phi_0-\mu_\phi^{(1)}  \big).
			\end{eqnarray*}
			From Theorem \ref{thm:asy_expansions}, the deviation from the leading score term satisfies $\| \widehat\phi^E-\phi_0-\mu_\phi^{(1)} \| = o_P((NT)^{-1/8})$. Also, under Assumption \ref{assumption:regularity}, $\|\partial_{\phi\phi'}\overline\Delta\| = \mathcal{O}_P((NT)^{\epsilon-1/2})$. 
			Consequently, the cross-term and the remainder term in the decomposition above are bounded by $o_P((NT)^{-1/8} (NT)^{\epsilon-1/2}) = o_P((NT)^{-1/2})$,  leaving only the leading quadratic form
			$$
			\tfrac12(\widehat\phi^E-\phi_0)'(\partial_{\phi\phi'}\overline\Delta)(\widehat\phi^E-\phi_0)
			=\tfrac12\,\mu_\phi^{(1)\prime}(\partial_{\phi\phi'}\overline\Delta)\mu_\phi^{(1)}
			+o_P((NT)^{-1/2}).
			$$
			This first term on the RHS is identical to the quadratic form analyzed in Part (i) (equation \ref{eq:mu-phi-part}). Thus
			$
			\tfrac12(\widehat\phi^E-\phi_0)'(\partial_{\phi\phi'}\overline\Delta)(\widehat\phi^E-\phi_0)
			=\overline{\mathcal B}^\Delta+o_P((NT)^{-1/2}).
			$
			Substituting this result back into \eqref{eq:plug-in-expansion}, we obtain the  asymptotic representation
			\[
			\Delta(\widehat\beta^E,\widehat\phi^E)-\Delta
			=\partial_{\beta'}\overline\Delta\,(\widehat\beta^E-\beta_0)
			+\partial_{\phi'}\overline\Delta\,(\widehat\phi^E-\phi_0)
			+\overline{\mathcal B}^\Delta
			+o_P\big((NT)^{-1/2}\big) + R_{\Delta,2}(\widehat{\theta}^E) .
			\]

			The bound $|R_{\Delta,2}(\widehat{\theta}^E)| = o_P((NT)^{-1/2})$ is established using arguments analogous to the proof of Theorem B.4 in \cite{fernandez2016individual}.
			For $\beta\in\mathscr B(r_\beta,\beta_0)$ and $\phi\in\mathscr B_q(r_\phi,\phi_0)$, the remainder is bounded by
			\begin{align*}
				|R_{\Delta,2}(\widehat{\theta}^E)|
				&\leq \tfrac{1}{2} \|\widehat{\beta}^E-\beta_0\|^2 \|\partial_{\beta\beta'} \Delta({\beta},\phi)\|
				+\|\widehat{\beta}^E-\beta_0\| \, \|\partial_{\beta\phi'} \Delta({\beta},\phi)\| \, \|\widehat{\phi}^E-\phi_0\| \\
				&\quad +\tfrac{1}{6} \|\widehat{\phi}^E-\phi_0\|^2 \left\| \textstyle\sum_{k=1}^{\dim\phi} (\partial_{\phi\phi'\phi_k} \Delta(\beta,{\phi})) (\widehat{\phi}^E-\phi_0)_{k}\right\|.
			\end{align*}
			From Theorem \ref{thm:asy_expansions}, the estimators satisfy $\|\widehat\beta^E-\beta_0\|= o_P((NT)^{-1/4})$, $\|\widehat\phi^E-\phi_0\| = \mathcal{O}_P(1)$, and $\|\widehat\phi^E-\phi_0\|_q = \mathcal{O}_P((NT)^{-1/4+1/(2q)})$.
			Furthermore, by Assumption \ref{assumption:regularity}(viii) and the norm embedding inequalities, we have 
			$ 	\Vert \partial_{\beta\beta}\Delta(\beta,\phi) \Vert= \mathcal{O}_P( 1 ),$ 
			$ \Vert \partial_{\phi\phi\phi}\Delta(\beta,\phi) \Vert_q= \mathcal{O}_P( (NT)^{\epsilon-1/2} )$,
			and 
			$
			\Vert \partial_{\beta\phi'}\Delta(\beta,\phi) \Vert\leq (\dim\phi)^{1/2-1/q}\Vert \partial_{\beta\phi'}\Delta(\beta,\phi) \Vert_q = \mathcal{O}_P( (NT)^{-1/4} )
			$.
			For the cubic term, exploiting the symmetry of $\partial_{\phi\phi'\phi_k} \Delta$, we bound the spectral norm of the contracted tensor by
			\begin{eqnarray*}
				\left\| \textstyle\sum_{k=1}  (\partial_{\phi\phi'\phi_k} \Delta(\beta,{\phi})) (\widehat{\phi}^E-\phi_0)_{k}\right\|
				&\leq& \|\partial_{\phi\phi\phi} \Delta(\beta,{\phi}) \|\widehat\phi^E-\phi_0\|_q \\
				&=& \mathcal{O}_P((NT)^{\epsilon-1/2-1/4+1/(2q)}) = o_P((NT)^{-1/2}).
			\end{eqnarray*}
			Substituting these rates back into the expression for $R_{\Delta,2}(\widehat{\theta}^E)$, each component is $o_P((NT)^{-1/2})$ or smaller.  This establishes the result.
		\end{proof}

		\begin{proof}[Proof of Theorem \ref{corollary:generic_prior}] 
Start from the gradient of the leading bias term in the integrated log–likelihood, namely
$\partial_\beta \mathcal D_{\mathcal L^I}$ and
$\partial_\phi \mathcal D_{\mathcal L^I}$, stated in \eqref{eq:biasbeta}–\eqref{eq:biasphi}.
			Substituting the diagonal approximation for $\overline{\mathcal H}^{-1}$ (Lemma~\ref{lemma:FWD1}),  using the sparsity of $\partial_{\phi\phi_k\phi_g}\overline{\mathcal L}$ (see Appendix \ref{sec:appA:NotationNorms}) and   invoking the information matrix identity ($\mathbb E_\phi\!\big[\sqrt{NT}\,\mathcal S\mathcal S'\big]=\overline{\mathcal H}$) , yields that 
			\begin{eqnarray*}
			\mathbb{E}_{\phi}(\partial_\beta \mathcal D_{\mathcal L^I})
				&=&\sum_{i=1}^{N}\tfrac{\sum_{t=1}^{T}\mathbb{E}_{\phi}(\partial_{\beta\pi^{2}}\ell_{it}) +\sum_{t=1}^{T}\sum_{\tau=1}^{T}\mathbb{E}_{\phi}(\partial_{\beta\pi}\ell_{i\tau}\partial_{\pi}\ell_{it})}
				{\sum_{\tau = 1}^{T}{\mathbb{E}_{\phi}( \partial_{\pi^{2}}\ell_{i\tau} )}} \\
				&& +
				\sum_{t=1}^{T}
				\tfrac{\sum_{i=1}^{N}\mathbb{E}_{\phi}(\partial_{\beta\pi^{2}}\ell_{it}) +\sum_{i=1}^{N}\sum_{j = 1}^{N}\mathbb{E}_{\phi}(\partial_{\beta\pi}\ell_{jt}\partial_{\pi}\ell_{it})}
				{\sum_{j = 1}^{N}{\mathbb{E}_{\phi}( \partial_{\pi^{2}}\ell_{jt} )}}     + o_P((NT)^{1/2}),  \\
	\mathbb{E}_{\phi}(\partial_{\alpha_i} \mathcal D_{\mathcal L^I})
				&=&\sum_{t=1}^{T}\left( \tfrac{\mathbb{E}_{\phi}(\partial_{\pi^{3}}\ell_{it}) +\sum_{\tau=1}^{T}\mathbb{E}_{\phi}(\partial_{\pi^{2}}\ell_{i\tau}\partial_{\pi}\ell_{it})}
				{\sum_{\tau = 1}^{T}{\mathbb{E}_{\phi}( \partial_{\pi^{2}}\ell_{i\tau} )}}
				+ 
				\tfrac{\mathbb{E}_{\phi}(\partial_{\pi^{3}}\ell_{it}) +\sum_{j = 1}^{N}\mathbb{E}_{\phi}(\partial_{\pi^{2}}\ell_{jt}\partial_{\pi}\ell_{it})}
				{\sum_{j = 1}^{N}{\mathbb{E}_{\phi}( \partial_{\pi^{2}}\ell_{jt} )}}  \right)   + o_P(1),  \\
	\mathbb{E}_{\phi}(\partial_{\gamma_t} \mathcal D_{\mathcal L^I})
				&=&\sum_{i=1}^{N}\left( \tfrac{\mathbb{E}_{\phi}(\partial_{\pi^{3}}\ell_{it}) +\sum_{\tau=1}^{T}\mathbb{E}_{\phi}(\partial_{\pi^{2}}\ell_{i\tau}\partial_{\pi}\ell_{it})}
				{\sum_{\tau = 1}^{T}{\mathbb{E}_{\phi}( \partial_{\pi^{2}}\ell_{i\tau} )}}
				+ 
				\tfrac{\mathbb{E}_{\phi}(\partial_{\pi^{3}}\ell_{it}) +\sum_{j = 1}^{N}\mathbb{E}_{\phi}(\partial_{\pi^{2}}\ell_{jt}\partial_{\pi}\ell_{it})}
				{\sum_{j = 1}^{N}{\mathbb{E}_{\phi}( \partial_{\pi^{2}}\ell_{jt} )}}  \right)   + o_P(1).
			\end{eqnarray*}
Applying the martingale difference restrictions
$\mathbb E_\phi(\partial_{\beta\pi} \ell_{i\tau} \partial_\pi \ell_{it})=\mathbb E_\phi(\partial_{\pi^2} \ell_{i\tau} \partial_\pi \ell_{it})=\mathbb E_\phi(\partial_{\pi} \ell_{i\tau} \partial_\pi \ell_{it})=0$ for  $t>\tau$ 
as well as cross-sectional orthogonality across $i\neq j$, we obtain
$
			\mathbb E_\phi(\partial_{\beta}\mathcal D_{\mathcal L^I})
=\sum_{i=1}^N\sum_{t=1}^T \Upsilon^\beta_{it} + o_P((NT)^{1/2}),
$
$
			\mathbb E_\phi(\partial_{\alpha_i}\mathcal D_{\mathcal L^I})
=\sum_{t=1}^T \Upsilon^\pi_{it} + o_P(1),
$ and
$
\mathbb E_\phi(\partial_{\gamma_t}\mathcal D_{\mathcal L^I})
=\sum_{i=1}^N \Upsilon^\pi_{it} + o_P(1).
$

Finally, by the WLLN for panel data, we have   $\partial_{\alpha_i}  \mathcal{D}_{\mathcal{L}^I}        = \sum_{t=1}^{T} \Upsilon_{it}^{\pi} +o_P\left(1\right)$,  
$\partial_{\gamma_t}  \mathcal{D}_{\mathcal{L}^I}    = \sum_{i=1}^{N} \Upsilon_{it}^{\pi}   +o_P\left(1\right)$, and
$\partial_{\beta}  \mathcal{D}_{\mathcal{L}^I}   = \sum_{i=1}^N \sum_{t=1}^T \Upsilon_{it}^{\beta}  +o_P((NT)^{1/2})$. 
To verify that Prior GE--1 satisfies \eqref{eqn:differentialsystem}, it suffices to differentiate its log-density at the truth, and evaluate the resulting expectations using the martingale difference property. For Prior GE--2, the same verification additionally invokes the Bartlett identity $\mathbb E_\phi(-\partial_{\pi^2}\ell_{it})=\mathbb E_\phi((\partial_{\pi}\ell_{it})^2\big)$. We omit the remaining algebraic details.
		\end{proof}

		\begin{proof}[Proof of Corollary \ref{corollary:asy_parameter}]
			The proof follows directly from the asymptotic expansions derived in Theorem \ref{thm:asy_expansions}, combined with the definition of the bias-reducing prior and the central limit theorems for the score and Hessian terms.
			
			\underline{\textit{\# Part (i).}} 
			Under a bias–reducing prior, by definition,
			$\left\| \mathcal{B}^{\beta} \right\| = o_P(1) $ and
			$\left\| \mathcal{B}^{\phi} \right\| = o_P((NT)^{-1/4})$.
			Substituting these rates into the expansion in
			Theorem \ref{thm:asy_expansions}(i) yields
			$
			\widehat{\beta}^E - \beta_0
			= \frac{1}{\sqrt{NT}}\,\overline{W}^{-1} U^{(0)}
			+ o_P \big((NT)^{-1/2}\big).
			$
			By Assumption \ref{assumption:main}, $U^{(0)}$ is a sum of independent
			contributions across $i$ and martingale difference sequences across $t$,
			and therefore by a relevant central limit theorem  \citep[see, e.g.,][]{hahn2011bias}, $U^{(0)} \xrightarrow{d} \mathcal{N}(0, \overline{W}_{\infty})$.
			Moreover, for $N,T$ large, $\overline{W}\to  \overline{W}_\infty$ with $\overline{W}_\infty$ nonsingular.
			An application of Slutsky's theorem then gives
			\[
			\sqrt{NT}(\widehat{\beta}^E - \beta_0) \xrightarrow{d} \overline{W}_{\infty}^{-1} \mathcal{N}(0, \overline{W}_{\infty}) = \mathcal{N}(0, \overline{W}_{\infty}^{-1}).
			\]

			\underline{\textit{\# Part (ii).}} 
			From Theorem \ref{thm:asy_expansions}(ii), the expansion for $\widehat{\phi}^E$ is
			\[
			\widehat{\phi}^E - \phi_0 = 
			\overline{\mathcal{H}}^{-1} \mathcal{S}  
			+ \tfrac{1}{\sqrt{NT}}\overline{\mathcal{H}}^{-1} [\partial_{\phi\beta'}\overline{\mathcal{L}}]\overline{W}^{-1} U^{(0)}
			+ \text{Bias Terms}
			+ R_\phi.
			\]
			The bias terms are of order smaller than the leading stochastic term $\overline{\mathcal{H}}^{-1} \mathcal{S}$, and hence are negligible. The remainder $R_\phi$ satisfies $\|R_\phi\| = o_P((NT)^{-1/4})$. We may write the expansion as
			\[
			(NT)^{1/4}(\widehat{\phi}^E - \phi_0) = \overline{\mathcal{H}}^{-1} [ (NT)^{1/4}\mathcal{S} ] + \mathcal{O}_P\big((NT)^{-1/4}\big) U^{(0)} + o_P(1).
			\]
			The second term on the RHS vanishes asymptotically because $U^{(0)} = \mathcal{O}_P(1)$. The behavior is dominated by the first term. 
			
			By Assumption \ref{assumption:main}, each elements of the score vector $\mathcal{S}$ are normalized sums of independent (across $i$) or weakly dependent (across $t$) variables. Therefore, a standard CLT for triangular arrays with mixing applied to the finite dimensional score subvector, i.e., for any fixed sub-vector of length $L\ge1$,  $(NT)^{1/4}\mathcal{S}_{1:L}\ \xrightarrow{d}\ \mathcal{N}\!\big(0,(\overline{\mathcal{H}}_\infty)_{1:L,1:L}\big)$, where the covariance follows from the information matrix equality.
			Since $\overline{\mathcal{H}}\to\overline{\mathcal{H}}_\infty$ and $\|\overline{\mathcal{H}}^{-1}\|=\mathcal{O}_P(1)$,
			Slutsky's theorem implies
			$$
			(NT)^{1/4}(\widehat{\phi}^E - \phi_0)_{1:L} \xrightarrow{d} \mathcal{N}\left(0, \left(\overline{\mathcal{H}}_{\infty}^{-1} \overline{\mathcal{H}}_{\infty} \overline{\mathcal{H}}_{\infty}^{-1}\right)_{1:L,1:L}\right) = \mathcal{N}\left(0, (\overline{\mathcal{H}}_{\infty}^{-1})_{1:L,1:L}\right).
			$$
		\end{proof}

		\begin{proof}[Proof of Corollary \ref{corollary:asy_APE}]
			Recall the sample average partial effects evaluated at the true parameters: $\Delta = (NT)^{-1} \sum_{i,t} \Delta_{it}(\beta_0, \phi_0)$, and let $\mathbb{E}[\Delta]$ be the population target. The estimation error can be decomposed into a \textit{parameter estimation} component and a \textit{sampling variation} component:
			$$
				\widehat{\Delta} - \mathbb{E}[\Delta] = (\widehat{\Delta} - \Delta) + (\Delta - \mathbb{E}[\Delta]).
			$$
			For the parameter estimation component, by Theorem~\ref{thm:APE_expansions}(i),
			$$
			\widehat{\Delta}^E-\Delta
			=
			(\partial_{\beta'}\overline{\Delta})(\widehat{\beta}^E-\beta_0)
			+
			(\partial_{\phi'}\overline{\Delta})(\widehat{\phi}^E-\phi_0)
			+
			2\overline{\mathcal{B}}^{\Delta}_{\infty}
			+
			o_P \big((NT)^{-1/2}\big).
			$$
			Under a bias–reducing prior, the leading biases in $\widehat{\beta}^E$ and $\widehat{\phi}^E$ are eliminated (as shown in Theorem \ref{thm:asy_expansions}). 
			Hence, the terms $(\partial_{\beta'}\overline{\Delta}) (\widehat{\beta}^E - \beta_0)$ and $(\partial_{\phi'}\overline{\Delta}) (\widehat{\phi}^E - \phi_0)$ are asymptotically centered at zero. 
			
			After subtracting the remaining bias $2\overline{\mathcal{B}}^{\Delta}_{\infty}$ and rescaling by $r_{NT}$, as shown in Theorem 4.2 of \cite{fernandez2016individual}, for  $r_{NT}\to\infty$,\footnote{The convergence rate $r_{NT}$ is determined by the slower of the parameter estimation and  sampling variation components, which, under Assumption \ref{assumption:ape}(iv), is governed by the sampling properties of the unobserved effects and covariates (see Remark 4 in \cite{fernandez2016individual}).}
			the sum of these variations converges to a normal distribution:
			\[
			r_{NT} \left( \left[ (\partial_{\beta'}\overline{\Delta}) (\widehat{\beta}^E - \beta_0) + (\partial_{\phi'}\overline{\Delta}) (\widehat{\phi}^E - \phi_0) \right] + (\Delta - \mathbb{E}[\Delta]) \right)
			\xrightarrow{d} \mathcal{N}(0, \overline{V}_{\infty}^{\delta}),
			\]
			where $\overline{V}_{\infty}^{\delta}$ is the asymptotic variance that accounts for both the sampling variation of the data (via $\Delta$) and the asymptotic variance of the parameter estimates, 
			
			Thus, we conclude
			$
			r_{NT}\left(\widehat{\Delta}^E - \mathbb{E}(\Delta) - 2\overline{\mathcal{B}}^{\Delta}_{\infty} \right) \xrightarrow{d} \mathcal{N}(0, \overline{V}_{\infty}^{\delta}).
			$
			The proof for the posterior plug–in APE $\Delta(\widehat{\beta}^E,\widehat{\phi}^E)$ is identical.
		\end{proof}

		\begin{lemma}\label{lemma:remainders}
			Suppose Assumption~\ref{assumption:regularity} holds. 
			Let $R_{\mathcal L}(\beta,\phi)$ and $R_{\Delta}(\beta,\phi)$ be defined by \eqref{eqn:Rtheta2} and \eqref{eqn:RDelta_decompose} in Appendix~\ref{app:remainders}, respectively.
			Then, uniformly over $\beta\in\mathscr B(r_\beta,\beta_0)$ and $\phi\in\mathscr B_q(r_\phi,\phi_0)$, the following bounds hold:
			\begin{enumerate}[label=(\roman*)]
				\item For any $m\ge 2$,
				$
				\big\|\kappa\big(\mathcal U_\beta, R_{\mathcal L}^{\times m}(\beta,\phi)\big)\big\| = o_P\big((NT)^{-1/2}\big)
				\text{, and }
				\big\|\kappa\big(\mathcal U_\phi, R_{\mathcal L}^{\times m}(\beta,\phi)\big)\big\| = o_P\big((NT)^{-1/4}\big).
				$
				\item For any $m\ge 1$,
				$
				\Big|\kappa\Big(\mathcal U_\phi'(\partial_{\phi\phi'}\overline\Delta)\mathcal U_\phi,\,
				R_{\mathcal L}^{\times m}(\beta,\phi)\Big)\Big|
				= o_P\big((NT)^{-1/2}\big).
				$
				\item For any $m\ge 0$,
				$
				\Big|\kappa\big(R_\Delta(\beta,\phi),\, R_{\mathcal L}(\beta,\phi)^{\times m}\big)\Big|
				= o_P\big((NT)^{-1/2}\big).
				$
			\end{enumerate}
		\end{lemma}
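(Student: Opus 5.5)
The plan is to treat $R_{\mathcal L}(\beta,\phi)$ as a polynomial in the Gaussian vector $\mathcal U_\theta\sim\mathcal N(0,\Sigma_{\theta\theta})$, plus a genuinely non-polynomial tail. I will bound joint cumulants through the diagrammatic (Leonov--Shiryaev) formula.

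\textbf{Step 1 (localization).} On the region $\mathscr B(r_\beta,\beta_0)\times\mathscr B_q(r_\phi,\phi_0)$, the mean-value remainders in \eqref{eqn:Rtheta2} and \eqref{eqn:RDelta_decompose} are evaluated at intermediate points. I will replace them by their values at $(\beta_0,\phi_0)$ plus a fourth-order Taylor term whose coefficient tensors are controlled uniformly by Assumption~\ref{assumption:regularity}(vi),(viii). The mass outside this region I will handle by strict concavity (Assumption~\ref{assumption:main}(v)). Under $\mathcal N(0,\Sigma_{\theta\theta})$ with $\|\Sigma_{\phi\phi}\|=\mathcal O_P((NT)^{-1/2})$ and $\|\Sigma_{\beta\beta}\|=\mathcal O_P((NT)^{-1})$, the $\ell_q$-balls carry all but an exponentially small fraction of the mass. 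Its contribution to any cumulant with polynomially growing integrand is therefore $o_P((NT)^{-1/2})$.

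After this step, $R_{\mathcal L}=\sum_{j\le 4}R^{(j)}$ with $R^{(j)}$ homogeneous of degree $j$ in $(\mathcal U_\beta,\mathcal U_\phi)$. The coefficients are built from $\widetilde{\mathcal H}$, $\partial_{\beta\phi'}\widetilde{\mathcal L}$, $\partial\ln p$, and $\sqrt{NT}$ times third and fourth derivatives of $\overline{\mathcal L}$, contracted with $\mu_\beta,\mu_\phi$. The constant part $R^{(0)}$ drops out of every cumulant of order $\ge 2$.

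\textbf{Step 2 (diagram expansion).} By multilinearity, $\kappa(\mathcal U_\bullet,R_{\mathcal L}^{\times m})$ is a finite sum, with at most $5^m$ terms, of joint cumulants of Gaussian monomials. Each such cumulant equals a sum over connected pairings (diagrams), where every edge carries a block of $\Sigma_{\theta\theta}$. I will bound each diagram by assigning powers of $NT$:
\begin{itemize}
\item a $\Sigma_{\phi\phi}^{(1)}$ edge gives $(NT)^{-1/2}$, a $\Sigma_{\beta\phi}$ edge gives $(NT)^{-3/4}$, and a $\Sigma_{\beta\beta}$ edge gives $(NT)^{-1}$;
\item each vertex contributes the norm of its coefficient tensor, from Assumption~\ref{assumption:regularity}(iv)--(vi),(ix).
\end{itemize}
Connectedness with $m+1$ vertices forces at least $m$ edges. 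The external leg $\mathcal U_\bullet$ contributes an extra $\Sigma_{\bullet\cdot}$ factor, which is why the $\beta$ bound is $(NT)^{-1/2}$ and the $\phi$ bound is $(NT)^{-1/4}$. Lemma~\ref{lemma:pre2}(ii) already covers the linear-plus-quadratic case; I will extend it to cubic and quartic vertices.

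\textbf{Step 3 (main obstacle: self-loops on high-order vertices).} The hard diagrams are those with self-loops or closed cycles on the third- and fourth-derivative vertices, for example $\sum_{k,g}\partial_{\phi\phi_k\phi_g}\mathcal L\,(\Sigma_{\phi\phi})_{kg}$. A naive bound $\|\partial^3\mathcal L\|\,\|\Sigma\|\dim\phi$ loses a factor $(NT)^{1/2}$ and does not close. Here I will not bound tensor and covariance separately. Instead I will use the sparsity of Appendix~\ref{sec:appA:NotationNorms}: only $\alpha_i$/$\gamma_t$ index patterns are nonzero, so each mode-1 slice has $\mathcal O(N+T)$ entries. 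I will also use the contraction bounds of Lemma~\ref{lemma:AfterRegularity}(iv), which give trace contractions of order $o_P((NT)^{-1/8})$ rather than $\mathcal O(1)$. For longer cycles I would iterate the same slice-wise argument with Hölder and the embedding $\|x\|\le(\dim\phi)^{1/2-1/q}\|x\|_q$, taking $q>4$. This is needed so that every additional vertex costs at least an $o_P(1)$ factor. Summing over diagrams then also requires showing that the number of diagrams at order $m$ grows at most like $C^m m!$, so the cumulant series converges. I expect this step to be the crux.

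\textbf{Step 4 (parts (ii) and (iii)).} For (ii) I will repeat the diagram count with the external vertex $\mathcal U_\phi'(\partial_{\phi\phi'}\overline\Delta)\mathcal U_\phi$. Since $\|\partial_{\phi\phi'}\overline\Delta\|=\mathcal O_P((NT)^{\epsilon-1/2})$ and this vertex attaches through two $\Sigma_{\phi\phi}$ edges, each such diagram gains $(NT)^{-1/2}$ times an $o_P(1)$ factor coming from the $R_{\mathcal L}$ vertices of Step 3. For (iii), $R_\Delta$ collects cubic and higher terms with coefficients from Assumption~\ref{assumption:regularity}(vii),(viii), of size $(NT)^{\epsilon-1/2}$. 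For $m=0$ its Gaussian mean is bounded by Wick's formula together with the $\ell_q$ contraction of $\partial_{\phi\phi\phi}\Delta$. For $m\ge1$ the argument of Step 3 applies unchanged.
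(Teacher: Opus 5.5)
Your architecture is in line with the paper's approach: multilinearity over the degree pieces $R^{(0)},\dots,R^{(4)}$, Wick contractions as in the proof of Theorem \ref{thm:asy_expansions}, Gaussian cumulant bounds of the type in Lemma \ref{lemma:pre2}, and sparse contraction bounds as in Lemma \ref{lemma:AfterRegularity}(iv). The problem is your closing criterion. ``Every additional vertex costs an $o_P(1)$ factor'' cannot produce the stated rates under Assumption \ref{assumption:regularity}.

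The reason is that the first-order piece $\kappa(\mathcal U_\phi,R^{(1)})=\Sigma_{\phi\phi}b_\phi+\Sigma_{\phi\beta}b_\beta$ is only $o_P((NT)^{-1/8})$. Indeed $b_\phi$ contains $-\sqrt{NT}\,\widetilde{\mathcal H}\mu_\phi$, and the assumption gives only $\|\widetilde{\mathcal H}\|=o_P((NT)^{-1/8})$; this is why $m=1$ is handled explicitly in Theorem \ref{thm:asy_expansions}. So for $m\ge 2$ the extra vertices must supply a factor of order $(NT)^{-1/8}$, and the margins are exact.

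\begin{itemize}
\item Take $A=-\tfrac{\sqrt{NT}}{2}\widetilde{\mathcal H}$. Then $\kappa(\mathcal U_\phi,\,b_\phi'\mathcal U_\phi,\,\mathcal U_\phi'A\mathcal U_\phi)=2\Sigma_{\phi\phi}A\Sigma_{\phi\phi}b_\phi$ is bounded by $NT\,\|\Sigma_{\phi\phi}\|^2\|\widetilde{\mathcal H}\|^2\|\mu_\phi\|=o_P((NT)^{-1/4})$. This works only because the $(NT)^{-1/8}$ is used twice.
\item The same diagram with $A,b_\phi$ built from $\sqrt{NT}\,\partial_{\phi\phi\phi}\mathcal L[\cdot,\cdot,\mu_\phi]$ is $\mathcal O_P((NT)^{-1/2+1/q+2\epsilon})$. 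This is $o_P((NT)^{-1/4})$ precisely because $\epsilon<1/8-1/(2q)$.
\item In (ii) the external vertex closes a loop. With $D=\partial_{\phi\phi'}\overline\Delta$, $\kappa(\mathcal U_\phi'D\mathcal U_\phi,\mathcal U_\phi'A\mathcal U_\phi)=2\,\mathrm{tr}(D\Sigma_{\phi\phi}A\Sigma_{\phi\phi})$ carries a trace factor up to $\dim\phi\asymp(NT)^{1/2}$. The bound $(NT)^{1/2}(NT)^{\epsilon-1/2}(NT)^{-1}o_P((NT)^{3/8})=o_P((NT)^{\epsilon-5/8})$ closes only because $\epsilon<1/8$. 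Your ``$(NT)^{-1/2}$ times an $o_P(1)$ factor'' misses both the trace and the $(NT)^{\epsilon}$.
\end{itemize}

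The proof therefore has to track exponents diagram by diagram against the specific rates in Assumption \ref{assumption:regularity}(iv)--(vi),(ix). Some needed contractions are not listed in Lemma \ref{lemma:AfterRegularity}(iv), e.g.\ $\partial_{\phi\phi\phi}\mathcal L$ contracted with $\Sigma_{\phi\phi}A\Sigma_{\phi\phi}$, or a self-loop on the quartic vertex. For those you must go back to the sparsity pattern and the moment bounds.

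Three further problems remain.

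(a) With only four derivatives, the quartic coefficients (and the cubic ones of the form $\partial^4\mathcal L[\cdot,\cdot,\cdot,\mu]$) sit at $\mathcal U_\theta$-dependent intermediate points. So Step 1 cannot make the $R^{(j)}$ genuine homogeneous polynomials, and the diagram formula does not apply literally. The clause ``uniformly over $\mathscr B(r_\beta,\beta_0)\times\mathscr B_q(r_\phi,\phi_0)$'' in the statement signals the intended route: freeze these tensors and use the uniform bounds of Assumption \ref{assumption:regularity}(vi),(viii),(ix). Strict concavity is beside the point, since it controls posterior mass, not moments of $R_{\mathcal L}$ under the reference law $\mathcal N(0,\Sigma_{\theta\theta})$ in which these cumulants are taken.

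(b) Drop the claim that connected diagrams grow like $C^m m!$ so that the cumulant series converges. With cubic or quartic vertices the count grows like $(m!)^{3/2}$ or $(m!)^2$ up to geometric factors. The claim is also unnecessary, because the lemma is stated for each fixed $m$, which is a finite sum of diagrams.

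(c) In (iii), $R_\Delta$ is not purely cubic. It is everything in $\Delta(\beta,\phi)$ beyond the displayed terms in the proof of Theorem \ref{thm:APE_expansions}. It therefore also contains $\mu_\phi'(\partial_{\phi\phi'}\overline\Delta)\mathcal U_\phi$, the $\beta\beta$ and $\beta\phi$ second-order terms, and the $\partial\widetilde\Delta$ terms. Most of these are routine. However, the $m=0$ contribution $(\partial_{\phi'}\widetilde\Delta)\mu_\phi$ is only $\mathcal O_P((NT)^{-1/2})$ under Assumption \ref{assumption:regularity}(vii), and neither Wick's formula nor the $\partial_{\phi\phi\phi}\Delta$ contraction controls it. With strictly exogenous regressors $\widetilde\Delta=0$ and the term vanishes. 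With predetermined regressors your Step 4 needs a separate argument or an additional condition for this term.
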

		\noindent The proof of Lemma~\ref{lemma:remainders} is given in Supplemental Appendix~\ref{sec:app_intermedLemma}.

		\begin{lemma}[Lemma D.1 of \cite{fernandez2016individual}]\label{lemma:FWD1}
			Suppose Assumption \ref{assumption:regularity} holds. Then,
			$
			\|\overline{\mathcal H}^{-1}- \sqrt{NT}\mathrm{diag}\!\big( \sum_t\mathbb E_\phi(-\partial_{\pi^2}\ell_{it}),\ \sum_i\mathbb E_\phi(-\partial_{\pi^2}\ell_{it})\big)^{-1}\|_{\max}
			=\mathcal O_P((NT)^{-1/2}).
			$
		\end{lemma}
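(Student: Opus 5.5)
The plan is a perturbation argument around the diagonal part of the expected Hessian. It exploits the explicit additive structure of $\overline{\mathcal H}$ and the uniform curvature bounds in Assumption~\ref{assumption:main}(v).

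\emph{Step 1 (decomposition).} Write $B_{it}:=\mathbb E_\phi(-\partial_{\pi^2}\ell_{it})\in[c_{\min},c_{\max}]$. Differentiating $\mathcal L$ twice in $\phi$ gives
\[
\sqrt{NT}\,\overline{\mathcal H}=D+E,\qquad D:=\mathrm{diag}\Big(\textstyle\sum_t B_{it},\ \sum_i B_{it}\Big),\qquad E:=\begin{pmatrix}0& B\\ B'&0\end{pmatrix}+b\,vv',
\]
where $B=(B_{it})$ is $N\times T$. The diagonal entries of $D$ are of exact order $T$ (for the $\alpha$-block) and $N$ (for the $\gamma$-block), bounded above and below by $c_{\min},c_{\max}$ times these counts. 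The claim then reads
\[
\big\|(D+E)^{-1}-D^{-1}\big\|_{\max}=\mathcal O_P\big((NT)^{-1}\big).
\]
Multiplying by $\sqrt{NT}$ gives the stated bound.

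\emph{Step 2 (eigenvalue lower bound).} I show $\lambda_{\min}(D+E)\ge c\,\min(N,T)$, equivalently $\|\overline{\mathcal H}^{-1}\|=\mathcal O_P(1)$ under Assumption~\ref{assumption:main}(i). For $x=(a',g')'$,
\[
x'(D+E)x=\sum_{i,t}B_{it}(a_i+g_t)^2+b\,(v'x)^2\ \ge\ \min(c_{\min},b)\Big[\sum_{i,t}(a_i+g_t)^2+(v'x)^2\Big].
\]
The bracket is the quadratic form of the constant-weight matrix $M=\begin{pmatrix}T I_N&\iota_N\iota_T'\\ \iota_T\iota_N'&N I_T\end{pmatrix}+vv'$. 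Its spectrum can be computed explicitly: the eigenvalues are $T$ and $N$ on the components orthogonal to constants, $N+T$ on the direction $(\iota_N',\iota_T')'$, and a strictly positive value of order $\min(N,T)$ on the normalization direction $v$, which the $vv'$ term lifts. Hence $\lambda_{\min}(M)\gtrsim\min(N,T)$. I expect this step to be the main obstacle, because the $b\,vv'$ normalization must exactly cover the null direction of the unpenalized Hessian. I would handle it by working in the orthogonal basis $\{v,(\iota_N',\iota_T')',\ \text{contrasts}\}$ and checking the $2\times2$ block spanned by the two constant directions directly.

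\emph{Step 3 (second-order resolvent identity and entrywise bounds).} Use the exact identity
\[
(D+E)^{-1}=D^{-1}-D^{-1}ED^{-1}+D^{-1}E\,(D+E)^{-1}E\,D^{-1}.
\]
The first correction is bounded entrywise. Each entry of $E$ is $\mathcal O(1)$ and each diagonal entry of $D^{-1}$ is $\mathcal O(1/T)$ or $\mathcal O(1/N)$, so $(D^{-1}ED^{-1})_{kl}=\mathcal O(1/(NT))$. Note that the within-block entries come only from $b\,vv'$ and are $\mathcal O(1/T^2)$ or $\mathcal O(1/N^2)$, which is $\mathcal O(1/(NT))$ under $N/T\to c^2$.

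For the third term, apply Cauchy--Schwarz:
\[
|e_k'D^{-1}E(D+E)^{-1}ED^{-1}e_l|\le\|ED^{-1}e_k\|\,\|(D+E)^{-1}\|\,\|ED^{-1}e_l\|.
\]
A column of $E$ has $\mathcal O(N+T)$ entries that are all $\mathcal O(1)$, so $\|Ee_k\|=\mathcal O(\sqrt{N+T})$ and therefore $\|ED^{-1}e_k\|=\mathcal O(\sqrt{N+T}/\min(N,T))=\mathcal O(N^{-1/2})$. Combining this with Step 2 bounds the entry by
\[
\mathcal O(N^{-1})\cdot\mathcal O\big(\min(N,T)^{-1}\big)=\mathcal O\big((NT)^{-1}\big).
\]
These bounds are uniform over $k,l$ because the constants come only from $c_{\min},c_{\max},b$ and $c$. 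This yields the max-norm bound, and multiplying by $\sqrt{NT}$ gives $\mathcal O_P((NT)^{-1/2})$.
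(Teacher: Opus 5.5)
Your proof is correct, and it is a genuinely different route from the paper, which gives no argument of its own and simply imports Lemma D.1 of \cite{fernandez2016individual}. The proof there proceeds differently. It first notes that changing $b>0$ only adds a multiple of $\frac{\sqrt{NT}}{(N+T)^2}vv'$ to $\overline{\mathcal H}^{-1}$, a term of max-norm $\mathcal O((NT)^{-1/2})$, so a conveniently small $b$ may be used. It then bounds a Neumann series for the partitioned inverse in the $\ell_\infty$ operator norm. Your route replaces all of this with one spectral fact, $\lambda_{\min}(D+E)\ge\min(c_{\min},b)\min(N,T)$, followed by the second-order resolvent identity and Cauchy--Schwarz. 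It is shorter and handles every $b>0$ at once. It also makes explicit that the only inputs are $c_{\min}\le B_{it}\le c_{\max}$, $b>0$ and $N\asymp T$. Although your control of $(D+E)^{-1}$ is only in $\ell_2$, the max-norm conclusion recovers the $\ell_\infty$ information as a by-product. Each row of $\overline{\mathcal H}^{-1}-\sqrt{NT}D^{-1}$ has $N+T$ entries of size $\mathcal O((NT)^{-1/2})$, so the maximum absolute row sum of $\overline{\mathcal H}^{-1}$ is $\mathcal O(1)$.

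Two minor points. First, in Step 2 the vectors $v$ and $(\iota_N',\iota_T')'$ are orthogonal only when $N=T$, since their inner product is $N-T$, so the basis you propose is not orthogonal. This is harmless, and the step is not really an obstacle. The unpenalized comparison matrix has eigenvalue $0$ on $v$ and $N+T$ on $(T\iota_N',N\iota_T')'$. The term $vv'$ adds $v'v=N+T$ on the $v$ direction only. Hence $M$ restricted to $\mathrm{span}\{(\iota_N',0')',(0',\iota_T')'\}$ is exactly $(N+T)I_2$. The spectrum of $M$ is therefore $\{T,N,N+T\}$, the eigenvalue on $v$ is $N+T$ rather than merely of order $\min(N,T)$, and $\lambda_{\min}(M)=\min(N,T)$.

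Second, the entrywise curvature bounds and $N/T\to c^2$ that you use come from Assumption~\ref{assumption:main}(i),(v). They are not part of the high-level Assumption~\ref{assumption:regularity} under which the lemma is stated, so your proof, like the original, effectively runs under Assumption~\ref{assumption:main}. Because the $B_{it}$ are bounded almost surely, all your bounds are deterministic conditional on $\phi$, which gives the $\mathcal O_P$ statement trivially.
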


	\end{appendix}

	
	\bibliographystyle{qe} 
	\bibliography{ref_main}  

\end{document}